\documentclass[11pt]{article}

\usepackage[T1]{fontenc}
\usepackage[utf8]{inputenc}
\usepackage{lmodern}
\usepackage[margin=1in]{geometry}
\usepackage{amsthm,amsmath,amsfonts,amssymb}
\usepackage[authoryear,round]{natbib}
\usepackage{bm}
\usepackage{xcolor}
\usepackage{multirow}
\usepackage{enumitem}
\usepackage{float}
\usepackage{graphicx}
\usepackage{booktabs}
\usepackage{chngcntr}
\usepackage[ruled,linesnumbered,lined]{algorithm2e}
\usepackage[colorlinks=true,citecolor=blue,linkcolor=blue,urlcolor=blue]{hyperref}

\graphicspath{{./art/}}
\DontPrintSemicolon
\SetKwInput{KwIn}{Input}
\SetKwInput{KwOut}{Output}
\SetAlgoCaptionSeparator{:}

\theoremstyle{plain}
\newtheorem{theorem}{Theorem}[section]
\newtheorem{proposition}[theorem]{Proposition}
\newtheorem{lemma}[theorem]{Lemma}
\newtheorem{corollary}[theorem]{Corollary}

\theoremstyle{definition}
\newtheorem{assumption}[theorem]{Assumption}
\newtheorem{definition}[theorem]{Definition}
\newtheorem{example}[theorem]{Example}
\newtheorem{remark}[theorem]{Remark}

\newcommand{\tbl}[1]{\caption{#1}}
\newenvironment{tabnote}
  {\par\smallskip\begin{minipage}{\linewidth}\footnotesize\textit{Note: }\ignorespaces}
  {\end{minipage}\par}
\newcommand{\suppref}[1]{\ref{#1}}

\title{Design-based Estimation and Inference on Quantile Exposure Effect under General Interference}
\makeatletter
\renewcommand{\@fnsymbol}[1]{%
  \ifcase#1\or\textdagger\or\textdaggerdbl\or\textsection\or\textasteriskcentered\else\@ctrerr\fi}
\makeatother
\author{%
  Haoxiang Wang\thanks{School of Mathematical Sciences, Peking University. Email: \texttt{whxwhx@pku.edu.cn}.}
  \and
  Lan Wang\thanks{Department of Management Science, University of Miami. Email: \texttt{lanwang@mbs.miami.edu}.}\textsuperscript{,,}\footnotemark[4]
  \and
  Xiao-Hua Zhou\thanks{Beijing International Center for Mathematical Research and Department of Biostatistics, Peking University. Email: \texttt{azhou@math.pku.edu.cn}.}\textsuperscript{,,}\footnotemark[4]
}
\date{\today}

\begin{document}
\maketitle
\begingroup
\renewcommand{\thefootnote}{\fnsymbol{footnote}}
\footnotetext[4]{Corresponding author.}
\endgroup
\begin{abstract}
Many applications in public health, environmental science, and economics feature spillovers across connected units, violating the Stable Unit Treatment Value Assumption (SUTVA) underlying classical quantile treatment effect  methods. We develop a general framework for defining, estimating, and conducting inference for quantile exposure effects (QEEs) under network interference, encompassing quantile direct and spillover effects as leading cases.
Studying QEEs under interference faces three substantive challenges. First, because a single exposure level represents many neighborhood treatment configurations, the causal estimand must aggregate over these configurations in an interpretable manner. Second, quantile estimands are intrinsically non-smooth, placing them outside much of the existing network theory involving only Lipschitz or differentiable functionals. Third, in design-based, finite-population network settings, conventional smoothness assumptions on outcome densities cannot be imposed directly. Using conditional neighborhood dependence, we establish asymptotic normality under explicit network degree conditions and derive a locally uniform convergence rate for kernel density estimation. We also characterize the variance estimation bias arising from heterogeneous unit-specific score means and construct asymptotically conservative confidence intervals.
Extensive simulations and an application to an educational intervention in school friendship networks demonstrate that the proposed framework reveals heterogeneous exposure effects, including tail-specific impacts, that are missed by analyses based solely on average treatment effects.
\end{abstract}

\noindent\textbf{Keywords:} Causal inference; Interference; Network; Quantile exposure effect.
\section{Introduction}

In many applications, interventions propagate through social, spatial, or economic networks, so a unit's outcome may depend on treatments assigned to connected units. In such settings, average effects can conceal changes in dispersion, downside risk, or the tails of the outcome distribution. The quantile treatment effect (QTE), which compares corresponding quantiles of potential-outcome distributions, is designed to reveal these distributional changes. For example, an intervention may substantially improve outcomes among the most vulnerable units while having little effect near the center of the distribution, or it may reduce extreme outcomes without changing the mean. Such distinctions are important in public health, environmental science, and economics~\citep{sun2021causal, frolich2013unconditional,ullah2023impact}. Classical QTE methods, however, rely on the Stable Unit Treatment Value Assumption (SUTVA) and therefore exclude network spillovers. A general theory of QTEs under interference is consequently needed.

 Developing a valid definition and inferential theory for QTEs under interference in general networks raises three distinct challenges. First, an exposure mapping reduces the combinatorial treatment space by grouping neighborhood assignments into exposure levels, but an exposure level generally still represents many treatment configurations. These configurations can induce different potential outcomes and occur with different randomization probabilities across units. Thus, before taking a quantile, one must construct a well-defined and interpretable exposure-specific distribution that aggregates over both configurations and units. Unlike a mean, the quantile of the resulting mixture cannot be recovered by averaging unit-level or configuration-specific quantiles, so the distribution itself must be the primitive estimand. Second, quantiles are non-smooth functionals as their estimating equations contain indicators and their natural objective is the non-differentiable check loss. Limit theories for network estimators that require the target statistic to be Lipschitz or differentiable therefore do not transfer directly. Third, in a design-based finite-population analysis, the exposure-specific outcome distribution is discrete at every sample size, whereas standard QTE arguments rely on smooth population densities. A rigorous theory must connect this sequence of discrete, randomization-induced distributions to suitable smooth limits while retaining network dependence.

Motivated by the above challenges, this paper develops a general framework to define, estimate and make inference on quantile exposure effects (QEEs) under interference. Our framework begins by defining a finite-population mixture distribution induced by the treatment randomization for each exposure level. Contrasting a given attribute across these exposure-specific distributions yields a general class of distributional causal effects, with quantile exposure effects as the principal case. This construction makes explicit both the population over which each quantile is taken and the weights assigned to treatment configurations corresponding to the same exposure. Building on this estimand, we develop estimation and inference for quantile exposure effects under general, non-clustered networks. The proposed estimator minimizes an exposure-weighted check-loss criterion whose population target is precisely the quantile of the corresponding mixture distribution. Our asymptotic analysis adopts a design-based finite-population perspective~\citep{aronow2017estimating,leung2022causal,gao2025causal, shi2026berry} as at each sample size, the potential outcomes and observed network are treated as fixed, all randomness arises from the treatment assignment, and asymptotics are taken along a sequence of expanding finite populations and networks. 
We establish asymptotic normality for non-smooth quantile estimation under network interference,  through the conditional neighborhood dependence framework of \citet{lee2019stable} and, alternatively, through Janson's dependency-graph
central limit theorem \citep{janson1988normal}.
The two approaches yield complementary sufficient network conditions. The Lee--Song conditions incorporates both maximal and average two-step degrees, allowing greater
maximal connectivity when highly connected units are concentrated in a small part of the network; while Janson's formulation can permit faster maximal-degree growth when
average and maximal degrees are comparable.
Examples involving localized hubs and complete communities illustrate how network heterogeneity affects these conditions. Both approaches retain the same quantile
linearization argument, which connects the discrete finite-population target to a smooth local approximation without imposing a parametric model for network formation.

Valid inference on QEE requires additional control of density and variance estimation beyond the conditions
used to establish asymptotic normality.
We develop feasible inference within the Lee--Song framework, under additional conditions controlling density and variance
estimation. Two challenges arise. First, the influence scores involve densities evaluated at the target quantiles. We establish a locally uniform convergence rate for kernel
density estimation under network interference, explicitly quantifying the effect of network degree and justifying evaluation at estimated quantiles. This result may also
be of independent interest. Second, variance estimation must account for dependence among neighboring scores and heterogeneity in their expectations. A natural plug-in
covariance estimator centers the scores by their sample average, whereas the target variance depends on unobserved
unit-specific score means. 
We characterize the resulting leading centering term and construct a conservative positive-part spectral correction that truncates the negative eigenvalues of the centered two-step dependence matrix.
The corrected estimator is nonnegative and, under the density- and variance-estimation conditions together with an explicit spectral-rate condition, yields asymptotically conservative confidence intervals.

The main contributions of this paper are as follows:
\begin{itemize}[itemsep=-10pt, topsep=5pt, leftmargin=9pt]
    \item 
    We adopt a design-based finite-population perspective and define 
    a general class of distributional causal effects through mixture distributions induced by treatment randomization.  
    By accounting for the multiple treatment configurations represented by a single exposure, we provide a general framework for comparing exposures that differ in either a unit’s own treatment or its neighbors’ treatment. It encompasses quantile direct effects and quantile spillover effects as leading cases of quantile exposure effects, while recovering average direct, spillover, and more general average exposure effects as special cases.\\
    \item 
    We develop estimator and asymptotic theory for quantile exposure effects, addressing both the non-smoothness of quantile estimation and the discreteness of the target distributions. We establish
asymptotic normality under complementary Lee--Song and Janson conditions and characterize how maximal connectivity and degree heterogeneity affect their applicability.
We also establish a locally uniform convergence rate for kernel density estimation that explicitly captures the effect of network degree.\\
    \item We develop feasible variance estimation under network dependence and characterize the leading centering term arising from heterogeneity in unit-specific score means. To address its possibly negative contribution, we propose a positive-part spectral correction and establish asymptotically conservative coverage under an explicit condition on the negative spectral component of the centered two-step dependence matrix.
\end{itemize}

We illustrate our approach using the anti-conflict intervention data of \citet{paluck2016changing}, a network experiment conducted in 56 schools and subsequently analyzed for average effects under interference~\citep{leung2022causal,gao2025causal}. To capture broader changes in student behavior, we construct a social-norm summary score that combines wristband wearing, which rewards observed anti-conflict behavior, with students’ normalized counts of reported anti-norm behaviors. Using the observed friendship network, we estimate overall, direct, and spillover quantile exposure effects. The estimated effects are close to zero in the lower and upper tails but substantially positive around the middle of the outcome distribution. This pattern appears for both direct and spillover effects, although the direct effect remains positive over a wider range of quantiles. These findings refine the positive average effects documented in previous studies: the intervention does not shift the social-norm distribution uniformly, but instead produces improvements concentrated near its middle. The application therefore demonstrates how quantile analysis can uncover distributional heterogeneity and distinguish the effects of students’ own treatment from peer-mediated spillovers that are obscured by average effects.

The remaining of the paper is organized as follows.
Section~\ref{sec:review} reviews the related literature. Section~\ref{sec:def_and_framework} defines the 
quantile exposure effects for networked data, while Section~\ref{sec:estimation} introduces the estimator.
Section~\ref{sec:asp_and_thm} develops the theoretical properties. Sections~\ref{sec:simu} and~\ref{sec:exp} report the simulations and empirical application, respectively. The technical proofs and additional numerical results are provided in the online supplement.

\section{Related Work}\label{sec:review}
Our work lies at the intersection of three strands of literature: quantile treatment effects, exposure-based causal inference under interference, and design-based asymptotic theory for network experiments.
Existing methods do not jointly address quantiles of exposure-specific finite-population distributions, which require accounting for multiple configurations within an exposure, network dependence, and non-smooth estimation. We review these three strands in turn.

\medskip
\noindent\textbf{Quantile treatment effects.} In settings without interference, QTEs compare quantiles of potential-outcome distributions and have been used to study distributional heterogeneity, downside risk, and tail behavior in economics, environmental science, and public health~\citep{firpo2007efficient, frolich2013unconditional,ullah2023impact,sun2021causal}.See also related work on robust procedures~\citep{zhang2012causal,xie2020multiply}, semiparametric propensity-score methods~\citep{zhan2024estimation}, and debiased machine-learning approaches in high-dimensional settings~\citep{kallus2024localized}. These methods generally rely on SUTVA~\citep{imbens2015causal} and do not address the exposure-induced mixtures or network dependence considered here. \citet{cheng2026nonparametric} studies quantile treatment effect under partial interference in clustered networks. However, standard QTE theory is commonly formulated for smooth population distributions, whereas our design-based target is a discrete finite-population distribution that changes with the network experiment. Our contribution is to resolve these definitional, non-smoothness, and finite-population issues jointly under general interference.

\medskip

\textbf{Causal inference under interference.} Since the seminal work of \citet{sobel2006randomized} and \citet{hudgens2008toward}, causal inference under interference has developed rapidly; see \citet{halloran2016dependent,bhadra2025causal} for reviews. Applications include vaccination, social interactions, and policy diffusion in networks~\citep{ogburn2014vaccines,cai2015social,paluck2016changing,viviano2025policy}. Much of this literature adopts a design-based perspective: potential outcomes and the observed network are treated as fixed, and randomness arises from treatment assignment~\citep{aronow2017estimating,leung2022causal,gao2025causal, shi2026berry}. The dominant targets are averages of unit-level direct, indirect, or exposure effects. In contrast, our target is an exposure-specific outcome distribution and its quantiles. This distinction changes both the definition of the estimand and the tools required for inference.

Under interference, a binary assignment vector can generate exponentially many neighborhood treatment configurations. Exposure mappings reduce this complexity by summarizing a unit's own treatment and relevant features of its neighbors' assignments~\citep{baird2018optimal,forastiere2021identification}. Common examples include the proportion of treated neighbors~\citep{li2022random} and an indicator that at least one neighbor is treated~\citep{leung2022causal}. The literature has also studied misspecified exposure mappings~\citep{savje2023causal} and methods for learning mappings from network data~\citep{ma2021causal,huang2023modeling}. Existing exposure-based estimands are predominantly means. Our framework instead uses the randomization distribution to aggregate all assignment configurations associated with an exposure into a finite-population outcome distribution. Applying different attributes to this distribution produces a class of distributional exposure effects. Because quantiles do not commute with aggregation, a quantile exposure effect is not an average of unit-level or configuration-specific quantile effects. Average exposure effects are recovered as a benchmark, but quantile exposure effects require their own definition and asymptotic analysis.

\medskip

\textbf{Network structure and asymptotic frameworks.} Asymptotic analyses under interference differ in how they treat the network. One approach specifies a graph-generating model and averages over network randomness~\citep{hu2022average,li2022random}. A second follows a design-based finite-population perspective: at each sample size, the potential outcomes, covariates, and observed network are fixed, and all randomness arises from treatment assignment. The estimands therefore pertain to the realized population and network, while asymptotics are taken along a sequence of expanding finite-population experiments. This perspective extends classical finite-population theory~\citep{li2017general} to settings with interference~\citep{hudgens2008toward,liu2014large,aronow2017estimating,leung2022causal,gao2025causal} without requiring a model for network formation.

Within the fixed-network tradition, methods differ in how they control dependence. Under partial or clustered interference, spillovers are confined to groups, permitting large-sample inference as the number of independent groups increases~\citep{hudgens2008toward,liu2014large,basse2018analyzing,barkley2020causal,imai2021causal}. Cluster randomization has also been used in graph and spatial experiments where spillovers may cross cluster boundaries~\citep{ugander2013graph,leung2022rate}. For general, non-clustered networks, existing work instead imposes degree or local-dependence restrictions~\citep{aronow2017estimating,ogburn2022causal}. The approximate neighborhood interference framework of \citet{leung2022causal} allows the effects of distant assignments to decay and establishes inference for IPW and Hajek estimators of average exposure effects; \citet{gao2025causal} develops regression-based extensions. Other work studies finite-sample randomization tests~\citep{athey2018exact,basse2019randomization,puelz2022graph}, unknown interference structures~\citep{egami2021spillover,savje2021average}, and misspecified exposure mappings~\citep{savje2023causal}.

 Existing asymptotic arguments primarily concern averages or smooth transformations, whereas quantile estimation involves indicators and a non-differentiable check-loss objective. Moreover, the quantile of an exposure-specific mixture cannot be obtained by averaging configuration-specific quantiles. We use the conditional neighborhood dependence framework of \citet{lee2019stable} to characterize the randomization-induced dependence of the relevant empirical processes and develop the additional local approximation required for quantile estimation. Thus, although CND provides a foundation, the resulting theory is specific to the non-smooth distributional structure of the problem and to the connection between discrete finite-population targets and their smooth limiting approximations.

\section{Quantile Exposure Effect for Networked Data}
\label{sec:def_and_framework}

Section~\ref{notation} introduces basic network notation and exposure mapping. Section~\ref{distributional} presents our framework for distributional effects, including quantile exposure effects. 

\subsection{Network Notation and Exposure Mapping}\label{notation}
We consider a finite-population randomized experiment of $n$ units indexed by $N_n=\{1,\ldots,n\}$, where the only source
of randomness is the treatment assignment. For each unit $i\in N_n$, let $X_i$ denote the vector of associated covariates and $W_i$ the binary treatment assigned based on known assignment mechanism.
Write $\bm{X}=(X_1,\ldots,X_n)\in\bm{\mathcal{X}}_n$, $\bm{W}=(W_1,\ldots,W_n)\in\{0,1\}^n$. Let $Y_i$ be the observed outcome under the realized assignment $\bm{W}$. Throughout, treatment assignments $\{W_i:i\in N_n\}$
are mutually independent
conditional on the network, covariates, and potential-outcome, with possibly unit-specific treatment probabilities.
For any set 
$K\subset N_n$, define $\bm{X}_K=\{X_i\}_{i\in K}$ and $\bm{W}_K=\{W_i\}_{i\in K}$ (and interpret these as
$\emptyset$ when $K=\emptyset$). 

For two positive deterministic sequences $a_n$ and $b_n$, we write $a_n\asymp b_n$ if there exists $0<c<C<\infty$ and $ n_0\in\mathbb N$ such that $c\le \frac{a_n}{b_n}\le C$ for all $n\ge n_0.$ Write $a_n=O(b_n)$ when $|a_n|/b_n$ is bounded, and $a_n=o(b_n)$ when
$a_n/b_n\to0$.  The stochastic analogues $O_p(b_n)$ and $o_p(b_n)$ mean,
respectively, that $a_n/b_n$ is bounded in probability and converges to zero
in probability.

Let $2^{N_n}$ be the collection of all subsets of $N_n$.  The fixed network
has vertex set $N_n$ and a symmetric adjacency matrix
$A_n\in\mathcal{A}_n$ with $A_{ii}=0$.  For $i\ne j$, $A_{ij}=1$ if units
$i$ and $j$ are connected and $A_{ij}=0$ otherwise.  We write $A_n$ as $A$
when $n$ is clear from context, and let $A_K$ be the $|K|\times|K|$
submatrix indexed by $K$.  The corresponding undirected graph is
$\mathcal{G}_n=(N_n,A_n)$. Let $\ell_A(i,j)$ be the path distance between
units $i$ and $j$ in $\mathcal{G}_n$, defined as the length of the shortest
path connecting them.
We set $\ell_A(i,j)=0$ if $i=j$ and $\ell_A(i,j)=\infty$ if $i$ and $j$ are disconnected.

Given $\mathcal{G}_n$, consider a neighborhood system which associate a set of neighbors to each individual through mapping $\nu_n:N_n\rightarrow 2^{N_n}.$ For individual $i$, define its (open) neighborhood as $\nu_n(i)
=
\{j\in N_n\setminus\{i\}:A_{ij}=1\}$ and define the closed neighborhood as $
\bar\nu_n(i)=\nu_n(i)\cup\{i\}$, respectively. Let $\nu_n(i,j)=(\bar\nu_n(i)\cup \bar\nu_n(j))\setminus\{i,j\}$ be the joint neighborhood. For a subset $S\subseteq N_n$, let $|S|$ denote the cardinality of the set, and let $
\bar\nu_n(S)
=
S\cup\bigcup_{i\in S}\nu_n(i),$ $
\nu_n(S)=\bar\nu_n(S)\setminus S.$ Thus, $\nu_n(S)$ is the boundary of $S$, whereas
$\bar\nu_n(S)$ contains both $S$ and its boundary. For fixed positive integer $k$, define the $k$-step neighborhood as \(\nu_n^{(k)}(i)
=
\{j\in N_n\setminus\{i\}:\ell_A(i,j)\le k\}\), and let \(\bar\nu_n^{(k)}(i)=\nu_n^{(k)}(i)\cup\{i\}\) be the closed neighborhood.

 Let $Y_i(\bm{w})\in\mathcal{Y}_n$ denote the potential outcome of unit $i$ under the treatment vector
$\bm{w}=\{w_1,\ldots,w_n\}\in\{0,1\}^n$.
We adopt a
design-based perspective in which \(\{Y_i(\bm w)\}\) and the network
\(\mathcal{G}_n\) are fixed, and all randomness arises from the treatment
assignment mechanism. For notational simplicity, we suppress this conditioning and write the induced
probability measure $\mathrm{pr}(\cdot \mid \mathcal{G}_n,\{Y_i(\bm{w})\}_{i\in N_n,\bm{w}\in\{0,1\}^n})$ simply as $\mathrm{pr}(\cdot)$ and similarly for the expectations.

Under interference, $Y_i(\bm{w})$ may depend on a large portion of the global assignment vector \(\bm w\), which quickly becomes high-dimensional and makes causal contrasts difficult to define and estimate. To obtain
tractable estimands we follow \citet{aronow2017estimating,leung2022causal,gao2025causal} and summarize the relevant features of \(\bm w\) (and the
network) through a low-dimensional {\it exposure mapping}, so that causal
effects can be expressed as contrasts across exposure conditions. Formally, let
\[
\Phi_n: N_n \times \{0,1\}^n \times \mathcal{A} \to \boldsymbol{G}
\]
map \((i,\bm W,A)\) to a discrete exposure space
\(\boldsymbol{G}\subset\mathbb{R}^{d_{\boldsymbol{G}}}\)
, and define the
realized exposure
\begin{equation}\label{Gi}
G_i=\Phi_n(i,\bm W,A).
\end{equation}

Different assignments within a unit’s neighborhood may yield the same exposure but different outcomes. We impose the following locality condition on \(\Phi_n\), which ensures that a
unit’s exposure depends only on the treatment assignments and network structure
in its neighborhood.
\begin{assumption}[Locality of exposure mapping]
\label{asp:exp_map}
For the fixed network $A$, every $i\in N_n$, and all
$\bm w,\bm w'\in\{0,1\}^n$,
\[
\bm w_{\bar\nu_n(i)}=\bm w'_{\bar\nu_n(i)}
\quad\Longrightarrow\quad
\Phi_n(i,\bm w,A)=\Phi_n(i,\bm w',A).
\]
\end{assumption}

Assumption \ref{asp:exp_map} is standard in causal inference for mean treatment effects under interference~\citet{leung2022causal,gao2025causal}. It formalizes the idea that exposure is a low-dimensional, neighborhood-based summary of the global assignment, thereby reducing the effective dimensionality of \(\bm W\) and mitigating sparsity in treatment configurations. Importantly, Assumption \ref{asp:exp_map} restricts only the information used to construct
exposures \(\Phi_n\), while it does not require that potential outcomes depend on neighbors’ treatments \emph{only} through \(G_i\), which is more general than many interference models such as
\citep{hudgens2008toward,aronow2017estimating,forastiere2021identification}.

\begin{assumption}[Locality of potential outcomes]
\label{asp:local-potential-outcomes}
For every unit \(i\in N_n\) and every pair of treatment assignments
\(w,w'\in\{0,1\}^n\),
\[
    w_{\bar{\nu}_n(i)}
    =
    w'_{\bar{\nu}_n(i)}
    \quad\Longrightarrow\quad
    Y_i(w)=Y_i(w').
\]
\end{assumption}

Assumption~\ref{asp:local-potential-outcomes} imposes local
interference that the potential outcome of unit \(i\) may depend
arbitrarily on the entire treatment configuration within its closed
neighborhood, but not on treatments outside that neighborhood. This assumption bridges connection between the evaluation on interference and neighboring effects, and is a common assumption that inherently holds in most existing interference frameworks such as~\citet{leung2022causal,li2022random,gao2025causal}. We next give two examples on exposure mappings that satisfy Assumption~\ref{asp:exp_map} above.

\begin{example}[Existence of treated neighbors]
\label{example:existence}
Consider an advertising intervention where $w_i=1$ indicates that unit $i$ is shown an advertisement. The exposure mapping records the customer's own treatment and whether at least one of the customer's neighbors is treated:

\[
G_i=\left(w_i, I \left\{\sum_{j\in\nu_n(i)} w_j > 0\right\}\right),
\]
where $I(\cdot)$ denotes the indicator function.
\end{example}
\begin{example}[Number of treated neighbors]
\label{example:amount}
Consider a vaccination study where $w_i=1$ indicates that unit $i$ is vaccinated. Holding the neighborhood structure fixed, individuals with more vaccinated neighbors may face lower infection risk due to spillovers (e.g., herd immunity). To ensure overlap especially when degrees vary across units, a common choice is to consider the \emph{count} of treated neighbors as exposure. The exposure mapping is therefore
\[
G_i=\Phi_n(i,\bm{w},A)=\sum_{j\in\nu_n(i)} w_j.
\]
\end{example}

\subsection{Quantile Exposure Effect}\label{distributional}
We now define the causal estimand of interest. Given $g\in\boldsymbol{G}$, let $\boldsymbol{G}$ denote the support of the exposure mapping, i.e.,
\[
\boldsymbol{G}:= \big\{ g\in \mathbb{R}^{d_{\boldsymbol{G}}}:\ \exists\, i\in N_n,\,\bm w\in\{0,1\}^n \text{ such that } \Phi_n(i,\bm w,A)=g \big\}.
\]
Even though potential outcomes \(\{Y_i(\bm w)\}\) are fixed, the randomization
over \(\bm W\) induces, for each exposure level \(g\), a finite-population
``mixture'' distribution:

\begin{equation}\label{eq:Fng}
F_{ng}(y)=\frac{1}{n}\sum_{i\in N_n}\sum_{\bm{w}\in\{0,1\}^n}I(Y_i(\bm{w})\le y)\mathrm{pr}(\bm{W}=\bm{w}\mid G_i=g),
\end{equation}
where $G_i=\Phi_n(i,\bm W,A)$ is the exposure for unit $i$ as defined in (\ref{Gi}). To interpret \(F_{ng}\), we introduce an auxiliary random index \(\xi\) that is drawn
uniformly from \(N_n=\{1,\ldots,n\}\) and independently of the treatment
assignment \(\bm W\). Define the randomly selected unit's outcome and exposure by
$Y_{\xi}(\bm W) = Y_{\xi}(\bm W_1,\ldots,\bm W_n)
= \sum_{i\in N_n}I\{\xi=i\}\,Y_i(\bm W),$ $G_{\xi} = \Phi_n(\xi,\bm W,A)
= \sum_{i\in N_n}I\{\xi=i\}\,G_i.$ With \(\xi\) introduced this way,
\[
F_{ng}(y)=E\!\left[\mathrm{pr} \big(Y_{\xi}(\bm W)\le y \,\mid \, \xi,\,
G_{\xi}=g\big)\right]
= \frac{1}{n}\sum_{i\in N_n}\mathrm{pr} \big(Y_i(\bm W)\le y \,\mid \, G_i=g\big).
\]
If \(\mathrm{pr}(G_i=g)\) does not depend on \(i\), then the right-hand side
also equals \(\mathrm{pr}(Y_{\xi}(\bm W)\le y \mid G_{\xi}=g)\). We use the subscript $n$ to emphasize that $F_{ng}$ is defined for a finite population of size $n$.

Let $\mathcal{B}$ be the set of all Borel distribution functions on $\mathbb{R}$, and let
$\mathcal{A}:\mathcal{B}\to\mathbb{R}$ be a real-valued attribute of interest applied to $F_{ng}$.
We consider estimands of the form:
\begin{equation}
    \tau_{\mathcal{A}}(g,g')=\mathcal{A}(F_{ng})-\mathcal{A}(F_{ng'}),
    \label{classicdef}
\end{equation} for any two exposure levels $g,g'\in\boldsymbol{G}$. This class of estimands encompasses a variety of causal effects studied under interference. In particular, Proposition~\ref{prop:equiv} shows that it nests the average exposure effect considered in, for example, \citet{leung2022causal,savje2023causal,gao2025causal}.

\begin{proposition}
\label{prop:equiv}
    Let $\mathcal{A}_{Mean}(F)=\int_0^\infty \{1-F(y)\} dy-\int_{-\infty}^0 F(y) dy$. Then 
    $$\tau_{\mathcal{A}_{Mean}}(g,g')=\frac{1}{n}\sum_{i\in N_n} \big\{\mu_i(g)-\mu_i(g')\big\},$$ where
    $\mu_i(g)=\sum_{\bm{w}\in\{0,1\}^n}Y_i(\bm{w})\mathrm{pr}(\bm{W}=\bm{w}\mid G_i=g).$
\end{proposition}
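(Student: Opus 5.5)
The plan is to show that $\mathcal{A}_{Mean}$ applied to a distribution function returns the mean of that distribution, and then to use linearity of $F_{ng}$ in the unit-level, configuration-level point masses. For fixed $n$, $F_{ng}$ is a finite convex combination of step functions $y\mapsto I(Y_i(\bm w)\le y)$. The weights are $\frac1n\mathrm{pr}(\bm W=\bm w\mid G_i=g)$, taken over $i\in N_n$ and over $\bm w$ with $\mathrm{pr}(\bm W=\bm w\mid G_i=g)>0$. We assume $\mathrm{pr}(G_i=g)>0$, which the definition of $F_{ng}$ implicitly requires. For each $i$ the weights sum to $1$, so the total weight is $1$ and $F_{ng}$ is a genuine distribution function of a discrete law with finitely many atoms.

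The first step is an elementary lemma: for a point mass at $c\in\mathbb R$, with $F_c(y)=I(c\le y)$, we have $\mathcal{A}_{Mean}(F_c)=c$. To see this, split into cases. If $c\ge0$, then $\int_0^\infty\{1-I(c\le y)\}dy=\int_0^\infty I(y<c)\,dy=c$ and $\int_{-\infty}^0 I(c\le y)\,dy=0$. If $c<0$, the first integral is $0$ and the second equals $\int_{-\infty}^0 I(y\ge c)\,dy=-c$, so $\mathcal{A}_{Mean}(F_c)=0-(-c)=c$.

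The second step is linearity. For a finite convex combination $F=\sum_k\lambda_kF_{c_k}$ with $\sum_k\lambda_k=1$, we have $1-F=\sum_k\lambda_k(1-F_{c_k})$ and $F=\sum_k\lambda_kF_{c_k}$. Each summand has finite integral, since the atoms are finitely many real numbers. Hence we may interchange the finite sum with the integral, and $\mathcal{A}_{Mean}(F)=\sum_k\lambda_k c_k$. This step is where the finiteness of the design-based setting does the work: the supports are bounded, so there are no integrability issues and no need for Fubini/Tonelli beyond finite sums.

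Applying this to $F_{ng}$ gives
\[
\mathcal{A}_{Mean}(F_{ng})=\frac1n\sum_{i\in N_n}\sum_{\bm w\in\{0,1\}^n}Y_i(\bm w)\,\mathrm{pr}(\bm W=\bm w\mid G_i=g)=\frac1n\sum_{i\in N_n}\mu_i(g).
\]
The same identity holds for $g'$. Subtracting the two identities yields $\tau_{\mathcal{A}_{Mean}}(g,g')$ as stated. There is no real obstacle; the only points needing care are the sign bookkeeping for negative atoms in the first step and confirming that the weights form a probability distribution, so that the constant terms cancel correctly.
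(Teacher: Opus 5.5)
Your proof is correct and complete. You identify $\mathcal{A}_{Mean}$ with the mean of a finitely supported law, checking it atom by atom with the sign bookkeeping for negative atoms, and then use the linearity of $F_{ng}$ in its point masses; the paper states this proposition without a written proof, so there is no different route to compare against. Your caveat that $\mathrm{pr}(G_i=g)>0$ for every $i$ is genuinely needed, since without it $F_{ng}$ and $\mu_i(g)$ are not well defined as a proper distribution function and conditional mean. In the paper this positivity is supplied, for the compared exposure levels, by the overlap assumption (Assumption~\ref{asp:overlap}).
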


For some fixed $q\in(0,1)$ and any $F\in \mathcal{B}$,
define the q-th quantile functional  
$$\mathcal{A}_q(F)=\inf\{y: F(y)\geq q\}.$$

For each exposure level $g\in\boldsymbol{G}$, let
$\beta_{nq}(g)=\mathcal{A}_q(F_{ng})$. 
\begin{definition}[Quantile exposure effect]
For any $g,g'\in\boldsymbol{G}$, we refer to the quantile treatment effect under interference as the \emph{quantile exposure effect}, defined by
\begin{equation}
\label{eq:qte_def}
\tau_{nq}(g,g')=\beta_{nq}(g)-\beta_{nq}(g').
\end{equation}
\end{definition}
Quantile exposure effect contains a wide range of effects under interference. For instance, under the exposure mapping in Example~\ref{example:existence}, when $g=(1,t)$, $g'=(0,t),$ $t\in\{0,1\},$ we can examine the direct effect which measures the impact from one's own treatment on the outcome. In contrast, exposure pair $g=(t,1)$, $g'=(t,0)$ defines the spillover effect measuring the neighborhood impact in networks. The choice on effect of interest depends on the scientific question one wants to answer in application.

\section{Estimating Quantile Treatment Effect with Interference}\label{sec:estimation}
\subsection{The Estimator}
In the no-interference setting, a classical approach to estimating quantile treatment effects is inverse-propensity-weighted (IPW) quantile regression \citep{sun2021causal}. Under interference, however, each outcome $Y_i$ may depend on the entire assignment vector $\bm W$, so we instead work with the exposure $G_i=\Phi_n(i,\bm W,A)$ and its associated generalized propensity score. Specifically, for each $g\in\boldsymbol{G}$, the generalized propensity score is defined as
\[
\pi_i(g)=\Pr(G_i=g).
\]
In this paper, we focus on randomized experiments where the generalized propensity score can be computed from the treatment assignment mechanism, exposure mapping, covariates, and network. For example, under a Bernoulli design in which each unit is independently treated with probability $r$, the probability of
binary exposure $G_i=I\{w_i+\sum_{j\in\nu_n(i)}w_j>0\}$ can be computed by
\[
\Pr(G_i=1)=1-(1-r)^{d_i+1},
\]
where $d_i=|\nu_n(i)|$. We allow the generalized propensity score to depend on covariates, so the
experiment need not be completely randomized.

We estimate $\beta_{nq}(g)$ via inverse generalized-propensity-weighted quantile regression:
\begin{equation}
\label{eq:minimization}
    \widehat{\beta}_{nq}(g)=\inf\left\{\arg\min_{\beta} \sum_{i \in N_n}\frac{I(G_i=g)}{\pi_i(g)}\rho_q(Y_i-\beta)\right\},
\end{equation} 
where $\rho_q(u)=u (q - I(u\leq0))$ is the check loss function. The resulting estimator of the quantile exposure effect is  $$
\widehat{\tau}_{nq}(g,g')=\widehat{\beta}_{nq}(g)-\widehat{\beta}_{nq}(g').$$ This procedure generalizes the standard IPW estimator for quantile treatment effects by contrasting exposure levels rather than individual treatment indicators. In Section~\ref{sec:asp_and_thm}, we present the main causal and network-structure assumptions and provide additional examples of exposure mappings. We then derive the asymptotic distribution of $\widehat{\tau}_{nq}(g,g')$ and develop a corresponding variance estimator on the proposed estimator.

\subsection{Conditional Neighborhood Dependence}\label{CND}

Our theoretical development builds on the
notion of conditional neighborhood dependence (CND) introduced by ~\cite{lee2019stable}. Informally, CND captures
the idea that network dependence is \emph{local}: once we condition on
information carried by a suitable ``buffer'' neighborhood, outcomes in two
well-separated sets of vertices become conditionally independent. This property enables stable limit theorems for network-dependent data without requiring a parametric model for the network.

Let \(\nu_n\) be a neighborhood system on \(N_n\) and let
\(\mathcal{M}\equiv\{\mathcal{M}_i\}_{i\in N_n}\) be a collection of
\(\sigma\)-fields. For a set \(S\subseteq N_n\), recall that
\(\nu_n(S)=\bar{\nu}_n(S)\setminus S\), define
\(\mathcal{F}_S:=\sigma\!\big(\{Y_i\}_{i\in S}\big)\), and
\(\mathcal{M}_S:=\vee_{i\in S}\mathcal{M}_i\).
For two \(\sigma\)-fields \(\mathcal A\) and \(\mathcal B\),
$\mathcal A \vee \mathcal B$
denotes the \emph{smallest \(\sigma\)-field containing both} \(\mathcal A\) and \(\mathcal B\). Equivalently,
$\mathcal A \vee \mathcal B = \sigma(\mathcal A \cup \mathcal B)$.

\begin{definition}[Conditional neighborhood dependence]
\label{def:CND}
We say that \(\{Y_i\}_{i\in N_n}\) is \emph{conditionally neighborhood dependent
(CND)} with respect to \((\nu_n,\mathcal{M})\) if, for any subset
\(N_n'\subset N_n\) and any \(A,B\subseteq N_n'\) satisfying
\[
A \subseteq N_n'\setminus \bar{\nu}_n(B)
\quad\text{and}\quad
B \subseteq N_n'\setminus \bar{\nu}_n(A),
\]
the \(\sigma\)-fields \(\mathcal{F}_A \vee \mathcal{M}_A\) and
\(\mathcal{F}_B \vee \mathcal{M}_B\) are conditionally independent given
\(\mathcal{M}_{\nu_n(A)}\); that is,
\[
(\mathcal{F}_A \vee \mathcal{M}_A)\ \perp\!\!\!\perp\
(\mathcal{F}_B \vee \mathcal{M}_B)\ \big|\ \mathcal{M}_{\nu_n(A)}.
\]
\end{definition}

Intuitively, the set \(\nu_n(A)\) acts as a separating ``buffer'': once we
condition on \(\mathcal{M}_{\nu_n(A)}\), the outcomes in \(A\) contain no
additional information about outcomes in any set \(B\) that lies outside this
neighborhood (and vice versa). Note that CND is a property of the \emph{entire}
collection \(\{Y_i\}_{i\in N_n}\). It asserts conditional independence of
outcomes in two well-separated subsets after conditioning on appropriate
neighborhood information.

\begin{figure}[ht]
    \centering
    \includegraphics[width=0.6\linewidth]{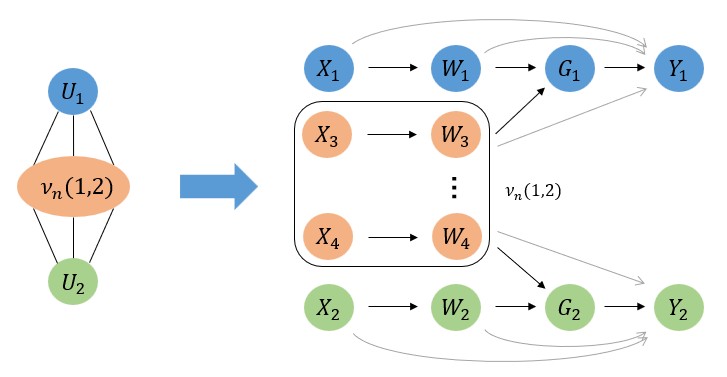}
    \caption{The connection of two units $U_1,U_2$ in the social network (left) and the corresponding causal DAG (right).}
    \label{fig:illu}
\end{figure}

Figure~\ref{fig:illu} provides a simple illustration under
Assumption~\ref{asp:exp_map}and~\ref{asp:local-potential-outcomes}, which implies that the outcome \(Y_i\) exposure mapping \(G_i\) for each unit $i$ is only affected by treatments in the closed neighborhood $\bm{W}_{\bar\nu_n(i)},$ apart from its own treatments and covariates. Besides, for two units \(U_1\) and \(U_2\) which are not adjacent, and \(\nu_n(1,2)\neq\emptyset\) denotes their joint neighborhood. In the DAG, any dependence
between \(Y_1\) and \(Y_2\) can arise only through the treatments and covariates $\{\bm{W}_{\nu_n(1,2)},\bm{X}_{\nu_n(1,2)}\}$ associated with the joint neighborhood $\nu_n(1,2)$. Conditioning on this
neighborhood information therefore blocks the dependence between the two
outcomes. Note that for randomized experiment in finite population discussed in this paper, the covariates are fixed and the dependence arises from $\bm{W}_{\nu_n(1,2)}$ only. However, Figure~\ref{fig:illu} accommodates to observational studies as well, and therefore we keep covariates to increase its adaptability. Consequently, the DAG makes clear that the outcomes satisfy conditional neighborhood dependence 
property stated in the proposition below.

\begin{proposition}[Conditional neighborhood dependence]
\label{prop:CND}
Suppose that the treatment assignments
\(\{W_i:i\in N_n\}\) are mutually independent under the design-based
probability measure. Suppose further that Assumption~\ref{asp:exp_map} and~\ref{asp:local-potential-outcomes} hold, and let
\[
    \mathcal{M}_i=\sigma(W_i),
    \qquad
    \mathcal{M}=\{\mathcal{M}_i\}_{i\in N_n},
\]
Then
\(\{(Y_i,G_i)\}_{i\in N_n}\) is conditionally neighborhood dependent
with respect to \((\nu_n,\mathcal{M})\). More generally, for any collection of
measurable functions \(\{h_i\}_{i\in N_n}\), the array
\[
    \bigl\{h_i(Y_i,G_i)\bigr\}_{i\in N_n}
\]
is CND with respect to \((\nu_n,\mathcal{M})\).
\end{proposition}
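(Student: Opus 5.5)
The plan is to reduce everything to a statement about independent treatment indicators. By Assumptions~\ref{asp:exp_map} and~\ref{asp:local-potential-outcomes}, with the network and potential outcomes fixed, each pair $(Y_i,G_i)$ is a deterministic measurable function of $\bm W_{\bar\nu_n(i)}$. So for any $S\subseteq N_n$ the $\sigma$-field $\mathcal F_S\vee\mathcal M_S$ is contained in $\sigma(\bm W_{\bar\nu_n(S)})$. The same containment holds when $Y_i$ is replaced by $(Y_i,G_i)$ or by $h_i(Y_i,G_i)$, since composing with a measurable $h_i$ only shrinks the generated $\sigma$-field. It therefore suffices to prove the claim for $(Y_i,G_i)$ by working directly with these enlarged $\sigma$-fields; the general $h_i$ case follows because conditional independence passes to sub-$\sigma$-fields.

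\textbf{Step 1 (index sets).} Fix $A,B\subseteq N_n'$ with $A\cap\bar\nu_n(B)=\emptyset$ and $B\cap\bar\nu_n(A)=\emptyset$. Let $C=\nu_n(A)$, so that $\bar\nu_n(A)=A\cup C$. Write $\mathcal F_A\vee\mathcal M_A\subseteq\sigma(\bm W_A,\bm W_C)$ and $\mathcal F_B\vee\mathcal M_B\subseteq\sigma(\bm W_B,\bm W_{\nu_n(B)})$. The separation conditions give $A\cap\bar\nu_n(B)=\emptyset$ and $B\cap(A\cup C)=\emptyset$.

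\textbf{Step 2 (separation).} Split $\bm W_{\bar\nu_n(B)}=(\bm W_{\bar\nu_n(B)\cap C},\bm W_{\bar\nu_n(B)\setminus C})$. The index set $D:=\bar\nu_n(B)\setminus C$ is disjoint from $A$ (because $A\cap\bar\nu_n(B)=\emptyset$) and from $C$ by construction. Hence the three blocks $\bm W_A$, $\bm W_C$ and $\bm W_D$ are mutually independent, using the mutual independence of the $W_i$. We now have $\mathcal F_A\vee\mathcal M_A\subseteq\sigma(\bm W_A,\bm W_C)$ and $\mathcal F_B\vee\mathcal M_B\subseteq\sigma(\bm W_C,\bm W_D)$, with conditioning $\sigma$-field $\mathcal M_{\nu_n(A)}=\sigma(\bm W_C)$.

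\textbf{Step 3 (elementary lemma).} If $U,V,Z$ are independent, then $\sigma(U,Z)\perp\!\!\!\perp\sigma(V,Z)\mid\sigma(Z)$. To check this, take bounded functions $f(U,Z)$ and $g(V,Z)$. By independence and Fubini,
\[
E[f(U,Z)g(V,Z)\mid Z]=\phi(Z)\psi(Z),
\]
where $\phi(z)=E f(U,z)$ and $\psi(z)=E g(V,z)$; and $E[f(U,Z)\mid Z]=\phi(Z)$, $E[g(V,Z)\mid Z]=\psi(Z)$. Applying the lemma with $U=\bm W_A$, $V=\bm W_D$, $Z=\bm W_C$, and restricting to sub-$\sigma$-fields, yields the CND property of Definition~\ref{def:CND}.

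The only delicate point is the bookkeeping in Step 2. The neighborhoods of $B$ may overlap the buffer $C=\nu_n(A)$, and the argument must show that this overlap is harmless because it lies inside the conditioning set, while the remainder $D$ avoids $\bar\nu_n(A)$. I expect this to be the main thing to get right, and the verification that $D\cap A=\emptyset$ uses exactly the hypothesis $A\subseteq N_n'\setminus\bar\nu_n(B)$. The role of $N_n'$ is immaterial, since all variables are indexed within $N_n$.
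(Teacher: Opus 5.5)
Your proposal is correct and follows the paper's proof essentially step for step. The shared steps are the measurability reduction $(Y_i,G_i)\in\sigma(\bm W_{\bar\nu_n(i)})$, the same decomposition into $C=\nu_n(A)$ and $D=\bar\nu_n(B)\setminus C$ with $A$, $C$, $D$ pairwise disjoint, and the conditional-independence fact for independent blocks $\bm W_A,\bm W_C,\bm W_D$, followed by passage to sub-$\sigma$-fields (your explicit Fubini verification of that fact is simply a more detailed version of the paper's one-line justification).
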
 

Proposition~\ref{prop:CND} formalizes the idea that dependence in a network is
transmitted locally: outcomes in two disconnected regions are conditionally
independent given the treatments in the relevant neighborhood in finite sample.  Compared with
alternative frameworks for network interference such as approximate neighborhood
interference (ANI) \citep{leung2022causal} and approaches that rely primarily on
Stein's method \citep{li2022random,ogburn2022causal}, CND provides a particularly
convenient route to weak convergence for functionals that are not Lipschitz and
may be non-differentiable in the outcome. This feature is especially important
for quantile-based estimands and other settings with non-smooth objectives, e.g.,
causal inference for survival outcomes~\citep{martinussen2022causality}. This CND structure is central to asymptotic analysis on network dependent variables. To build the stable limit theorem, we further introduce the following corollary on the two-step dependency graph.

\begin{corollary}[Two-step dependency graph]
\label{cor:two-step-dependency}
Suppose that the treatment assignments  are mutually
independent and the conditions in Proposition~\ref{prop:CND} hold. Recall that \(\bar\nu_n^{(2)}(i)\) is the two-step closed neighborhood for individual $i$. Then, for any collection of measurable functions $\{h_i\}_{i\in N_n}$ and any disjoint
$A,B\subseteq N_n$ such that
\[
A\subseteq N_n\setminus\bar\nu_n^{(2)}(B),
\qquad
B\subseteq N_n\setminus\bar\nu_n^{(2)}(A),
\]
the array
\[
S_i=h_i(W_i,G_i,Y_i,X),\qquad i\in N_n,
\]
satisfies that $\{S_i:i\in A\}$ and $\{S_j:j\in B\}$ are independent.
\end{corollary}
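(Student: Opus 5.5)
The plan is to argue directly from the independence of the assignments together with the locality Assumptions~\ref{asp:exp_map} and~\ref{asp:local-potential-outcomes}. Proposition~\ref{prop:CND} is not needed in its conditional form: here we want unconditional independence, which follows more simply because each $S_i$ is a deterministic function of a localized block of assignments.

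\textbf{Step 1 (localization).} By Assumption~\ref{asp:exp_map}, $G_i=\Phi_n(i,\bm W,A)$ depends on $\bm W$ only through $\bm W_{\bar\nu_n(i)}$. By Assumption~\ref{asp:local-potential-outcomes}, $Y_i=Y_i(\bm W)$ likewise depends only on $\bm W_{\bar\nu_n(i)}$. Since $X$ and $A$ are fixed in the design-based setting and $W_i$ is a coordinate of $\bm W_{\bar\nu_n(i)}$, there is a deterministic measurable map $\psi_i$ with $S_i=\psi_i(\bm W_{\bar\nu_n(i)})$. Consequently $\{S_i:i\in A\}$ is $\sigma(\bm W_{\bar\nu_n(A)})$-measurable and $\{S_j:j\in B\}$ is $\sigma(\bm W_{\bar\nu_n(B)})$-measurable.

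\textbf{Step 2 (disjointness of the blocks).} This is the only place where the two-step separation enters, and I expect it to be the main (though elementary) point. Suppose $k\in\bar\nu_n(A)\cap\bar\nu_n(B)$. Then there exist $i\in A$ and $j\in B$ with $\ell_A(i,k)\le 1$ and $\ell_A(j,k)\le1$. By the triangle inequality for path distance, $\ell_A(i,j)\le 2$, so $j\in\bar\nu_n^{(2)}(i)\subseteq\bar\nu_n^{(2)}(A)$. This contradicts $B\subseteq N_n\setminus\bar\nu_n^{(2)}(A)$. Hence $\bar\nu_n(A)\cap\bar\nu_n(B)=\emptyset$. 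The symmetric hypothesis is not strictly needed, but it is consistent with this conclusion.

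\textbf{Step 3 (independence).} Since the $W_k$ are mutually independent, the sub-vectors $\bm W_{\bar\nu_n(A)}$ and $\bm W_{\bar\nu_n(B)}$ over disjoint index sets are independent; this is the standard grouping lemma for independent families. Measurable functions of independent random vectors are independent, so $\{S_i\}_{i\in A}\perp\!\!\!\perp\{S_j\}_{j\in B}$ by Step 1.

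I will also note in passing that this establishes that the graph connecting $i$ and $j$ whenever $\ell_A(i,j)\le2$ is a dependency graph in Janson's sense, which is the form used later.
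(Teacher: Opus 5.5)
Your proof is correct and takes the same route as the paper. The paper also localizes each $S_i$ to $\sigma(\bm W_{\bar\nu_n(i)})$ via Assumptions~\ref{asp:exp_map} and~\ref{asp:local-potential-outcomes}, shows $\bar\nu_n(A)\cap\bar\nu_n(B)=\emptyset$ because a common neighbor would give a path of length at most two, concludes by independence of assignments over disjoint index sets, and bypasses the conditional form of Proposition~\ref{prop:CND}; your Step~2 just writes out the triangle-inequality argument the paper states in one line.
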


\section{Theoretical Properties}
\label{sec:asp_and_thm}

\subsection{Main Assumptions}
In this section, we introduce additional assumptions beyond the exposure mapping in
Assumption~\ref{asp:exp_map}, which are needed to establish the statistical properties of our
estimator for the quantile exposure effect. We begin with assumptions on the generalized
propensity score, followed by regularity conditions on the outcome distribution and
restrictions on the network structure.

\begin{assumption}[Overlap]
\label{asp:overlap}
    For any two exposure levels of interest
    $g,g'\in\boldsymbol{G}$, the generalized propensity
scores satisfy: 
\[
\pi_i(g),\ \pi_i(g') \in [\underline{\pi},\overline{\pi}]\subset(0,1),
\quad\text{for all } i\in N_n,
\]
where $0<\underline{\pi}<\overline{\pi}<1$ are constants.
\end{assumption}

Assumption~\ref{asp:overlap} requires that the generalized propensity scores for the exposure levels under comparison are uniformly
bounded away from $0$ and $1$. This condition can
often be enforced (at least approximately) through the experimental design. Two remarks are
in order. First, overlap under interference is more delicate than in the no-interference setting,
because exposure probabilities may vary substantially with local network features such as
degree. For instance, under the ``existence of a treated neighbor'' exposure in
Example~\ref{example:existence}, units with larger neighborhoods are much more likely to
have $G_i=1$ (and hence much less likely to have $G_i=0$). As discussed in
\citet{leung2022causal}, one way to improve the plausibility of overlap is to randomize
treatment only within a subset of eligible units and to restrict attention to a
subpopulation whose neighborhoods satisfy pre-specified eligibility criteria. Second, overlap is only required for the exposure levels of interest. For
example, under the ``number of treated neighbors'' exposure in
Example~\ref{example:amount}, $\Pr(G_i=5)=0$ for any unit $i$ with
$\sum_{j\in N_n}A_{ij}<5$. If the target contrast is between $g=0$ and $g'=2$ and treatment
is assigned with $\Pr(W_i=1)\in(0,1)$ for all $i$, then it suffices to restrict the
analysis to units with at least two neighbors, for whom both exposure levels are attainable
with positive probability.

Next, we introduce the following assumption on the outcome distribution function. Write $F(x-)=\lim_{\epsilon\to 0+}F(x-\epsilon)$ as the left limit of function $F.$

\begin{assumption}
[Local regularity and existence of limiting density function]
\label{asp:pdf}
Let $q\in (0,1)$ be the quantile level of interest. 
Then for any exposure level of interest $g\in\boldsymbol{G},$ the value of $F_{ng}$ at its $q$-th quantile $\beta_{nq}(g)$ satisfies $$
F_{ng}\{\beta_{nq}(g)\}-F_{ng}\{\beta_{nq}(g)-\}
=o(n^{-1/2}).
$$ 

Moreover, there exists a continuous density function \(f_g\) such that \(0<c_f\le f_g(\beta_{nq}(g))\le C_f<\infty\) for all sufficiently large \(n\). Besides, \(f_g\) is locally Lipschitz around \(\beta_{nq}(g)\), i.e., there are constants $L_f<\infty$ and $\eta_f>0$ such that, uniformly for all sufficiently large $n$,
\[
|f_g(y)-f_g(z)|\le L_f|y-z|
\quad\text{whenever}\quad\max\{
|y-\beta_{nq}(g)|,|z-\beta_{nq}(g)|\}\le\eta_f.
\]
In addition, for every fixed \(M>0\),
\[
\sup_{|t|\le M/\sqrt n}
\left|
F_{ng}\left(\beta_{nq}(g)+t\right)
-
F_{ng}\left(\beta_{nq}(g)\right)
-
f_g\left(\beta_{nq}(g)\right)t
\right|
=
o(n^{-1/2}).
\]    
\end{assumption}

Assumption~\ref{asp:pdf} is a local regularity condition that makes quantile
asymptotics feasible in a design-based, finite-population setting.
While \(F_{ng}\) defined in (\ref{eq:Fng}) is a mixture distribution of potential outcomes and is therefore a discontinuous function, the first part of Assumption~\ref{asp:pdf} assumes that the step size of $F_{ng}$ at its $q$-th quantile is at an order of $o(n^{-1/2}),$ which essentially regularizes $F_{ng}$ by ruling out big jump locally at the quantile. The second part gives
a local linear approximation to the CDF \(F_{ng}\), uniformly on every fixed
multiple of the $n^{-1/2}$ neighborhood in the locality of finite-sample quantile $\beta_{nq}(g)$, with a positive and locally Lipschitz
limiting density. In Assumption~\ref{asp:pdf}, the asymptotic is interpreted as the limit of a sequence of experiments with size $n\to\infty.$ This
design-based interpretation of asymptotics is standard in the interference literature; see, e.g., \citet{leung2022causal,gao2025causal, shi2026berry}. Assumption~\ref{asp:pdf} is applied to approximate the estimator with the maximum of a continuous quadratic process to prove the asymptotic property of the quantile estimator under interference.

Next, we introduce notation that will be used in the asymptotic expansions. Let $\mathcal{U}$ be an
open interval containing $0$. 
For exposure level $g\in\boldsymbol{G}$ and $u\in\mathcal{U}$, define 
$$
R_{q}(Y_i,u;g)=\left\{Y_i-(\beta_{nq}(g)+u/\sqrt{n})\right\}\left\{I(Y_i\leq \beta_{nq}(g))-I(Y_i\leq \beta_{nq}(g)+u/\sqrt{n})\right\},
$$
and
$$D_{q}(Y_i;g)=q-I\{Y_i\leq \beta_{nq}(g)\}.$$
Their inverse-generalized-propensity--weighted versions are defined as

\[
\psi_{i,q}(y,g):=I\{G_i=g\}\,\pi_i^{-1}(g)\,H_{g,q}(y),
\]
where $H_{g,q}(y):=f_g^{-1}(\beta_{nq}(g))D_{q}(y;g).$ For fixed $g,g'\in\boldsymbol{G}$, define the contrast
\begin{equation}
\label{eq:def_psi}
    \psi_{i,q}=\psi_{i,q}(Y_i,g)-\psi_{i,q}(Y_i,g').
\end{equation} 
These quantities will appear repeatedly in our asymptotic expansions. 

For the neighborhood network, write the node degrees as
\[
d_i=|\nu_n(i)|,\qquad
d_{mx}=\max_{i\in N_n}d_i,\qquad
d_{\mathrm{av}}=\frac1n\sum_{i\in N_n}d_i.
\]
And for the two-step dependency graph in
Corollary~\ref{cor:two-step-dependency}, define
\[
d_i^{(2)}=|\nu_n^{(2)}(i)|,\qquad
d_{mx}^{(2)}=\max_{i\in N_n}d_i^{(2)},\qquad
d_{\mathrm{av}}^{(2)}=\frac1n\sum_{i\in N_n}d_i^{(2)}.
\]
It is straightforward that
\[
d_{mx}^{(2)}\le d_{mx}^2,
\qquad
d_{\mathrm{av}}^{(2)}
\le d_{mx}d_{\mathrm{av}}.
\]

For an array $M=\{M_i\}_{i\in N_n}$, let
\[
\sigma_n^2(M)
=
\operatorname{Var}\left(\sum_{i\in N_n}M_i\right),
\qquad
\mu_p(M)
=
\max_{i\in N_n}
\left[
E\left|
\frac{M_i-EM_i}{\sigma_n(M)}
\right|^p
\right]^{1/p},
\quad p=3,4,
\]
and define
\[
a_n^{(2)}(M)
=
n \{1+d_{mx}^{(2)}\}\{1+d_{\mathrm{av}}^{(2)}\}
\mu_3^3(M),
\qquad
b_n^{(2)}(M)
=
n\{1+d_{mx}^{(2)}\}^2\{1+d_{\mathrm{av}}^{(2)}\}
\mu_4^4(M).
\]

\begin{assumption}[Degree and moment conditions]
\label{asp:degree_rate}
Assume that
\[
a_n^{(2)}(\psi_{\cdot,q})\longrightarrow0
\]
and
\[
\left|\log a_n^{(2)}(\psi_{\cdot,q})\right|
\left\{b_n^{(2)}(\psi_{\cdot,q})\right\}^{1/2}
\longrightarrow0.
\]
\end{assumption}

Assumption~\ref{asp:degree_rate} is used to establish the asymptotic normality of the estimator in~\eqref{eq:minimization}. The assumption controls the joint impact of
network connectivity through degree conditions $d_{mx}^{(2)}$ and $d_{\mathrm{av}}^{(2)}$ and 
standardized score moments conditions. Alternative approaches characterize network restrictions
through a graph formation model
(e.g., \citealp{li2022random}) or restrictions on how
potential outcomes vary with network distance
\citep{leung2022causal}. These perspectives are closely
related because interference depends jointly on the
network structure and the potential-outcome model. A convenient sufficient condition,
within the regime $d_{mx}^{(2)}=o(\sqrt n)$, is
$d_{mx}^{(2)}d_{av}^{(2)}=o(\sqrt n)$.
In particular, $d_{mx}^{(2)}=o(n^{1/4})$ is sufficient.
Section ~\ref{sec:alternative-clt}  derives these implications and compares them
with an alternative condition based on Janson's dependency-graph CLT.

\subsection{Asymptotic Normality}

We now establish the asymptotic normality of the quantile exposure effect estimator. Our analysis follows a design-based
finite-population framework, with asymptotics taken along a sequence of fixed finite populations and networks
$\{\mathcal G_n\}$ as $n\to\infty$ \citep{aronow2017estimating, li2017general, leung2022causal,gao2025causal}.
At each sample size, the network, covariates, and
potential-outcome schedule are fixed, and randomness arises
only from treatment assignment. The assumptions stated above
restrict the sequence of experiments, including their
assignment mechanisms, outcome distributions, and network
dependence structures.

\begin{theorem}[Asymptotic normality]
\label{thm:asy_nor}
For
$q\in(0,1)$ and exposure levels $g,g'\in\boldsymbol G$, suppose that
Assumptions~\ref{asp:exp_map}, \ref{asp:local-potential-outcomes},
\ref{asp:overlap}, \ref{asp:pdf}, and~\ref{asp:degree_rate} hold. Define
\[
\sigma_n^2
=
\operatorname{Var}\left[
n^{-1/2}\sum_{i\in N_n}
\left\{
\psi_{i,q}(Y_i,g)-\psi_{i,q}(Y_i,g')
\right\}
\right].
\]
Suppose also
that, for some constants $0<c_\sigma<C_\sigma<\infty$ and all sufficiently
large $n$,
\[
\max_{s\in\{g,g'\}}
\operatorname{Var}\left[
n^{-1/2}\sum_{i\in N_n}\psi_{i,q}(Y_i,s)
\right]\le C_\sigma,
\qquad
\sigma_n^2\ge c_\sigma.
\]
Then if \(d_{mx}^{(2)}=o(n^{1/2}),\)
we have
\[
\sqrt n\{
\widehat\tau_{nq}(g,g')-\tau_{nq}(g,g')
\}
=
n^{-1/2}\sum_{i\in N_n}
\left\{
\psi_{i,q}(Y_i,g)-\psi_{i,q}(Y_i,g')
\right\}
+o_p(1).
\]
And therefore
\[
\sqrt n\sigma_n^{-1}\{
\widehat\tau_{nq}(g,g')-\tau_{nq}(g,g')
\}
\ \longrightarrow\ N(0,1)
\]
in distribution.
\end{theorem}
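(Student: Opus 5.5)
The plan is a Knight-type convexity argument for each exposure level separately, followed by a CLT for the contrast. Fix $s\in\{g,g'\}$ and reparametrize $\beta=\beta_{nq}(s)+u/\sqrt n$. The rescaled objective is
\[
Z_n(u;s)=\sum_{i}\frac{I(G_i=s)}{\pi_i(s)}\bigl\{\rho_q(Y_i-\beta_{nq}(s)-u/\sqrt n)-\rho_q(Y_i-\beta_{nq}(s))\bigr\}.
\]
It is convex in $u$ and minimized at $\widehat u_n=\sqrt n\{\widehat\beta_{nq}(s)-\beta_{nq}(s)\}$. Knight's identity splits it as
\[
Z_n(u;s)=-u\,n^{-1/2}\sum_i \frac{I(G_i=s)}{\pi_i(s)}D_q(Y_i;s)+\sum_i\frac{I(G_i=s)}{\pi_i(s)}R_q(Y_i,u;s),
\]
with sign conventions adjusted to the paper's definitions of $D_q$ and $R_q$. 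Write $T_n(s)$ for the linear score.

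I will show that $Z_n(u;s)=-uT_n(s)+\tfrac12 f_s(\beta_{nq}(s))u^2+o_p(1)$ for each fixed $u$. Since $T_n(s)=O_p(1)$ by the variance bound $C_\sigma$, the convexity lemma (Pollard; Hjort--Pollard) then gives
\[
\widehat u_n=f_s^{-1}(\beta_{nq}(s))T_n(s)+o_p(1)=n^{-1/2}\sum_i\psi_{i,q}(Y_i,s)+o_p(1).
\]
Pointwise convergence on a countable dense set upgrades to uniform convergence on compacts, which is enough. The $\inf$ in \eqref{eq:minimization} is harmless because the argmin set shrinks at rate $o_p(n^{-1/2})$ by the same argument.

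\textbf{Step 1 (score centering).} I need $E\,T_n(s)=o(1)$. Because $E[I(G_i=s)\pi_i^{-1}I(Y_i\le y)]=\mathrm{pr}(Y_i\le y\mid G_i=s)$, we get $E\,T_n(s)=\sqrt n\{q-F_{ns}(\beta_{nq}(s))\}$. By the definition of the quantile, $F_{ns}(\beta-)\le q\le F_{ns}(\beta)$, so this is bounded by $\sqrt n$ times the jump at $\beta_{nq}(s)$, which is $o(1)$ by the first part of Assumption~\ref{asp:pdf}.

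\textbf{Step 2 (quadratic term).} The remainder $Q_n(u)=\sum_i I(G_i=s)\pi_i^{-1}R_q(Y_i,u;s)$ has mean
\[
n\int_0^{u/\sqrt n}\{F_{ns}(\beta+t)-F_{ns}(\beta)\}\,dt.
\]
By the uniform local linear expansion in Assumption~\ref{asp:pdf}, this equals $\tfrac12 f_s(\beta)u^2+o(1)$.

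\textbf{Step 3 (variance of the remainder).} Each summand is nonnegative, is bounded by $\underline\pi^{-1}|u|n^{-1/2}$, and is nonzero only when $Y_i$ lies between $\beta$ and $\beta+u/\sqrt n$. So its second moment is at most
\[
C n^{-1}\,\mathrm{pr}\bigl(|Y_i-\beta|\le|u|/\sqrt n\mid G_i=s\bigr).
\]
Corollary~\ref{cor:two-step-dependency} makes summands independent outside two-step neighborhoods. Cauchy--Schwarz on the covariances then gives
\[
\operatorname{Var}Q_n\le C(1+d_{mx}^{(2)})\,n^{-1}\sum_i \mathrm{pr}(\cdot)=C(1+d_{mx}^{(2)})\cdot O(n^{-1/2}),
\]
using the local expansion of $F_{ns}$ again. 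This is $o(1)$ exactly because $d_{mx}^{(2)}=o(n^{1/2})$, which is where that hypothesis enters. I expect this covariance bound to be the main obstacle. A naive Cauchy--Schwarz bound uses $\sqrt{p_ip_j}$, and to turn it into an average of the $p_i$ I will use $\sqrt{p_ip_j}\le(p_i+p_j)/2$ together with the symmetry of the two-step dependency graph.

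\textbf{Step 4 (joint CLT).} Subtracting the two expansions gives the stated linearization, with $\sum_i\psi_{i,q}$ carrying an $o(1)$ mean by Step 1. I then apply the Lee--Song stable CLT to the CND array $\{\psi_{i,q}\}$, which is CND by Proposition~\ref{prop:CND}. Its Berry--Esseen-type bound is controlled by $a_n^{(2)}$ and $b_n^{(2)}$ through Assumption~\ref{asp:degree_rate}. Together with $\sigma_n^2\ge c_\sigma$ and Slutsky's lemma, this gives $\sqrt n\sigma_n^{-1}(\widehat\tau_{nq}-\tau_{nq})\to N(0,1)$.
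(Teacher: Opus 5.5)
Your proposal is correct and follows the paper's proof essentially step for step. It uses the same $-uD_q+R_q$ decomposition with the quadratic approximation, the mean of the remainder via the uniform local expansion in Assumption~\ref{asp:pdf}, and the variance of the remainder via the two-step dependency graph with $|\operatorname{Cov}(M_i,M_j)|\le\{\operatorname{Var}(M_i)+\operatorname{Var}(M_j)\}/2$; this last bound is the paper's Lemma~\ref{lem:CND_bound_rate}, and it is where $d_{mx}^{(2)}=o(n^{1/2})$ enters. It then uses Hjort--Pollard nearness of minimizers, the jump condition for the score mean, and the Lee--Song CLT. One small correction in Step 4: the Lee--Song bound in terms of $a_n^{(2)},b_n^{(2)}$ should be applied to the two-step dependency graph of Corollary~\ref{cor:two-step-dependency} with trivial conditioning $\sigma$-fields, as the paper does, not to the one-step CND structure of Proposition~\ref{prop:CND}. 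The one-step version would require $E[\psi_{i,q}-E\psi_{i,q}\mid\mathcal M_{\nu_n(i)}]=0$, which fails because the score depends on neighbors' treatments through $G_i$ and $Y_i$.
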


Theorem~\ref{thm:asy_nor} establishes asymptotic normality by first obtaining a linear expansion on the QEE estimator. 
 There are two fundamental challenges in establishing the theory. First, comparing to the asymptotic theories for average exposure effect with interference, a key difference is that the random variable $\psi_{i,q}(y,g)$ is not a Lipschitz function with respect to $y$. This non-smoothness makes it difficult to directly apply network CLTs
developed for Lipschitz functionals, such as those used under approximate neighborhood
interference \citep{leung2022causal}. We instead exploit the conditional neighborhood
dependence (CND) property, which allows us to apply stable
limit theorems under moment and degree conditions as in Assumption~\ref{asp:degree_rate}. Second, $\psi_{i,q}(y,g)$ involves the limiting density function in Assumption~\ref{asp:pdf}. In a
finite-population setting, the distribution $F_{ng}$ need not admit a density. We address
this by assuming that $F_{ng}$ is well approximated by a smooth limit, allowing us to use the derivative $f_g$ to justify the
local linearization underlying the expansion.

The proof sketch is as follows, with full details provided in the supplement. Define $U_{q,n}(\beta;g)=\sum_{i\in N_n}I(G_i=g)\pi_i^{-1}(g)\rho_q(Y_i-\beta)$ so that $\widehat{\beta}_{nq}(g)=\arg\min_{\beta}U_{q,n}(\beta;g)$. 
To derive the asymptotic distribution of
$\widehat{u}_{q}=\sqrt{n}(\widehat{\beta}_{nq}(g)-\beta_{nq}(g))$, 
we re-parameterize 
and consider the equivalent
optimization problem
\[
\widehat{u}_q=\arg\min_{u\in\mathbb{R}} Q_q(u;g),
\]
where
\[
Q_q(u;g)
=
U_{q,n}\{\beta_{nq}(g)+u/\sqrt n;g\}
-
U_{q,n}\{\beta_{nq}(g);g\}.
\]
Following the technique in~\citet{hjort2011asymptotics},
we obtain a quadratic
approximation to the localized objective. Specifically, we construct a quadratic function
$\widetilde{Q}_q(u;g)$ with regard to $u$ and show that
\[
\sup_{u\in\mathcal{U}}\big|Q_q(u;g)-\widetilde{Q}_q(u;g)\big|=o_p(1),
\]
by controlling the remainder terms using network-dependent arguments, in particular, the
conditional neighborhood dependence property. 
By a convexity argument (Lemma 2 of \citet{hjort2011asymptotics}), 
this uniform approximation implies nearness of minimizers:
\[
|\widehat{u}_q-\widetilde{u}_q|=o_p(1),
\qquad
\widetilde{u}_q=\arg\min_{u\in\mathbb{R}}\widetilde{Q}_q(u;g).
\]
Moreover, the exposure-specific quadratic minimizer is tight and admits the linear representation
\[
\widetilde{u}_q=n^{-1/2}\sum_{i\in N_n}\psi_{i,q}(Y_i,g)=O_p(1),
\]
which yields the leading term in the asymptotic expansion. The asymptotic normality of the centered leading term follows by
applying Corollary 3.1 of Lee and Song (2019) to the exact two-step
dependency graph established in
Corollary~\ref{cor:two-step-dependency}, with trivial conditioning
sigma-fields. The degree and moment requirements are precisely those
in Assumption~\ref{asp:degree_rate}.

\subsection{Sufficient Network Conditions and an Alternative CLT}
\label{sec:alternative-clt}

Theorem~\ref{thm:asy_nor} establishes asymptotic normality
using the Lee--Song normal approximation under
Assumption~\ref{asp:degree_rate}. This assumption combines
network degrees with standardized moments of the contrast
scores. We first derive interpretable sufficient degree
conditions and then present an alternative based on Janson's
dependency-graph central limit theorem. The resulting
conditions differ in how they accommodate network
heterogeneity.

For the fixed exposure contrast $(g,g')$, let
\[
Z_{i,n}=\psi_{i,q}-E\psi_{i,q},\qquad
S_n=\sum_{i\in N_n}Z_{i,n},\qquad
s_n^2=\operatorname{Var}(S_n)=n\sigma_n^2.
\]
By Corollary~\ref{cor:two-step-dependency}, this two-step graph is an exact
dependency graph for the array $\{Z_{i,n}:i\in N_n\}$.

\medskip
\noindent\textbf{A degree-based implication of the Lee--Song conditions.}
Under Assumptions~\ref{asp:overlap} and~\ref{asp:pdf}, the centered contrast scores are
uniformly bounded: $|Z_{i,n}|\le B:=4/(\underline\pi c_f)$ for all sufficiently
large $n$. If $\liminf_n\sigma_n^2>0$, then
$s_n\gtrsim\sqrt n$, so
$\mu_3^3(\psi_{\cdot,q})=O(n^{-3/2})$ and
$\mu_4^4(\psi_{\cdot,q})=O(n^{-2})$.
Consequently, the two quantities in Assumption~\ref{asp:degree_rate} satisfy
\[
a_n^{(2)}(\psi_{\cdot,q})
=O\left(\frac{\{1+d_{mx}^{(2)}\}\{1+d_{\mathrm{av}}^{(2)}\}}{\sqrt n}\right),
\qquad
b_n^{(2)}(\psi_{\cdot,q})
=O\left(\frac{\{1+d_{mx}^{(2)}\}^2\{1+d_{\mathrm{av}}^{(2)}\}}{n}\right).
\]
Moreover, boundedness yields
\[
b_n^{(2)}(\psi_{\cdot,q})
\leq \frac{B \{1+d_{mx}^{(2)}\}}{s_n}a_n^{(2)}(\psi_{\cdot,q})
\lesssim \frac{1+d_{mx}^{(2)}}{\sqrt n}a_n^{(2)}(\psi_{\cdot,q}).
\]
Thus, within the regime $d_{mx}^{(2)}=o(\sqrt n)$ required for the local quadratic
approximation, a convenient sufficient condition is
\begin{equation}
\label{eq:lee-song-product}
\{1+d_{mx}^{(2)}\}\{1+d_{\mathrm{av}}^{(2)}\}=o(\sqrt n)
\end{equation}
Indeed, this condition implies
$a_n^{(2)}(\psi_{\cdot,q})\to0$ and
$d_{mx}^{(2)}=o(\sqrt n)$ because $1+d_{\mathrm{av}}^{(2)}\ge1$.
The logarithmic condition in
Assumption~\ref{asp:degree_rate} then follows from the
preceding bound and
$x^{1/2}|\log x|\to0$ as $x\downarrow0$. A simpler sufficient condition
follows from $d_{av}^{(2)}\le d_{mx}^{(2)}$ as below.

\begin{proposition}[Maximum-degree sufficient condition]
\label{prop:max-degree-two-step}
Suppose that treatment assignments are mutually independent
and Assumptions~\ref{asp:exp_map},
\ref{asp:local-potential-outcomes}, \ref{asp:overlap},
and~\ref{asp:pdf} hold. Suppose further that, for constants
$0<c_\sigma<C_\sigma<\infty$ and all sufficiently large $n$,
\begin{equation}
\label{eq:clt-variance-bounds}
\max_{s\in\{g,g'\}}
\operatorname{Var}\left[
n^{-1/2}\sum_{i\in N_n}\psi_{i,q}(Y_i,s)
\right]\le C_\sigma,
\qquad
\sigma_n^2\ge c_\sigma.
\end{equation}
If $d_{mx}^{(2)}=o(n^{1/4})$, then
Assumption~\ref{asp:degree_rate} holds.
\end{proposition}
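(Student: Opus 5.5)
The argument bounds the standardized moments $\mu_3^3(\psi_{\cdot,q})$ and $\mu_4^4(\psi_{\cdot,q})$ by powers of $n^{-1/2}$ using the boundedness of the scores and the variance lower bound in \eqref{eq:clt-variance-bounds}. Once that is done, the two quantities in Assumption~\ref{asp:degree_rate} reduce to explicit degree expressions, which we control with $d_{\mathrm{av}}^{(2)}\le d_{mx}^{(2)}=o(n^{1/4})$.

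\emph{Step 1 (bounded scores).} By Assumption~\ref{asp:overlap}, $I\{G_i=s\}\pi_i^{-1}(s)\le 1/\underline\pi$ for $s\in\{g,g'\}$. By Assumption~\ref{asp:pdf}, $f_s(\beta_{nq}(s))\ge c_f$ for large $n$, and $|D_q(y;s)|\le 1$. Hence $|\psi_{i,q}(y,s)|\le 1/(\underline\pi c_f)$, so $|\psi_{i,q}|\le 2/(\underline\pi c_f)$ and $|\psi_{i,q}-E\psi_{i,q}|\le B:=4/(\underline\pi c_f)$, uniformly in $i$ for all sufficiently large $n$.

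\emph{Step 2 (moment bounds).} Note that $\sigma_n^2(\psi_{\cdot,q})=\operatorname{Var}(\sum_i\psi_{i,q})=n\sigma_n^2\ge c_\sigma n$. (The upper bound $C_\sigma$ is not needed here; only the lower bound matters.) Then
\[
\mu_3^3(\psi_{\cdot,q})\le \frac{B^3}{(c_\sigma n)^{3/2}},\qquad \mu_4^4(\psi_{\cdot,q})\le \frac{B^4}{(c_\sigma n)^{2}}.
\]
Substituting into the definitions gives
\[
a_n^{(2)}\lesssim \frac{\{1+d_{mx}^{(2)}\}\{1+d_{\mathrm{av}}^{(2)}\}}{\sqrt n},\qquad
b_n^{(2)}\lesssim \frac{\{1+d_{mx}^{(2)}\}^2\{1+d_{\mathrm{av}}^{(2)}\}}{n}.
\]

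\emph{Step 3 (degree rates).} Since $d_{\mathrm{av}}^{(2)}\le d_{mx}^{(2)}$, we have $a_n^{(2)}\lesssim \{1+d_{mx}^{(2)}\}^2/\sqrt n$. Under $d_{mx}^{(2)}=o(n^{1/4})$, $\{1+d_{mx}^{(2)}\}^2=o(n^{1/2})$, so $a_n^{(2)}\to0$. Likewise $b_n^{(2)}\lesssim \{1+d_{mx}^{(2)}\}^3/n=o(n^{-1/4})$.

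\emph{Step 4 (logarithmic condition).} This is the only mildly delicate step, because $|\log a_n^{(2)}|$ diverges as $a_n^{(2)}\to 0$, so we need to know how fast $a_n^{(2)}$ can vanish. Since $\mu_3(M)\ge$ the normalized standard deviation of each term, $a_n^{(2)}$ could in principle be extremely small, so a direct upper bound on $|\log a_n^{(2)}|$ is unavailable. Instead I would use the comparison already noted in the text, $b_n^{(2)}\le B\{1+d_{mx}^{(2)}\}s_n^{-1}a_n^{(2)}$. It follows from $E|Z|^4\le B\,E|Z|^3$ and yields
\[
b_n^{(2)}\lesssim \frac{1+d_{mx}^{(2)}}{\sqrt n}\,a_n^{(2)}\le a_n^{(2)}
\]
for large $n$, since $d_{mx}^{(2)}=o(\sqrt n)$. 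Therefore
\[
|\log a_n^{(2)}|\{b_n^{(2)}\}^{1/2}\lesssim |\log a_n^{(2)}|\{a_n^{(2)}\}^{1/2}\to0,
\]
because $x^{1/2}|\log x|\to0$ as $x\downarrow0$ and $a_n^{(2)}\to0$ by Step 3. (If $a_n^{(2)}=0$ for some $n$, then $b_n^{(2)}=0$ as well, and the product is interpreted as $0$.) Both parts of Assumption~\ref{asp:degree_rate} therefore hold.
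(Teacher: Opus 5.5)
Your proof is correct and follows the paper's argument essentially step for step. Uniform boundedness of the scores together with the variance lower bound gives $\mu_p^p=O(n^{-p/2})$, hence $a_n^{(2)}\lesssim\{1+d_{mx}^{(2)}\}^2/\sqrt n\to0$; the comparison $\mu_4^4\le (B/s_n)\mu_3^3$ then gives $b_n^{(2)}\lesssim\{(1+d_{mx}^{(2)})/\sqrt n\}\,a_n^{(2)}$, which reduces the logarithmic condition to $x^{1/2}|\log x|\to0$.

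Your aside that no direct upper bound on $|\log a_n^{(2)}|$ is available is not quite right. The inequality $\mu_3\ge\max_i\operatorname{sd}(Z_i)/s_n$ that you cite actually bounds $a_n^{(2)}$ below by a fixed negative power of $n$, so $|\log a_n^{(2)}|=O(\log n)$, and that bound would also suffice. Your argument does not depend on this aside, so the proof stands as written.
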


Proposition~\ref{prop:max-degree-two-step} provides a
sufficient condition requiring only the maximal two-step
degree. imilar constraints are also required in related works such as~\citet{aronow2017estimating}. It follows directly from
$\{1+d_{mx}^{(2)}\}\{1+d_{av}^{(2)}\}\le \{1+d_{mx}^{(2)}\}^2$ and the preceding argument.
For approximately regular two-step graphs, where
$d_{av}^{(2)}\asymp d_{mx}^{(2)}$, this maximum-degree condition
also characterizes the product restriction
in~\eqref{eq:lee-song-product}.

\medskip
\noindent\textbf{An alternative based on Janson's CLT.}
Janson's dependency-graph CLT~\citep{janson1988normal}, restated in
\citet[Theorem~4.3]{feray2018weighted}, applies to centered random variables
bounded by $A_n$ whose dependency graph has maximal degree $\Delta_n-1=d_{mx}^{(2)}$.  If
$v_n^2$ is the variance of their sum and, for some fixed integer $m\geq3$,
\[
\left(\frac{n}{\Delta_n}\right)^{1/m}
\frac{\Delta_n A_n}{v_n}\longrightarrow0,
\]
then the standardized sum converges in distribution to $N(0,1)$.  In our
setting, take $A_n$ as constant, and
$v_n=s_n=\sqrt n\,\sigma_n$. The relevant condition is
formalized as follows.

\begin{assumption}[Alternative dependency-graph condition]
\label{asp:janson}
The variance bounds in~\eqref{eq:clt-variance-bounds} hold.
In addition, for some fixed integer $m\ge3$,
\begin{equation}
\label{eq:janson-condition}
\left(\frac{n}{\Delta_n}\right)^{1/m}
\frac{\Delta_n}{\sqrt n\,\sigma_n}
\longrightarrow0.
\end{equation}
\end{assumption}

When $\sigma_n\asymp1$ from the variance upper and lower bound in Proposition~\ref{prop:max-degree-two-step}, for fixed $m$,
condition~\eqref{eq:janson-condition} is equivalent to
\begin{equation}
\label{eq:janson-degree-rate}
\Delta_n=o\left(n^{(m-2)/\{2(m-1)\}}\right).
\end{equation}
For $m=6$, the exponent is $2/5$, and it approaches $1/2$
as $m$ increases. For any fixed $\delta\in(0,1/2)$, choosing a fixed integer
$m>1+1/(2\delta)$ gives 
$\frac{m-2}{2(m-1)}>1/2-\delta$.
Therefore, $d_{mx}^{(2)}=O(n^{1/2-\delta})$ is sufficient
for a suitable fixed $m$. However, this does not cover every
sequence satisfying $d_{mx}^{(2)}=o(\sqrt n)$, for example, $d_{mx}^{(2)}\asymp\sqrt n/\log n$
fails~\eqref{eq:janson-degree-rate} for every fixed $m$.

\begin{theorem}[Asymptotic normality under Janson's condition]
\label{thm:asy-nor-janson}
Suppose that the treatment assignments $\{W_i:i\in N_n\}$ are mutually
independent.  For $q\in(0,1)$ and
$g,g'\in\boldsymbol{G}$, suppose that
Assumptions~\ref{asp:exp_map}, \ref{asp:local-potential-outcomes},
\ref{asp:overlap}, \ref{asp:pdf}, and~\ref{asp:janson} hold.  Then
\[
\sqrt n\{\widehat\tau_{nq}(g,g')-\tau_{nq}(g,g')\}
=n^{-1/2}\sum_{i\in N_n}
\{\psi_{i,q}(Y_i,g)-\psi_{i,q}(Y_i,g')\}+o_p(1),
\]
and
\[
\sqrt n\,\sigma_n^{-1}
\{\widehat\tau_{nq}(g,g')-\tau_{nq}(g,g')\}
\ \longrightarrow\ N(0,1)
\]
in distribution, where $\sigma_n^2$ is defined in
Theorem~\ref{thm:asy_nor}.
\end{theorem}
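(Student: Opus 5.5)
The argument reuses the proof of Theorem~\ref{thm:asy_nor} verbatim up to the linear expansion, and changes only the final central limit step. Two things have to be checked: that the degree restriction used in the local quadratic approximation, namely $d_{mx}^{(2)}=o(\sqrt n)$, is implied by Assumption~\ref{asp:janson}, and that Janson's dependency-graph CLT applies to the leading term.

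\textbf{Step 1: degree restriction.} Since $\sigma_n^2\le 2\max_s\operatorname{Var}[n^{-1/2}\sum_i\psi_{i,q}(Y_i,s)]+2\max_s(\cdot)\le 4C_\sigma$, we have $\sigma_n\asymp 1$. Condition~\eqref{eq:janson-condition} then gives
\[
\Delta_n/\sqrt n\le (n/\Delta_n)^{1/m}\,\Delta_n/\sqrt n\to0,
\]
because $n/\Delta_n\ge1$. Hence $d_{mx}^{(2)}=\Delta_n-1=o(\sqrt n)$.

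\textbf{Step 2: linear expansion.} For each $s\in\{g,g'\}$, work with the localized objective $Q_q(u;s)$ and its Knight-type decomposition into a linear term in $u$ and the remainder $\sum_i I(G_i=s)\pi_i^{-1}(s)R_q(Y_i,u;s)$.

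For the linear term, $n^{-1/2}\sum_i I(G_i=s)\pi_i^{-1}(s)D_q(Y_i;s)$ has mean $n^{1/2}\{q-F_{ns}(\beta_{nq}(s))\}=o(1)$ by the jump condition in Assumption~\ref{asp:pdf}. By the variance bound $C_\sigma$ it is $O_p(1)$.

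For the remainder, its mean is $\sum_i E[\cdot]=n\int_{\beta}^{\beta+u/\sqrt n}\{F_{ns}(t)-F_{ns}(\beta)\}\,dt$. By the local linearization in Assumption~\ref{asp:pdf} this equals $f_s(\beta_{nq}(s))u^2/2+o(1)$ uniformly on compacts. Its variance is controlled via Corollary~\ref{cor:two-step-dependency}: only pairs within two-step distance have nonzero covariance. Each summand is bounded by $|u|/\sqrt n$ times an indicator of $Y_i$ lying in an interval of length $|u|/\sqrt n$, whose probability is $O(n^{-1/2})$ by Assumption~\ref{asp:pdf}. Applying Cauchy--Schwarz to each covariance gives
\[
\text{variance}\lesssim n(1+d_{mx}^{(2)})\,n^{-1}n^{-1/2}=o(1),
\]
using Step 1. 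This yields pointwise convergence $Q_q-\widetilde Q_q\to0$. Convexity (Lemma~2 of \citet{hjort2011asymptotics}) then upgrades it to uniform convergence on compacts and gives $\widehat u_q-\widetilde u_q=o_p(1)$ with $\widetilde u_q=n^{-1/2}\sum_i\psi_{i,q}(Y_i,s)$ up to $o(1)$ centering. Differencing over $s=g,g'$ gives the first display, whose leading term has mean $o(1)$.

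\textbf{Step 3: CLT via Janson.} Apply Janson's theorem \citep[Theorem~4.3]{feray2018weighted} to $Z_{i,n}=\psi_{i,q}-E\psi_{i,q}$. The three ingredients are:
\begin{itemize}[leftmargin=*]
\item By Corollary~\ref{cor:two-step-dependency}, the two-step graph is an exact dependency graph with maximal degree $\Delta_n-1$.
\item By Assumptions~\ref{asp:overlap} and~\ref{asp:pdf}, $|Z_{i,n}|\le A_n:=4/(\underline\pi c_f)$, a constant.
\item The variance of the sum is $v_n^2=n\sigma_n^2$.
\end{itemize}
Assumption~\ref{asp:janson} is exactly Janson's condition with these choices, so $S_n/(\sqrt n\sigma_n)\to N(0,1)$. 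Since $\sigma_n\ge c_\sigma^{1/2}$, dividing the expansion by $\sigma_n$ keeps the remainder $o_p(1)$. Slutsky's lemma together with the vanishing mean gives the stated limit.

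\textbf{Main obstacle.} I expect the main difficulty to be the remainder variance bound in Step 2. It must be obtained without a density for the discrete $F_{ns}$, using only the local $o(n^{-1/2})$ linearization, with the two-step dependency supplying the $d_{mx}^{(2)}$ factor. This is where Step 1 is needed.
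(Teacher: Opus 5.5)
Your proposal is correct and follows the paper's proof essentially step for step. You extract $d_{mx}^{(2)}=o(\sqrt n)$ from Janson's condition via the variance upper bound, reuse the quadratic linearization and convex-minimizer argument from Theorem~\ref{thm:asy_nor}, apply Janson's CLT to the bounded centered contrast scores on the exact two-step dependency graph, and remove the centering with the $o(n^{-1/2})$ jump condition.

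One small correction is needed in Step~2. Assumption~\ref{asp:pdf} controls only the aggregate $F_{ns}$-mass of the interval between $\beta_{nq}(s)$ and $\beta_{nq}(s)+u/\sqrt n$, which is $O(n^{-1/2})$; it does not bound each unit's probability of landing in that interval. So after Cauchy--Schwarz, for $M_i=I(G_i=s)\pi_i^{-1}(s)R_q(Y_i,u;s)$ use $|\operatorname{cov}(M_i,M_j)|\le(EM_i^2+EM_j^2)/2$ and sum over two-step pairs. This gives $(1+d_{mx}^{(2)})\sum_iEM_i^2\le(1+d_{mx}^{(2)})\underline\pi^{-1}n\int R_q^2(y,u;s)\,dF_{ns}(y)=O\{(1+d_{mx}^{(2)})n^{-1/2}\}$, which is exactly the paper's route through Lemma~\ref{lem:CND_bound_rate}.
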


Theorem~\ref{thm:asy-nor-janson} establishes asymptotic normality for the QEE
estimator under Janson's condition. Comparing with Theorem~\ref{thm:asy_nor}, it
replaces the Lee--Song moment-and-degree condition in
Assumption~\ref{asp:degree_rate} with Assumption~\ref{asp:janson}. For the fixed
$m$ appearing in that assumption, rate~\eqref{eq:janson-degree-rate} implies
$d_{mx}^{(2)}=o(n^{1/2})$, so it also supplies the maximal-degree restriction
needed for the local quadratic expansion. The proof can therefore reuse that
expansion and invoke Janson's CLT only in the final step for the centered
linearized contrast score.

\medskip
\noindent\textbf{Comparison across network structures.}
To make more concrete comparison, we provide the following examples comparing the sufficient product
condition~\eqref{eq:lee-song-product} with the Janson
rate~\eqref{eq:janson-degree-rate}. Throughout, the
boundedness and variance conditions discussed above are
maintained. These are comparisons of sufficient conditions,
rather than necessary restrictions for asymptotic normality. To make the degree restrictions concrete, define
$\gamma_m=\frac{m-2}{2(m-1)}$,$m\geq 3$,
so that the sufficient Janson rate in~\eqref{eq:janson-degree-rate} is
$d_{mx}^{(2)}=o(n^{\gamma_m})$. 

\begin{example}[General networks]
\label{ex:general-clt-network}
Suppose $d_{mx}^{(2)}\asymp n^\alpha$ and
$d_{\mathrm{av}}^{(2)}\asymp n^\beta$, where
$0\leq\beta\leq\alpha$.
The Lee--Song product condition holds when
$\alpha+\beta<1/2$, which also implies
$\alpha<1/2$.
Janson's condition holds when
$\alpha<\gamma_m$ for some fixed $m\geq3$ and does not involve $\beta$.
\end{example}

\begin{example}[Approximately regular networks]
\label{ex:regular-clt-network}
Suppose the two-step dependence graph is approximately regular in the sense that vertex degrees are similar so that
$d_{\mathrm{av}}^{(2)}\asymp d_{mx}^{(2)}$.  The Lee--Song product condition
becomes $\{d_{mx}^{(2)}\}^2=o(\sqrt n)$, or
$d_{mx}^{(2)}=o(n^{1/4})$.  
Janson's condition instead permits
$d_{mx}^{(2)}=o(n^{\gamma_m})$.  
Since $\gamma_m\uparrow1/2$, 
this allows faster maximal-degree growth.
This regime includes networks that are disjoint unions of complete communities of comparable sizes, for which the maximal and average two-step neighborhood sizes have the same asymptotic order.
\end{example}

\begin{example}[Networks with localized hubs]
\label{ex:hub-clt-network}
Suppose the original network contains $k_n$ disjoint star components, each
with $h_n$ leaves, while all remaining units have uniformly bounded two-step
neighborhoods.  Squaring the original graph turns each star into a clique of
size $h_n+1$, and therefore
\[
d_{mx}^{(2)}\asymp h_n,
\qquad
d_{\mathrm{av}}^{(2)}
=O\left(1+\frac{k_nh_n^2}{n}\right).
\]
The Lee--Song product condition becomes
\[
h_n\left(1+\frac{k_nh_n^2}{n}\right)=o(\sqrt n).
\]
If $k_n=O(1)$, this permits $h_n=o(\sqrt n)$.  If the stars contain a
nonvanishing fraction of the population, so that $k_nh_n\asymp n$, then
$d_{\mathrm{av}}^{(2)}\asymp h_n$ and the condition reduces to
$h_n=o(n^{1/4})$. Janson's condition
requires $h_n=o(n^{\gamma_m})$ in both cases.
Thus, the Lee--Song product condition can be less restrictive when hubs are few, whereas Janson's condition can be less restrictive when hubs are widespread.
\end{example}

\begin{example}[Community networks with heterogeneous sizes]
\label{ex:community-clt-network}
Suppose the original network is a disjoint union of complete communities with
sizes $c_{1,n},\ldots,c_{K_n,n}$ and $\sum_kc_{k,n}=n$.  The two-step graph is
the same union of cliques.  Writing $c_{\max,n}=\max_kc_{k,n}$ gives
\[
d_{mx}^{(2)}=c_{\max,n}-1,
\qquad
d_{\mathrm{av}}^{(2)}
=\frac1n\sum_{k=1}^{K_n}c_{k,n}(c_{k,n}-1).
\]
When community sizes are comparable,
$d_{\mathrm{av}}^{(2)}\asymp d_{mx}^{(2)}\asymp c_{\max,n}$, the comparison in
Example~\ref{ex:regular-clt-network} applies.
The present example also allows unequal community sizes,
for which this comparability condition may fail.
In general, Lee--Song's product
condition requires
\[
\frac{c_{\max,n}}{n}\sum_{k=1}^{K_n}c_{k,n}^2
=o(\sqrt n),
\]
whereas Janson's condition requires
$c_{\max,n}=o(n^{\gamma_m})$.  

If only a fixed number of communities are large
and the others have bounded size, then
$d_{\mathrm{av}}^{(2)}=O(1+c_{\max,n}^2/n)$. The Lee--Song product condition then permits
$c_{\max,n}=o(\sqrt n)$, 
while Janson's condition
continues to require a fixed polynomial gap  below $\sqrt n$. Thus, concentrating large community sizes in a small
part of the population can substantially relax the
Lee--Song restriction relative to the balanced case.
\end{example}

Table~\ref{tab:clt-degree-comparison} summarizes the sufficient rates, where
the first two columns specify the network regime and its typical average-degree
behavior. The comparison shows that Janson's
condition can accommodate faster degree growth in approximately regular networks, while the Lee--Song product condition can be less restrictive when high connectivity is concentrated in a small part of the
network.

\begin{table}[ht]
	\caption{Comparison of sufficient degree rates under the Lee--Song's and Janson's conditions.}
	\label{tab:clt-degree-comparison}
	\centering
	\footnotesize
	\setlength{\tabcolsep}{4pt}
	\begin{tabular}{p{0.25\textwidth}p{0.19\textwidth}
			p{0.22\textwidth}p{0.20\textwidth}}
		\hline
		\raggedright Network structure
		& Typical $d_{\mathrm{av}}^{(2)}$
& Lee--Song's rate
& Janson's rate \\
\noalign{\vskip 2.5pt}
\hline
\noalign{\vskip 2.5pt}
\raggedright Approximately regular balanced community
		& $d_{\mathrm{av}}^{(2)}\asymp d_{mx}^{(2)}$
		& $d_{mx}^{(2)}=o(n^{1/4})$
		& $d_{mx}^{(2)}=o(n^{\gamma_m})$ \\
		
		\raggedright Sparse with slowly growing neighborhoods
		& $d_{\mathrm{av}}^{(2)}\asymp n^{1/8}$
		& $d_{mx}^{(2)}=o(n^{3/8})$
		& $d_{mx}^{(2)}=o(n^{\gamma_m})$ \\
		
		\raggedright Sparse localized hubs
		& $d_{\mathrm{av}}^{(2)}\asymp1$
		& $d_{mx}^{(2)}=o(n^{1/2})$
		& $d_{mx}^{(2)}=o(n^{\gamma_m})$ \\
		
		\raggedright Polynomial growth
& $d_{\mathrm{av}}^{(2)}\asymp n^\beta$
& $d_{mx}^{(2)}=o(n^{1/2-\beta})$
& $d_{mx}^{(2)}=o(n^{\gamma_m})$ \\[2.5pt]
\hline
	\end{tabular}
	\begin{tabnote}
		$\gamma_m=(m-2)/\{2(m-1)\}$ and $\gamma_m\uparrow1/2$ as the fixed integer
		$m$ increases. Relations
		$d_{mx}^{(2)}\leq d_{mx}^2$ and
		$d_{\mathrm{av}}^{(2)}\leq d_{mx}d_{\mathrm{av}}$ give the conservative
		rate condition on one-step dependence graph.
	\end{tabnote}
\end{table}

The two sets of sufficient conditions are therefore complementary. We retain the Lee--Song formulation in the main theorem because it makes the role of both maximal and average connectivity explicit, and provide
Janson's condition as an alternative.
Furthermore, \citet{lee2019stable} also derives uniform convergence theorem on network-dependent variables, which is essential to prove the statistical inference validity in the next section.

\section{Inference on Quantile Exposure Effect}
Next, we consider estimation of the finite sample variance $\sigma_n^2.$ We begin by studying a plug-in variance estimator and showing that it induces bias due to finite-sample heterogeneity. Since the bias term need
not be nonnegative, we then construct a conservative variance estimator via positive-part spectral correction and proves its asymptotically conservative coverage
under an explicit spectral-rate condition.
\medskip

\textbf{Plug-in Variance Estimator.} By the definition of the contrast score in
\eqref{eq:def_psi},
\[
\begin{aligned}
\frac1n\sum_{i\in N_n}E\psi_{i,q}
={}&
\frac{q-F_{ng}\{\beta_{nq}(g)\}}
{f_g\{\beta_{nq}(g)\}}
-
\frac{q-F_{ng'}\{\beta_{nq}(g')\}}
{f_{g'}\{\beta_{nq}(g')\}}
=o(n^{-1/2}).
\end{aligned}
\]
The final equality follows because the distance between $q$ and the value of
each finite-population distribution at its quantile is bounded by the jump at
that quantile, which is $o(n^{-1/2})$ under Assumption~\ref{asp:pdf}.
Consequently,
\[
\frac1nE\left(\sum_{i\in N_n}\psi_{i,q}\right)^2
=\frac1n\operatorname{Var}\left(\sum_{i\in N_n}\psi_{i,q}\right)
+\frac1n\left(\sum_{i\in N_n}E\psi_{i,q}\right)^2
=\sigma_n^2+o(1).
\]
For each pair of units $i,j\in N_n$ such that $\ell_{A}(i,j)>2$, $\bar{\nu}_n(i)\cap\bar{\nu}_n(j)=\emptyset,$ and $\psi_{i,q}$, $\psi_{j,q}$ are therefore independent with $E\{(\psi_{i,q}-E\psi_{i,q})(\psi_{j,q}-E\psi_{j,q})\}=0$ since they do not share common neighbors. This motivates the following variance estimator
that sums only cross-products for units $i,j$ with distance $\ell_{A}(i,j)\le 2$,
\begin{equation}
\label{eq:var_est}
    \widehat{\sigma}_n^2=n^{-1}\sum_{i,j\in N_n,\ell_{A}(i,j)\le 2}(\widehat{\psi}_{i,q}-\widehat{\psi}^M_q)(\widehat{\psi}_{j,q}-\widehat{\psi}^M_q),
\end{equation}
where \(\widehat{\psi}_{i,q}=\widehat{\psi}_{i,q}(Y_i,g)-\widehat{\psi}_{i,q}(Y_i,g')\), and $\widehat{\psi}^M_q=n^{-1}\sum_{i \in N_n}\widehat{\psi}_{i,q}$. Here, the plug-in estimator 
is $\widehat{\psi}_{i,q}(y,g)=I(G_i=g)\pi_i^{-1}(g)\widehat{H}_{g,q}(y),$ where $\widehat{H}_{g,q}(Y_i)=\widehat{f}_{ng}(\widehat{\beta}_{nq}(g))^{-1}\big\{ q - I(Y \leq \widehat{\beta}_{nq}(g)) \big\}$
and $\widehat{f}_{ng}$ is the density estimator. For the case with no interference, the estimator reduces to $\widehat{\sigma}_n^2=\frac{1}{n}\sum_{i\in N_n}(\widehat{\psi}_{i,q}-\widehat{\psi}^M_q)^2$,
which coincides with
the variance estimator for standard IPW quantile treatment effects \citep{firpo2007efficient}. Using this variance estimator, we summarize our procedure for estimating the quantile
exposure effect and its variance in Algorithm~\ref{algo:main}.

Next, to discover statistical properties on the variance estimator, mention that $\widehat{\psi}_{i,q}$ depends on the density estimator $\widehat{f}_{ng}$, we start by
imposing the following assumption to control the density estimation error.
\begin{assumption}[Error on Density Function]
\label{asp:error_pdf}
There exists a sequence $\epsilon_n\to0$ such that, for every fixed
$M<\infty$ and each exposure level $s\in\{g,g'\}$, the estimated density
function satisfies
\[
\sup_{y\in\mathcal I_{n,M}(s)}
\left|
\widehat f_{ns}(y)-f_s(y)
\right|
=
O_p(\epsilon_n),
\qquad
\mathcal I_{n,M}(s)
=
\left\{y: |y-\beta_{nq}(s)|\le M/\sqrt n\right\}.
\]
\end{assumption}

In practice, we adopt the kernel density estimator $$\widehat{f}_{ng}(y)=(nh)^{-1}\sum_{i\in N_n}I(G_i=g)\pi_i^{-1}(g)K_1((Y_i-y)/h),$$ where $K_1$ is Gaussian kernel and $h$ is the bandwidth parameter. The following proposition shows the validity of Assumption~\ref{asp:error_pdf} for regular kernel estimators.

\begin{proposition}
\label{prop:kernel_pdf_main}
Under the local-uniform bandwidth-scale regularity conditions in the
supplement, let $h=h_n$ satisfy $h\to0$ and
$\{1+d_{mx}^{(2)}\}\log n/(nh)\to0$. For every fixed $M<\infty$ and
every $g\in\boldsymbol G$,
    \[
\sup_{y\in\mathcal I_{n,M}(g)}
\left|
\widehat f_{ng}(y)-f_g(y)
\right|
=
O_p\left(
\sqrt{(1+d_{mx}^{(2)})\log n/(nh)}
+
h^2
\right).
\]
In particular, under the optimal bandwidth
$h \asymp
\left\{(1+d_{mx}^{(2)})\log n/n\right\}^{1/5},$
    \[
    \sup_{y\in\mathcal I_{n,M}(g)}
\left|
\widehat f_{ng}(y)-f_g(y)
\right|
=
O_p\left[
\left\{(1+d_{mx}^{(2)})\log n/n\right\}^{2/5}\right].
    \]
\end{proposition}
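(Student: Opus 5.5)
We split the error at each $y\in\mathcal I_{n,M}(g)$ into a stochastic part and a bias part,
\[
\widehat f_{ng}(y)-f_g(y)=\{\widehat f_{ng}(y)-E\widehat f_{ng}(y)\}+\{E\widehat f_{ng}(y)-f_g(y)\},
\]
and bound the two terms separately, uniformly over the shrinking interval.

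\textbf{Bias term.} Write $E\widehat f_{ng}(y)=h^{-1}\int K_1((t-y)/h)\,dF_{ng}(t)$. This holds because $E\{I(G_i=g)\pi_i^{-1}(g)I(Y_i\le t)\}=\mathrm{pr}(Y_i\le t\mid G_i=g)$, and averaging over $i$ gives $F_{ng}$. Next, integrate by parts:
\[
E\widehat f_{ng}(y)=-h^{-1}\int K_1'(v)\,F_{ng}(y+hv)\,dv .
\]
We then replace $F_{ng}$ by a smooth limit $F_g$ with derivative $f_g$. The supplement's ``local-uniform bandwidth-scale regularity conditions'' are presumably exactly the required ones. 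They should give three things:
\begin{itemize}
\item a bandwidth-scale version of the linear approximation in Assumption~\ref{asp:pdf};
\item $\sup_{|t-\beta_{nq}(g)|\le Ch}|F_{ng}(t)-F_g(t)|=o(h^3)$, or a discrepancy small enough after division by $h$;
\item $f_g$ twice differentiable near $\beta_{nq}(g)$.
\end{itemize}
Since $K_1$ is symmetric with finite second moment, a second-order Taylor expansion of $F_g(y+hv)$ yields $E\widehat f_{ng}(y)=f_g(y)+O(h^2)$ uniformly on $\mathcal I_{n,M}(g)$. The Gaussian tails of $K_1$ are handled by splitting the integral into $|v|\le \log n$ and its complement.

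\textbf{Stochastic term.} Set $Z_i(y)=h^{-1}I(G_i=g)\pi_i^{-1}(g)K_1((Y_i-y)/h)$, so the stochastic part is $n^{-1}\sum_i\{Z_i(y)-EZ_i(y)\}$.
\begin{itemize}
\item \emph{Variance bound.} By Corollary~\ref{cor:two-step-dependency}, $\{Z_i(y)\}$ has a two-step dependency graph. Each $|Z_i|\le C/h$, and $E Z_i^2\lesssim h^{-1}$ because $\mathrm{pr}(Y_i\in\cdot\mid G_i=g)$ is controlled locally by the regularity conditions. By Cauchy--Schwarz over at most $1+d_{mx}^{(2)}$ dependent partners, $\operatorname{Var}(\sum_iZ_i(y))\lesssim n(1+d_{mx}^{(2)})/h$.
\item \emph{Exponential bound at a fixed point.} We use a Bernstein-type inequality for dependency graphs. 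One option is Janson's (2004) bound via fractional chromatic number at most $1+d_{mx}^{(2)}$. Another is to partition the units into $1+d_{mx}^{(2)}$ independent color classes by greedy coloring and apply the ordinary Bernstein inequality within each class. Either way,
\[
\mathrm{pr}\Bigl(\bigl|n^{-1}\textstyle\sum_i(Z_i-EZ_i)\bigr|>x\Bigr)\le 2\exp\!\left\{-\frac{c\,nhx^2}{(1+d_{mx}^{(2)})(1+x)}\right\}.
\]
\item \emph{Uniformity.} Place a grid of $n^{2}$ points on $\mathcal I_{n,M}(g)$. The map $y\mapsto Z_i(y)$ is Lipschitz with constant $\lesssim h^{-2}$, so the discretization error is $O(h^{-2}n^{-2}\cdot n^{-1/2})$, which is negligible. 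A union bound with $x=C\sqrt{(1+d_{mx}^{(2)})\log n/(nh)}$ then gives the rate. The condition $(1+d_{mx}^{(2)})\log n/(nh)\to0$ keeps $x\to 0$, so we stay in the sub-Gaussian regime of Bernstein's inequality.
\end{itemize}
Alternatively, this step could cite the uniform convergence theorem of \citet{lee2019stable}, as the paper hints.

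\textbf{Conclusion and main obstacle.} Adding the two bounds gives the first display. Balancing $\sqrt{(1+d_{mx}^{(2)})\log n/(nh)}$ against $h^2$ gives $h\asymp\{(1+d_{mx}^{(2)})\log n/n\}^{1/5}$ and the stated $2/5$ rate. The main obstacle is the bias step. $F_{ng}$ is a discrete finite-population mixture, so all smoothness has to be imported from the limiting $F_g$ at scale $h$, which is much larger than the $n^{-1/2}$ neighbourhoods controlled by Assumption~\ref{asp:pdf}. The approximation error $\sup|F_{ng}-F_g|$ over an $O(h)$ window must be $o(h^3)$ for the $h^2$ rate to survive the division by $h$. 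I would state this explicitly as the supplement's bandwidth-scale regularity condition.
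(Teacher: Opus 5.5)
Your overall route is the paper's. It uses the stochastic/bias split and integration by parts to reach $-h^{-1}\int K_1'(v)F_{ng}(y+hv)\,dv$. It then applies a Bernstein bound of exactly your displayed form on the two-step dependency graph; the paper takes it from equation~(3.8) of \citet{lee2019stable}, and Janson's fractional-chromatic-number inequality gives the same bound. Finally it uses a polynomial grid with the $h^{-2}$ Lipschitz bound and a union bound; the paper uses mesh $h^2a_n$, and your $n^2$ points work equally well.

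\textbf{The bias step.} This is where the proposal goes wrong. You claim that $\sup|F_{ng}-F_g|$ over an $O(h)$ window \emph{must} be $o(h^3)$, but it need not be. Since $\int K_1'(v)\,dv=0$, the level $F_{ng}(y)$ cancels, and only the increments $F_{ng}(y+t)-F_{ng}(y)$ with $|t|\le Mh$ enter. The supplement's local-uniform bandwidth-scale smoothness is precisely
\[
\sup_{s\in\mathcal I_{n,L}(g)}\ \sup_{|t|\le Mh}\Bigl|F_{ng}(s+t)-F_{ng}(s)-f_g(s)t-\tfrac12 f_g'(s)t^2\Bigr|=O(h^3).
\]
This is a \emph{second}-order expansion of $F_{ng}$ itself; a linear expansion, as in your first bullet, cannot deliver $O(h^2)$. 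Writing $r$ for the remainder inside the display, the identities $\int vK_1'(v)\,dv=-1$ and $\int v^2K_1'(v)\,dv=0$ give $E\widehat f_{ng}(y)-f_g(y)=-h^{-1}\int K_1'(v)\,r(hv)\,dv=O(h^2)$ directly. No smooth limit $F_g$ and no second derivative of $f_g$ are needed.

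Your level condition is not merely stronger than this; it fails in the canonical benchmark. Take the finite population to be an i.i.d.\ superpopulation draw, as in Proposition~\ref{prop:integral_conv_rate}, with $F_g$ the superpopulation CDF. Then $F_{ng}(t)-F_g(t)$ is of exact order $n^{-1/2}\gg h^3$. By contrast, the increment discrepancy over $O(h)$ windows is $O_p\{(h\log n/n)^{1/2}\}$, which equals $h^3$ at the no-interference optimal bandwidth $h\asymp(\log n/n)^{1/5}$. So, as written, your bias argument proves the proposition only under a hypothesis the supplement does not impose. Replacing your second and third bullets by the increment expansion repairs it. For the Gaussian kernel, your $|v|\le\log n$ split needs that expansion on $|t|\le h\log n$, which is the analogue of the paper's enlarged window $|t|\le Mhn^{\alpha}$.

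\textbf{Per-unit variance.} The bound $EZ_i^2\lesssim h^{-1}$ is not available. Given $G_i=g$, $Y_i$ takes at most $2^{d_i+1}$ values, and the regularity conditions constrain only the mixture $F_{ng}$. A unit whose outcome lies within $O(h)$ of $y$ can therefore have $EZ_i^2$ of order $h^{-2}$. What does hold, and what the paper uses, is the aggregate bound $\sum_i\operatorname{Var}(Z_i)\le(\underline\pi h^2)^{-1}n\int K_1^2\{(z-y)/h\}\,dF_{ng}(z)=O(n/h)$. This suffices, because the covariance bound $|\operatorname{cov}(Z_i,Z_j)|\le\{\operatorname{Var}(Z_i)+\operatorname{Var}(Z_j)\}/2$ of Lemma~\ref{lem:CND_bound_rate} and Janson's inequality both involve only the total variance.

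\textbf{The coloring alternative.} It does not give your display ``either way.'' Splitting the threshold equally over $1+d_{mx}^{(2)}$ classes and union-bounding puts $(1+d_{mx}^{(2)})^2$ in the exponent, which yields a worse rate. To keep the factor linear, you need the H\"older argument on the moment generating function that underlies Janson's bound, or a variance-weighted split of the threshold.
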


Comparing with the properties on density estimator leveraged in previous studies on QTE such as~\citet{firpo2007efficient}, Proposition~\ref{prop:kernel_pdf_main} steps forward in two aspects. First, it allows  the existence of interference which accounts for $d_{mx}^{(2)}$ in the rate as it captures correlation between units. Second, it extends point-wise convergence on kernel estimator by deriving the locally uniform convergence rate, which appends $\log n$ term in the convergence rate. As demonstrated in the appendix, the regularity conditions required in Proposition~\ref{prop:kernel_pdf_main} on kernel functions hold for common compactly supported kernels such as Epanechnikov kernel and uniform kernel, as well as for kernels with unbounded support such as Gaussian kernel $K_1(x)=\frac{1}{\sqrt{2\pi}}e^{-x^2/2}$ when bandwidth $h$ is at a polynomial rate with respect to $n.$ Specifically when $d_{mx}^{(2)}=o(n^{1/2})$ as required in Theorem~\ref{thm:asy_nor}, Proposition~\ref{prop:kernel_pdf_main} implies $\sup_{y\in\mathcal I_{n,M}(g)}
\left|
\widehat f_{ng}(y)-f_g(y)
\right|=o_p[\{(\log n)^2/n\}^{1/5}].$

We further impose the following degree and moment condition, which will be used to
analyze the variance estimator. Denote $\psi_q^M=n^{-1}\sum_{i\in N_n} \psi_{i,q}$ as the sample mean of $\psi_{i,q}.$

\begin{assumption}[Degree condition for variance estimation]\label{asp:degree_rate_var}
Suppose that
\[
\left(1+d_{av}^{(2)}\right)
\left(1+d_{mx}^{(2)}\right)^2=o(n)
\]
and
\[
\left(1+d_{mx}^{(2)}\right)\left(\epsilon_n+n^{-1/2}\right)=o(1).
\]
\end{assumption}

The first condition in Assumption~\ref{asp:degree_rate_var} follows from
$d_{mx}^{(2)}=o(n^{1/3})$ because
$d_{av}^{(2)}\le d_{mx}^{(2)}$.  The second condition separately links network
growth to the density-estimation rate and is not implied by the degree rate
alone.  Thus variance estimation generally requires a stronger degree
restriction than the $d_{mx}^{(2)}=o(n^{1/2})$ condition used for the local
quadratic approximation in Theorem~\ref{thm:asy_nor}.

Combining the conditions above, we obtain the following result on the asymptotic behavior
of the variance estimator in~\eqref{eq:var_est}.
\begin{theorem}
\label{thm:var_est}
Suppose the conditions of Theorem~\ref{thm:asy_nor} and
Assumptions~\ref{asp:error_pdf} and~\ref{asp:degree_rate_var} hold. Then
\[
\widehat{\sigma}_n^2=\sigma_n^2+\operatorname{Bias}_n+o_p(1),
\]
where
\[
\operatorname{Bias}_n
=\frac1n\sum_{\ell_A(i,j)\le2}
(E\psi_{i,q}-E\psi_q^M)(E\psi_{j,q}-E\psi_q^M),
\qquad
E\psi_q^M=\frac1n\sum_{i\in N_n}E\psi_{i,q}.
\]
\end{theorem}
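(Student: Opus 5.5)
We split the estimator into an oracle version and a plug-in error. Let $\mathcal{E}_2=\{(i,j):\ell_A(i,j)\le2\}$ and define the oracle estimator
\[
\widetilde\sigma_n^2=n^{-1}\sum_{(i,j)\in\mathcal{E}_2}(\psi_{i,q}-\psi_q^M)(\psi_{j,q}-\psi_q^M),
\]
which uses the true scores. The proof then has two steps. First, show $\widehat\sigma_n^2-\widetilde\sigma_n^2=o_p(1)$. Second, show $\widetilde\sigma_n^2=\sigma_n^2+\operatorname{Bias}_n+o_p(1)$.

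\textbf{Plug-in error.} Write $\widehat\psi_{i,q}-\psi_{i,q}$ as the sum of two pieces:
\begin{itemize}
\item a density part, $I(G_i=s)\pi_i^{-1}(s)\{\widehat f_{ns}(\widehat\beta_{nq}(s))^{-1}-f_s(\beta_{nq}(s))^{-1}\}D_q(Y_i;s)$;
\item an indicator part, $I(G_i=s)\pi_i^{-1}(s)\widehat f_{ns}^{-1}\{I(Y_i\le\beta_{nq}(s))-I(Y_i\le\widehat\beta_{nq}(s))\}$.
\end{itemize}

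The density part is uniformly $O_p(\epsilon_n+n^{-1/2})$ over $i$. This follows from three facts: Theorem~\ref{thm:asy_nor} gives $\widehat\beta_{nq}(s)\in\mathcal I_{n,M}(s)$ with high probability; Assumption~\ref{asp:error_pdf} controls $\widehat f_{ns}-f_s$ on that interval; and the local Lipschitz property in Assumption~\ref{asp:pdf} handles $f_s(\widehat\beta)-f_s(\beta)$. The bilinear form over $\mathcal E_2$ has at most $1+d_{mx}^{(2)}$ terms per row and bounded scores. Its contribution is therefore $O_p\{(1+d_{mx}^{(2)})(\epsilon_n+n^{-1/2})\}=o_p(1)$ by Assumption~\ref{asp:degree_rate_var}.

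The indicator part is nonzero only when $Y_i$ lies between $\beta_{nq}(s)$ and $\widehat\beta_{nq}(s)$. On the event $|\widehat\beta-\beta|\le M/\sqrt n$, it is dominated by $\chi_i:=I\{|Y_i-\beta_{nq}(s)|\le M/\sqrt n\}$. Its contribution is bounded by $n^{-1}\sum_{(i,j)\in\mathcal E_2}C\chi_i\le C(1+d_{mx}^{(2)})n^{-1}\sum_i\chi_i$.

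To bound $n^{-1}\sum_i\chi_i$, use overlap and the local linear expansion of $F_{ns}$ in Assumption~\ref{asp:pdf}. Together these give $E\,n^{-1}\sum_i I(G_i=s)\pi_i^{-1}\chi_i=O(n^{-1/2})$. Its variance is $O((1+d_{av}^{(2)})/n)$ by the two-step dependency graph in Corollary~\ref{cor:two-step-dependency}. Hence this part is $O_p\{(1+d_{mx}^{(2)})n^{-1/2}\}=o_p(1)$. Replacing the random $\widehat\beta$ by the fixed $M/\sqrt n$ window avoids any need for uniformity in $\beta$.

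\textbf{Oracle analysis.} Write $\psi_{i,q}=\mu_i+Z_{i}$, where $\mu_i=E\psi_{i,q}$ and $Z_i$ is centered. Also write $\psi_q^M=\bar\mu+\bar Z$. Expanding $\widetilde\sigma_n^2$ gives four kinds of terms.

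\begin{enumerate}
\item The quadratic term $n^{-1}\sum_{\mathcal E_2}Z_iZ_j$. Its mean is exactly $\sigma_n^2$, because pairs at distance greater than $2$ are independent by Corollary~\ref{cor:two-step-dependency}. Its variance is a sum of fourth-order cumulant-type terms $\operatorname{Cov}(Z_iZ_j,Z_kZ_l)$, which vanish unless $\{i,j\}$ and $\{k,l\}$ are within two-step distance. Counting index quadruples gives a variance of $O\{n^{-1}(1+d_{av}^{(2)})(1+d_{mx}^{(2)})^2\}=o(1)$ by Assumption~\ref{asp:degree_rate_var}.
\item The deterministic term $n^{-1}\sum_{\mathcal E_2}(\mu_i-\bar\mu)(\mu_j-\bar\mu)$. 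This is exactly $\operatorname{Bias}_n$.
\item The cross terms $2n^{-1}\sum_{\mathcal E_2}(\mu_i-\bar\mu)Z_j$. These are linear in $Z$ with coefficients bounded by $C(1+d_{mx}^{(2)})$. By the same dependency graph they have variance $O\{(1+d_{mx}^{(2)})^2(1+d_{av}^{(2)})/n\}=o(1)$.
\item The terms involving $\bar Z$. These are bounded by $|\bar Z|\cdot C(1+d_{mx}^{(2)})$. Since $\bar Z=O_p(n^{-1/2})$ by the variance bound in Theorem~\ref{thm:asy_nor}, they are $o_p(1)$.
\end{enumerate}

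Chebyshev's inequality then yields the claim.

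\textbf{Main obstacle.} The main obstacle is the indicator part of the plug-in error. The summands are non-smooth in $\widehat\beta$ and the $\chi_i$ are network-dependent. Handling this requires the localization to the fixed $M/\sqrt n$ window described above, with the finite-population discreteness controlled by the $o(n^{-1/2})$ jump condition. The fourth-moment combinatorial count for the quadratic term is routine but tedious.
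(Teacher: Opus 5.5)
Your proposal is correct and follows essentially the same route as the paper. You use the same oracle/plug-in split, bounding the plug-in error by $(1+d_{mx}^{(2)})\,n^{-1}\sum_i|\widehat\psi_{i,q}-\psi_{i,q}|$ through a density part and an indicator part localized to a fixed $M/\sqrt n$ window. You also use the same oracle decomposition into $\operatorname{Bias}_n$, a centered quadratic form handled by quadruple counting, a linear cross term, and sample-mean terms; your appeal to the variance upper bound in Theorem~\ref{thm:asy_nor} to get $\bar Z=O_p(n^{-1/2})$ is in fact slightly sharper than the paper's row-sum bound.

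One small slip: the variance bound $O\{(1+d_{av}^{(2)})/n\}$ for $n^{-1}\sum_i I(G_i=s)\pi_i^{-1}(s)\chi_i$ does not by itself give $O_p(n^{-1/2})$. Because the summands are nonnegative, Markov's inequality with the $O(n^{-1/2})$ mean gives that rate directly, as in the paper. In any case the extra term would be $o(1)$ by the first condition of Assumption~\ref{asp:degree_rate_var}.
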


The leading centering term in Theorem~\ref{thm:var_est} appears because the variance estimator centers each score by the
\emph{sample average} $\psi_q^M$, whereas the target variance $\sigma_n^2$ is
defined using the \emph{individual expectations} $E\psi_{i,q}$.
Thus, the bias occurs because the scores $\psi_{i,q}$ may have
\emph{heterogeneous nonzero means}. The true variance subtracts each unit's
own mean $E\psi_{i,q}$, but the estimator subtracts only the common sample
mean $\psi_q^M$.  The supplementary proof also accounts for the covariance
change induced by subtracting the random sample mean and shows that it is
$o(1)$ under Assumption~\ref{asp:degree_rate_var}.  Thus
$\operatorname{Bias}_n$ is the leading, rather than exact finite-sample,
centering contribution.  A similar term arises in estimating the variance of the average treatment effect under interference
\citep{leung2022causal}, analogous
to the finite-sample bias of Neyman-style variance estimators \citep{imbens2015causal}.
In the special case where
$E\psi_{i,q}=E\psi_q^M$, for all $i$,
$\operatorname{Bias}_n=0$.
Similarly, if the score is exactly centered for every unit,
$E\psi_{i,q}=0$, for all $i$, then the bias vanishes.

\medskip

\textbf{Conservative Variance Estimator.}
Theorem~\ref{thm:var_est} shows that the sample-centered covariance estimator in~\eqref{eq:var_est} differs from the true
variance by a centering term $Bias_n$ involving the heterogeneous score means. This term is
generally not identifiable from a single randomized experiment because it depends on
$E\psi_{i,q}$ for each unit. Moreover, unlike the case without interference, $Bias_n$ is not guaranteed to be non-negative. The plug-in estimator therefore need not be asymptotically conservative. We address this problem by bounding the unidentified centering component.

Let $\mathcal P_n=\{(i,j)\in N_n^2:\ell_A(i,j)\le 2\}$ be the set of two-step neighboring pairs, and let $H_n=(h_{ij})_{i,j\in N_n}$ be the two-step dependence matrix, where $h_{ij}=I\{(i,j)\in\mathcal P_n\}.$ Write
\[
P_n=I_n-\frac1n\mathbf 1\mathbf 1^\top,
\qquad
C_n=P_nH_nP_n,
\]
and let
\[
C_n=U_n\operatorname{diag}(\lambda_{n1},\ldots,\lambda_{nn})U_n^\top
\]
be an eigendecomposition. We isolate the negative spectral component by setting every positive eigenvalue equal to zero:
\begin{equation}
\label{eq:negative_spectral_part}
C_n^{-}
=
U_n\operatorname{diag}\{\min(\lambda_{n1},0),\ldots,
\min(\lambda_{nn},0)\}U_n^\top.
\end{equation}
Thus $C_n^{-}$ is negative semidefinite. It is convenient to write its nonnegative magnitude and the positive part of $C_n$ as
\[
K_n^{-}=-C_n^{-}\succeq0,
\qquad
C_n^{+}=C_n+K_n^{-}
=U_n\operatorname{diag}\{\max(\lambda_{n1},0),\ldots,
\max(\lambda_{nn},0)\}U_n^\top\succeq0.
\]
This eigentruncation is analogous to the modified network-HAC construction of
\citet{gao2025causal}, but here it is applied to the centered quantile-score
dependence matrix.

For $\widehat\psi=(\widehat\psi_{1,q},\ldots,
\widehat\psi_{n,q})^\top$, define the centered score vector
$\widehat v=P_n\widehat\psi$. Since
$\widehat\sigma_n^2=n^{-1}\widehat v^\top C_n\widehat v$, define
\begin{equation}
\label{eq:negative_spectral_adjustment}
\widehat B_{-,n}
=
\frac1n\widehat v^\top K_n^{-}\widehat v
=
-\frac1n\widehat v^\top C_n^{-}\widehat v
\ge0.
\end{equation}
The proposed conservative variance estimator is
\begin{equation}
\label{eq:plug_in_cons_var}
\widehat\sigma^2_{n,\mathrm{cons}}
=
\widehat\sigma_n^2+\widehat B_{-,n}
=
\frac1n\widehat v^\top C_n^{+}\widehat v.
\end{equation}
Consequently, the estimator itself is nonnegative. The adjustment is exactly
zero when $C_n$ is positive semidefinite and otherwise charges each negative
eigendirection according to its own eigenvalue, rather than charging every
direction by the magnitude of the most negative eigenvalue.

The corresponding conservative $(1-\alpha)$ confidence interval is then constructed by
 \[ \mathrm{CI}_{1-\alpha}^{\mathrm{cons}}(g,g') = \Big[ \widehat\tau_{nq}(g,g') - z_{1-\alpha/2} \widehat\sigma_{n,\mathrm{cons}}/\sqrt{n}, \quad \widehat\tau_{nq}(g,g') + z_{1-\alpha/2} \widehat\sigma_{n,\mathrm{cons}}/\sqrt{n}\Big]. 
 \]

The asymptotic conservativeness on the adjusted variance estimator is shown as follows. Let
\begin{equation}
\label{eq:negative_part_row_norm}
\rho_n^{-}
=
\|K_n^{-}\|_{\infty}
:=
\max_{i\in N_n}\sum_{j\in N_n}|(K_n^{-})_{ij}|,
\end{equation}
which controls the possibly dense matrix created by
eigenvalue truncation. The following theorem proves the asymptotic conservativeness of the adjusted varaince estimator under spectral condition.

\begin{theorem}
\label{thm:cons_var}
    Suppose the assumptions of Theorem~\ref{thm:var_est} hold and
$\inf_n\sigma_n^2>0$. Let $\epsilon_n$ be the rate sequence in
Assumption~\ref{asp:error_pdf}, and suppose that the following condition holds:
\begin{equation}
\rho_n^{-}
\left\{
\epsilon_n+n^{-1/2}
+
\left(\frac{1+d_{mx}^{(2)}}{n}\right)^{1/2}
\right\}
=o(1).
\label{eq:cons_rate_condition}
\end{equation}
Then
\[
    \sigma_n^2
    \le\widehat\sigma^2_{n,\mathrm{cons}}+o_p(1),
\] which implies the coverage of the confidence interval $$
\liminf_{n\to\infty}\Pr(\tau_{nq}(g,g')\in \mathrm{CI}_{1-\alpha}^{\mathrm{cons}}(g,g'))\ge 1-\alpha.
$$
\end{theorem}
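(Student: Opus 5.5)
We plan to reduce the statement to controlling $\widehat B_{-,n}$ against its population analogue. By Theorem~\ref{thm:var_est}, $\widehat\sigma_n^2=\sigma_n^2+\operatorname{Bias}_n+o_p(1)$. Write $m=(E\psi_{1,q},\ldots,E\psi_{n,q})^\top$ and $\bar m=P_nm$. Since $h_{ij}=I\{\ell_A(i,j)\le2\}$, the bias term has the form $\operatorname{Bias}_n=n^{-1}\bar m^\top H_n\bar m=n^{-1}\bar m^\top C_n\bar m$, using $P_n^2=P_n$. Splitting $C_n=C_n^{+}-K_n^{-}$ gives $\operatorname{Bias}_n\ge -n^{-1}\bar m^\top K_n^{-}\bar m=:-B_{-,n}$. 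The target is therefore
\[
\widehat\sigma^2_{n,\mathrm{cons}}=\widehat\sigma_n^2+\widehat B_{-,n}\ge \sigma_n^2-B_{-,n}+\widehat B_{-,n}+o_p(1),
\]
and it suffices to prove $\widehat B_{-,n}-B_{-,n}\ge -o_p(1)$. In fact we aim for $|\widehat B_{-,n}-B_{-,n}|=o_p(1)$. Once $\sigma_n^2\le\widehat\sigma^2_{n,\mathrm{cons}}+o_p(1)$ holds, coverage follows from Theorem~\ref{thm:asy_nor} and $\inf_n\sigma_n^2>0$. The ratio $\sigma_n/\widehat\sigma_{n,\mathrm{cons}}$ is then at most $1+o_p(1)$, so $\sqrt n|\widehat\tau-\tau|/\widehat\sigma_{n,\mathrm{cons}}$ is asymptotically dominated by $|N(0,1)|$, and Slutsky's lemma gives the $\liminf$ bound.

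To control the quadratic-form difference, we decompose $\widehat v=\bar m+e+r$. Here $e=P_n(\psi-m)$ is the centered oracle noise, with $\psi=(\psi_{1,q},\ldots,\psi_{n,q})^\top$, and $r=P_n(\widehat\psi-\psi)$ is the plug-in error. Expanding $n^{-1}\widehat v^\top K_n^{-}\widehat v-n^{-1}\bar m^\top K_n^{-}\bar m$ produces cross and square terms. Each is bounded with the row-norm bound $|x^\top K y|\le \|K\|_\infty\|x\|_2\|y\|_2$, valid for symmetric $K$ because $\|K\|_2\le\|K\|_\infty$, or with the sharper $\ell_\infty$–$\ell_1$ bound $|x^\top Ky|\le\rho_n^{-}\|x\|_\infty\|y\|_1$ where that helps. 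The entries of $\bar m$ and $\psi$ are uniformly bounded by $B=4/(\underline\pi c_f)$ from Assumptions~\ref{asp:overlap} and~\ref{asp:pdf}, so $n^{-1/2}\|\bar m\|_2=O(1)$.

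\textbf{Plug-in error.} We bound $\max_i|\widehat\psi_{i,q}-\psi_{i,q}|$ as in the proof of Theorem~\ref{thm:var_est}. The density part contributes $O_p(\epsilon_n)$ by Assumption~\ref{asp:error_pdf}, after evaluating at $\widehat\beta_{nq}$, which lies in $\mathcal I_{n,M}$ with high probability by Theorem~\ref{thm:asy_nor}. The indicator part $I(Y_i\le\widehat\beta)-I(Y_i\le\beta)$ is nonzero only for units with $Y_i$ in an $O_p(n^{-1/2})$ window. By Assumption~\ref{asp:pdf}, the weighted count of such units is $O_p(n^{1/2})$. Hence $n^{-1}\|r\|_1=O_p(\epsilon_n+n^{-1/2})$ and $\|r\|_\infty=O_p(1)$, and the terms involving $r$ are $O_p\{\rho_n^{-}(\epsilon_n+n^{-1/2})\}$.

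\textbf{Oracle noise (main obstacle).} The terms $n^{-1}\bar m^\top K_n^{-}e$ and $n^{-1}e^\top K_n^{-}e$ are the main difficulty, because $K_n^{-}$ may be dense while $e$ has only two-step dependence. For the linear term we will compute its variance using Corollary~\ref{cor:two-step-dependency}: with $a=K_n^{-}\bar m$, which satisfies $\|a\|_\infty\le\rho_n^{-}B$, we get $\operatorname{Var}(a^\top e)\le \|a\|_\infty^2\, n(1+d_{mx}^{(2)})B^2$. This gives $n^{-1}|\bar m^\top K_n^{-}e|=O_p\{\rho_n^{-}((1+d_{mx}^{(2)})/n)^{1/2}\}$. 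The centering by $P_n$ adds only a mean-shift term of the same order. For the quadratic term we will write $e^\top K_n^{-}e=\operatorname{tr}(K_n^{-}\Sigma_e)+\{e^\top K_n^{-}e-E(\cdot)\}$. The trace is at most $\rho_n^{-}\cdot n B^2$ times a factor that is $O(n^{-1}(1+d^{(2)}_{mx}))$ per row after normalization, with $\Sigma_e$ band-limited to two-step pairs. The fluctuation is handled by a fourth-moment count over quadruples in which each index is within two steps of another. If that count proves too coarse, the fallback is the cruder bound $n^{-1}e^\top K_n^{-}e\le \rho_n^{-}n^{-1}\|e\|_2^2$ combined with Cauchy–Schwarz against the cross terms, which is where the factor $((1+d_{mx}^{(2)})/n)^{1/2}$ in~\eqref{eq:cons_rate_condition} is designed to enter. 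Since only a one-sided bound is needed, we can also drop this nonnegative quadratic term whenever it appears with a favorable sign.
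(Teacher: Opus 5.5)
Your reduction and most of your estimates match the paper's proof. Theorem~\ref{thm:var_est} together with $C_n=C_n^{+}-K_n^{-}$ gives $\operatorname{Bias}_n\ge -n^{-1}\bar m^\top K_n^{-}\bar m$. The plug-in terms are bounded by $\rho_n^{-}$ times an $\ell_\infty$--$\ell_1$ product, giving $O_p\{\rho_n^{-}(\epsilon_n+n^{-1/2})\}$. The linear term $n^{-1}\bar m^\top K_n^{-}e$ is controlled by its second moment under two-step dependence, which is exactly where $\rho_n^{-}\{(1+d_{mx}^{(2)})/n\}^{1/2}$ in \eqref{eq:cons_rate_condition} enters; the $P_n$-centering there is exact, since $K_n^{-}\mathbf 1=0$. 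Coverage then follows by Slutsky's lemma.

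What does not work is your treatment of the ``main obstacle.'' The two-sided aim $|\widehat B_{-,n}-B_{-,n}|=o_p(1)$ is false in general, because $n^{-1}e^\top K_n^{-}e$ does not vanish. Its mean is $n^{-1}\operatorname{tr}(K_n^{-}\Gamma_n)$ with $\Gamma_n=\operatorname{Cov}(\psi)$. This is bounded by $O(\rho_n^{-})$, but it carries no extra factor $O\{(1+d_{mx}^{(2)})/n\}$, so your trace estimate is incorrect. For nearly uncorrelated scores of variance $\gamma$, it is about $\gamma\, n^{-1}\sum_{k:\lambda_{nk}<0}|\lambda_{nk}|$. That is of order one whenever a non-negligible fraction of the spectrum of $C_n$ is negative, as happens, e.g., for ring lattices and sparse regular graphs. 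Condition~\eqref{eq:cons_rate_condition} only controls $\rho_n^{-}$ multiplied by vanishing quantities, so it cannot remove this term. Your fallback $\rho_n^{-}n^{-1}\|e\|_2^2$ is also merely $O_p(\rho_n^{-})$. This term is in fact a main source of the conservativeness of $\widehat\sigma^2_{n,\mathrm{cons}}$.

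So the step you list last is not optional; it is the proof. Since $K_n^{-}\succeq0$, $n^{-1}e^\top K_n^{-}e$ enters $\widehat B_{-,n}-B_{-,n}$ with a plus sign and is simply discarded, and likewise $n^{-1}r^\top K_n^{-}r$. This is exactly what the paper does: it keeps $n^{-1}e^\top K_n^{-}e$ together with $n^{-1}\bar m^\top C_n^{+}\bar m$ in a nonnegative remainder. With that step made the argument rather than an afterthought, and the two-sided claim dropped, your proof coincides with the paper's.
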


\begin{algorithm}[ht]
  \caption{Quantile Exposure Effect Estimation}
  \label{algo:main}

  \KwIn{Outcomes $\bm{Y}$, treatments $\bm{W}$,
    covariates $\bm{X}$, adjacency matrix $A$,
    sample size $n$, quantile level $q$,
    exposure mapping $\Phi_n$, exposure levels $g,g'$,
    and density-estimation bandwidth $h$.}

  \KwOut{Effect estimator $\widehat{\tau}$ and selected
    variance estimator
    $\widehat{\sigma}_{n,\mathrm{sel}}^2$.}

  Compute exposure levels
  $G_i=\Phi_n(i,\bm{W},A)$ for $i=1,\ldots,n$\;

  For each $s\in\{g,g'\}$, compute
  $\pi_i(s)=\Pr(G_i=s)$ and estimate
  $\widehat{\beta}_{nq}(s)$
  using~\eqref{eq:minimization}\;

  Compute
  $\widehat{\tau}
    =\widehat{\beta}_{nq}(g)
     -\widehat{\beta}_{nq}(g')$\;

  For each $s\in\{g,g'\}$, estimate $\widehat{f}_{ns}$
  with bandwidth $h$ and a kernel satisfying
  Proposition~\ref{prop:kernel_pdf_main}, such as
  the Epanechnikov kernel
  $K_1(t)=\frac{3}{4}(1-t^2)I(|t|\leq 1)$\;

  For each $s\in\{g,g'\}$, compute the influence scores
  \[
    \widehat{\psi}_{i,q}(Y_i,s)
    =
    \frac{
      I(G_i=s)
      \bigl\{q-I(Y_i\leq\widehat{\beta}_{nq}(s))\bigr\}
    }{
      \pi_i(s)\,
      \widehat{f}_{ns}\bigl(\widehat{\beta}_{nq}(s)\bigr)
    },
    \qquad i=1,\ldots,n,
  \]
  and form the contrast score
  $\widehat{\psi}_{i,q}(Y_i,g)
   -\widehat{\psi}_{i,q}(Y_i,g')$\;

  Compute $\widehat{\sigma}_{n,\mathrm{sel}}^2$
  using either the plug-in estimator
  in~\eqref{eq:var_est} or the conservative estimator
  in~\eqref{eq:plug_in_cons_var},
  as selected by the analyst\;

  \Return{$\widehat{\tau}$,
    $\widehat{\sigma}_{n,\mathrm{sel}}^2$}\;
\end{algorithm}

\section{Simulation}
\label{sec:simu}
In the first simulation, we evaluate the performance of our QEE estimator and inference in random graph without community structure. At the beginning of
the experiment, we generate a connected degree-four random regular graph with
$n=1000$ units such that each unit has exactly four neighbors. Then with a total number of 500 replications, in every round of replication, we independently draw the covariates $X_i\sim N(0,I_p)$ with $p=6$ and assign treatments independently as
$W_i\sim\operatorname{Bernoulli}(1/2)$.
Let $N_i=\sum_{j\ne i}A_{ij}=4$ and define
\[
 G_{i1}=W_i,\qquad
 G_{i2}=\frac1{N_i}\sum_{j\ne i}A_{ij}W_j.
\]
Thus $G_{i2}$ is the treated proportion in unit $i$'s neighborhood. Since each unit has four neighbors, the $G_{i2}\in\mathcal{G}_2:=\{0,1/4,1/2,3/4,1\}$ in any replication. The outcomes
are generated by the following linear model
\[
 Y_i(\bm W)=X_i^\top\boldsymbol 1_p+5G_{i1}+25G_{i2}+\epsilon_i,
\]
where $\epsilon_i$ is an independent error. To compare our method under different tail distributions, we consider three types of error distributions, Gaussian, Student-$t$, and Slash errors, all standardized to
have variance one. Specifically, $\epsilon_i=Z_i$,
$\epsilon_i=T_{4i}/\sqrt{2}$, and
$\epsilon_i=Z_i/(\sqrt{5}U_i^{2/5})$, where
$Z_i\sim N(0,1)$, $T_{4i}\sim t_4$, and $U_i\sim\operatorname{Uniform}(0,1)$
are independent.
We evaluate two exposure mappings on these same outcomes.  First, we consider a
two-dimensional mapping $(G_{i1},G_{i2})$ containing all treatment-assignment
information entering the outcome. For fixed quantile $q\in (0,1),$ we evaluate the Quantile Direct Effect (QDE) under own-treatment contrast at neighborhood
exposure $t=1/2$ defined as
\[
\mathrm{QDE}_{1/2}
 =\tau_{nq}\bigl((1,1/2),(0,1/2)\bigr).
\]
From the linear model of the potential outcome, the true value of $\mathrm{QDE}_{1/2}(q)$ equals to $5$ at every quantile. From the independent Bernoulli trials, the generalized
propensity score is $\pi_i(w,1/2)=3/16.$ Second, consider a scaler own-treatment exposure mapping $G_i=W_i$ and consider the following quantile exposure effect defined by its contrast
\[
 \mathrm{QEE}=\tau_{nq}(1,0).
\]
This QEE can be viewed as an integration of the exposure-specific quantile direct effects over the neighboring assignments conditional on $W_i$.  Because $W_i$ is independent of its neighbors' treatments and the true value of $\mathrm{QDE}_t=5$ for every $t\in\mathcal{G}_2$, hence the true value of $\mathrm{QEE}$ also equals to $5$ at every quantile. 

For both proposed contrasts, we use the exact generalized propensity score,
the Epanechnikov density estimator, and the plug-in and conservative two-step
network variance estimators proposed in (\ref{eq:var_est}) and (\ref{eq:plug_in_cons_var}) for inference.  To evaluate the necessity of taking account of interference in analyzing quantile effects, we compare our method with the inference based on previous QTE estimator without taking account of interference. We use the semiparametric IPW estimator of \citet{firpo2007efficient}, including its
propensity-score first-step adjustment and conventional i.i.d. variance
formula.  We evaluate nominal $90\%$ intervals at
$q\in\{0.2,0.5,0.8\}$. 

\begin{table}[ht]
\centering
\caption{Coverage and average interval width for the quantile effects in random regular graph.}
\label{tab:random_graph_linear}
\setlength{\tabcolsep}{2.2pt}
\renewcommand{\arraystretch}{1.02}
\scriptsize
\begin{tabular}{llrrrrrrrrrr}
\toprule
& & \multicolumn{4}{c}{$\mathrm{QDE}_{1/2}$}
& \multicolumn{4}{c}{$\mathrm{QEE}(1,0)$}
& \multicolumn{2}{c}{QTE} \\
\cmidrule(lr){3-6} \cmidrule(lr){7-10} \cmidrule(l){11-12}
& & \multicolumn{2}{c}{Plug-in} & \multicolumn{2}{c}{Conservative}
& \multicolumn{2}{c}{Plug-in} & \multicolumn{2}{c}{Conservative}
& \multicolumn{2}{c}{\citet{firpo2007efficient}} \\
\cmidrule(lr){3-4} \cmidrule(lr){5-6} \cmidrule(lr){7-8}
\cmidrule(lr){9-10} \cmidrule(l){11-12}
Distribution & $q$ & Coverage & Width & Coverage & Width
& Coverage & Width & Coverage & Width & Coverage & Width \\
\midrule
\multirow{3}{*}{Gaussian} & .2
& 90.6 & 1.303 & 98.6 & 1.897
& 91.0 & 2.503 & 98.0 & 3.309 & 86.6 & 2.008 \\
& .5 & 91.4 & 1.170 & 98.6 & 1.707
& 93.8 & 2.403 & 98.6 & 3.099 & 85.6 & 1.769 \\
& .8 & 90.2 & 1.279 & 97.8 & 1.866
& 90.2 & 2.490 & 98.2 & 3.291 & 83.4 & 1.993 \\
\addlinespace
\multirow{3}{*}{Student-t} & .2
& 88.8 & 1.293 & 98.0 & 1.879
& 91.6 & 2.501 & 97.6 & 3.307 & 83.2 & 2.009 \\
& .5 & 91.4 & 1.164 & 99.0 & 1.696
& 92.6 & 2.378 & 97.8 & 3.080 & 82.0 & 1.763 \\
& .8 & 91.2 & 1.282 & 99.4 & 1.867
& 89.2 & 2.482 & 97.4 & 3.290 & 82.0 & 1.995 \\
\addlinespace
\multirow{3}{*}{Slash} & .2
& 90.4 & 1.281 & 99.0 & 1.856
& 91.8 & 2.508 & 97.8 & 3.318 & 82.4 & 2.009 \\
& .5 & 88.6 & 1.147 & 98.6 & 1.669
& 94.0 & 2.388 & 98.6 & 3.085 & 84.0 & 1.758 \\
& .8 & 88.0 & 1.264 & 97.6 & 1.838
& 90.8 & 2.489 & 98.2 & 3.292 & 85.4 & 1.990 \\
\addlinespace[2pt]
\bottomrule
\end{tabular}
\end{table}

Table~\ref{tab:random_graph_linear} reports coverage and average interval width
for the exposure specific direct effect $\mathrm{QDE}_{1/2}$ and the own-treatment QEE, together with the
conventional QTE without interference.  We make the following
important observations from the results.  First, QTE undercovers in all nine
distribution--quantile combinations, with coverage between $82.0\%$ and
$86.6\%$.  In contrast, plug-in coverage ranges from $88.0\%$ to $91.4\%$ for
$\mathrm{QDE}_{1/2}$ and from $89.2\%$ to $94.0\%$ for QEE. Their mean absolute
coverage errors are $1.09\%$ and $1.84\%$ respectively, compared
with $6.16\%$ for QTE. It implies that the own-treatment QEE remains
substantially better calibrated even after the neighboring assignments are
integrated out and the treated and control units are more balanced as the propensity arises from $3/16$ to $1/2$.

Second, the efficiency gain reflects the amount of exposure information retained in considering interference. With the neighborhood exposure conditioning on $G_{i2}=1/2$ removes the large spillover component from the
within-exposure outcome variation, the plug-in $\mathrm{QDE}_{1/2}$
intervals remain $33.9\%$ to $36.5\%$ shorter than the QTE intervals with higher coverage. By contrast, the integrated QEE retains variation from the spillover component, and its plug-in intervals are therefore $24.4\%$ to $35.9\%$ wider than the QTE intervals without taking account interference. However, the latter's smaller width occurs together with systematic
undercoverage and hence does not provide evidence of valid precision. The conservative intervals provide the intended coverage protection, with
coverage of $97.6\%$ to $99.4\%$ for $\mathrm{QDE}_{1/2}$ and
$97.4\%$ to $98.6\%$ for QEE. Relative to the corresponding plug-in intervals,
their widths are $44.9\%$ to $45.9\%$ greater for $\mathrm{QDE}_{1/2}$ and $29.0\%$ to $32.5\%$ greater for QEE. These patterns remain stable across all distributions. Our method also provide accurate point estimation. Across the nine
distribution--quantile cases, the largest absolute bias is $0.025$ for
$\mathrm{QDE}_{1/2}$ and $0.091$ for QEE. Bias and empirical SD are reported in Supplementary
Table~\suppref{tab:random_graph_linear_diagnostics}. We also analyze the performance on quantile spillover effect, and provide an additional simulation under 
nonlinear treatment--exposure interaction in community-network, with similar observation from the results as in the first simulation. See Appendix~\suppref{sec:nonlinear_community} for details.

We also compare the variance estimators at different quantiles and network density. In this simulation, all $n=1000$ units are connected in a ring, and each of them is linked with $d/2$ nearest neighbors on either side, which forms a degree-$d$ ring with
$d\in\{2,4,6,8\}.$ Such graph mimics the social network in which each individual is connected to a small pack of neighbors and the network itself forms a large community. Throughout 500 replications, we compare the QEE plug-in and conservative variance estimates with the empirical sampling
variance of the QEE estimator and the variance estimate reported from previous QTE inference method without interference.  

\begin{figure}[ht]
\centering
\begin{minipage}[t]{0.485\linewidth}
\centering
\includegraphics[width=\linewidth]{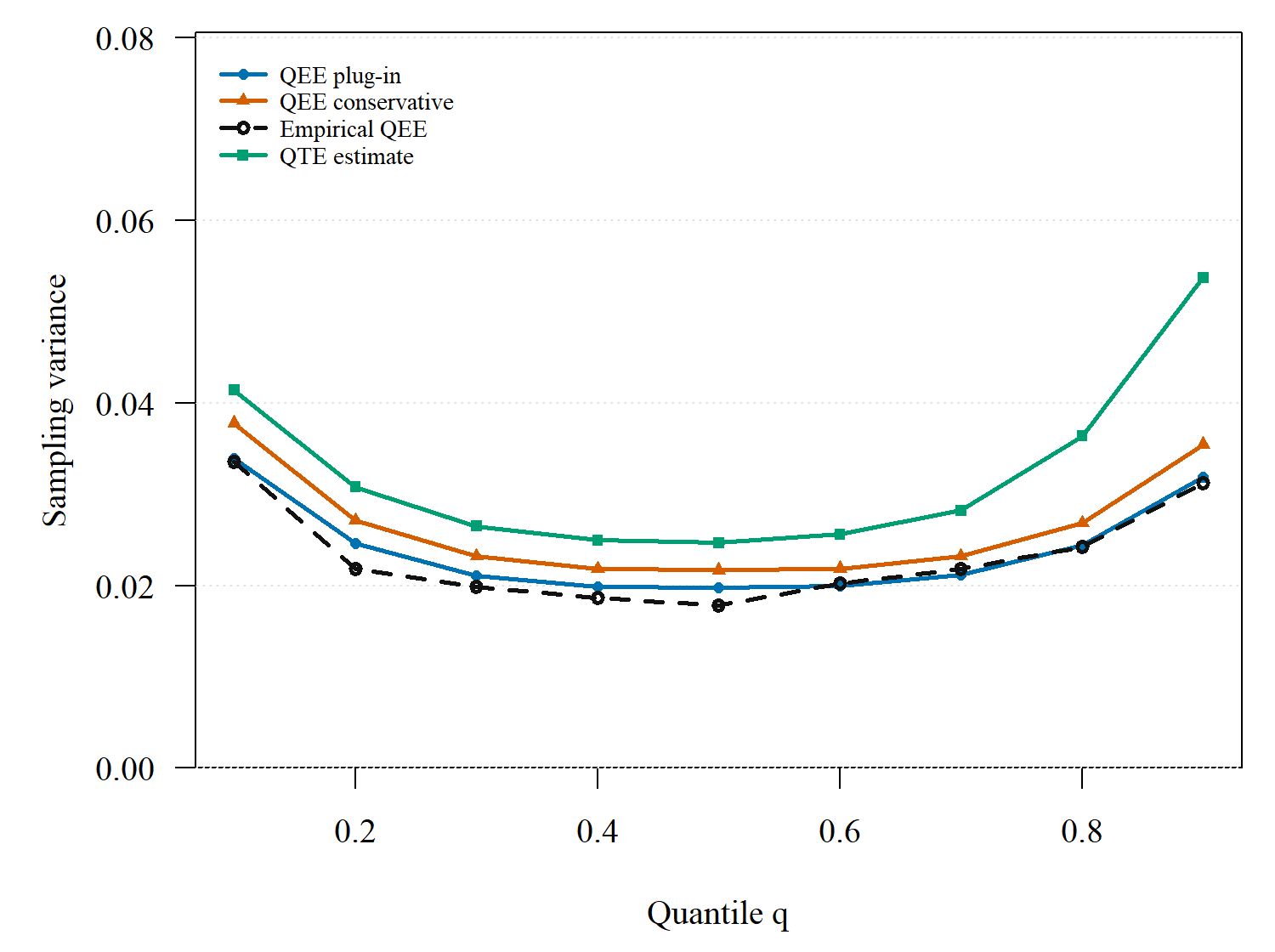}
\end{minipage}\hfill
\begin{minipage}[t]{0.485\linewidth}
\centering
\includegraphics[width=\linewidth]{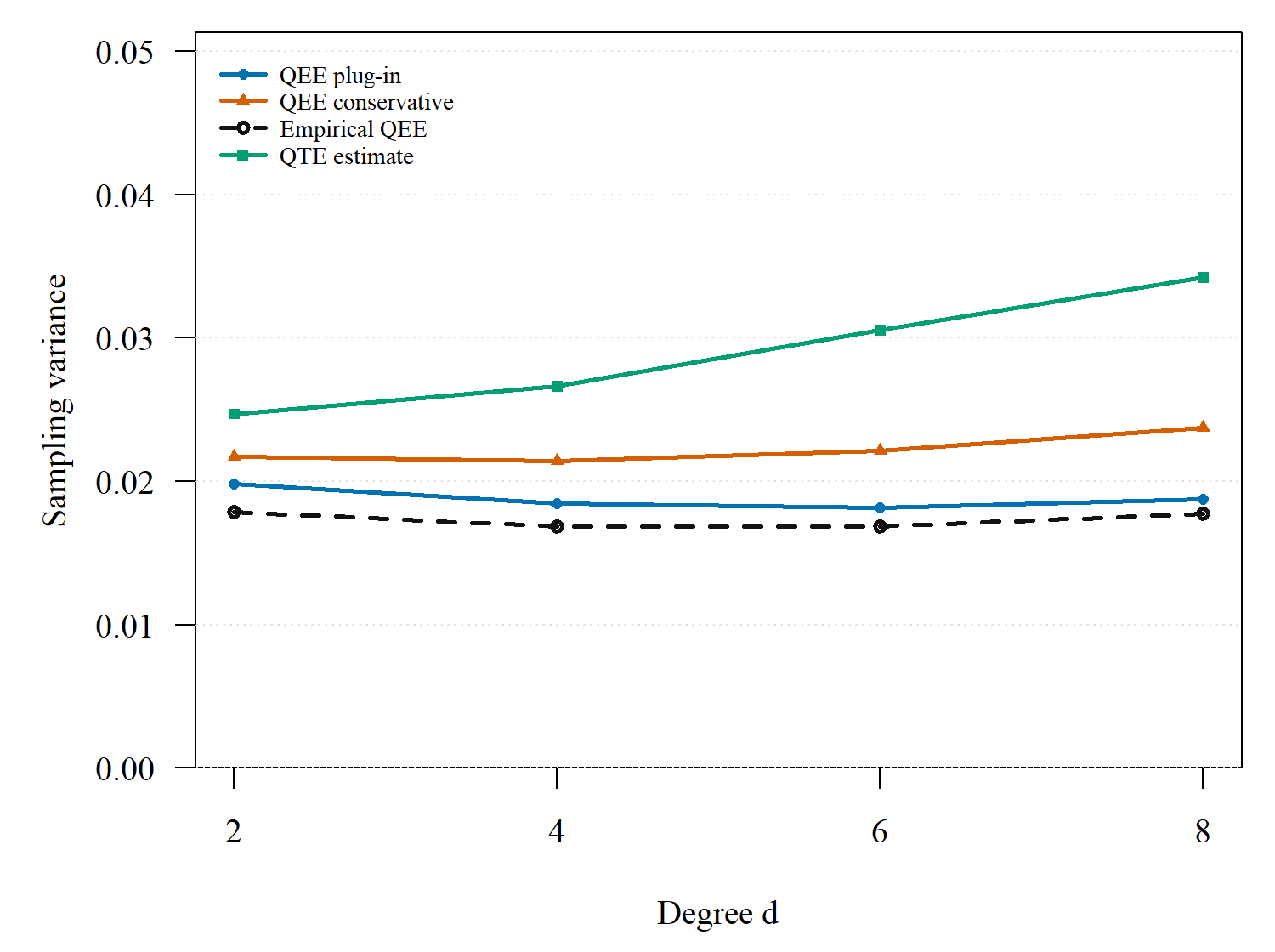}
\end{minipage}
\caption{Variance comparison in non-community design.  The left
panel varies $q$ from $0.1$ to $0.9$ at degree $d=2$; the right panel varies
$d\in\{2,4,6,8\}$ at $q=0.5$.}
\label{fig:noncommunity_variance_comparison}
\end{figure}

The result of variance comparison is shown in Figure~\ref{fig:noncommunity_variance_comparison}, with the complete figures given in Appendix~\suppref{sec:noncommunity_variance_comparison}. The plug-in curve closely tracks the empirical QEE variance throughout the
quantile range, while the conservative curve remains above it. At the median, the plug-in estimate is comparatively stable across degrees, while the conservative variance estimate rises slightly. Meanwhile, the QTE variance estimate lies above both QEE variance estimates across varying quantiles and degrees, and becomes progressively larger more rapidly than QEE variance estimates as the degree increases. This suggests the stability of our variance estimates under increasing degree size by adding the correlation from neighboring terms with distance $\ell_A(i,j)\le 2$ into variance estimation.

\section{Real Data Example}
\label{sec:exp}
We evaluate the performance of our method in a real-world dataset originated from a classical network experiment studied in previous works on average treatment effects with interference~\citep{paluck2016changing,leung2022causal,gao2025causal}. The experiment studies the effect of attending a series of anti-conflict course to the social norm of students at school. The experiment enrolls 56 schools with 28 of them assigned as treated schools in block randomization. Then in the second stage, a group of eligible students in the treated schools are selected based on covariates, and half of the eligible students in each treated school are assigned with treatments. Treated students are subsequently invited to attend a series of anti-conflict courses to help them identify social conflicts at school as well as provide suggestions on how to handle such conflicts.

While previous studies focus on wristband-wearing status $Y_{\text{wr}}$ as outcome, which indicates whether the students wear a wristband as a reward to those observed engaging in anti-conflict behaviors, this outcome is binary and thus not suitable to interpret the quantile treatment effects. Meanwhile, the dataset also collects the records on whether the student has been nominated as violating any of the 51 different behaviors that are against the social norm, such as bullying, spreading rumor, etc. Let $S_{\text{anti}}$ be the amount of records each individual has in terms of the anti-norm behaviors above, ranging from a minimum of 0 record to a maximum of 27 records, and let $Y_{\text{anti}}=S_{\text{anti}}/51$ be its normalized score with regard to the amount of behaviors. To ensure the outcome representing the overall performance of students in terms of abiding by social norms while is suitable for interpreting quantile effects, instead of wristband-wearing status, in this paper, we focus on a social-norm summary score $Y_{\text{sn}}=Y_{\text{wr}}-Y_{\text{anti}}.$ Students wearing wristband as a reward to anti-conflict behaviors or have less records on anti-norm behaviors result in a larger summary score $Y_{\text{sn}}$, therefore, the score evaluates the comprehensive performance of students in terms of avoiding conflicts as well as anti-norm behaviors.

To evaluate the overall impact of the education program on the social-norm score, in the first experiment, we focus on the treated schools and eligible students that receives either treatment or control in the experiment. This results in an amount of $n=1456$ students, with half of them assigned with treatment and therefore required to attend the educational courses. In the survey, each student lists at most 10 closest friends. We pick out the eligible students from the friend lists, and connect a pair of students if either of them is listed by the other as close friend. This forms a friend network with adjacency matrix $A$ among all eligible students. The average degree is 1.492, with the largest degree to be 8. Similar with~\cite{leung2022causal,gao2025causal}, we consider the exposure mapping $G_i=I(\sum_{j}A_{ij}W_j>0),$ and therefore $G_i=1$ if either the student or one of his / her friends receives treatment. The quantile exposure effect of interest is $\tau_q(1,0)$ with $q\in (0,1)$ as defined in (\ref{eq:qte_def}). We refer such effect as overall quantile exposure effect.

\begin{figure}
    \centering
    \includegraphics[width=0.3\linewidth]{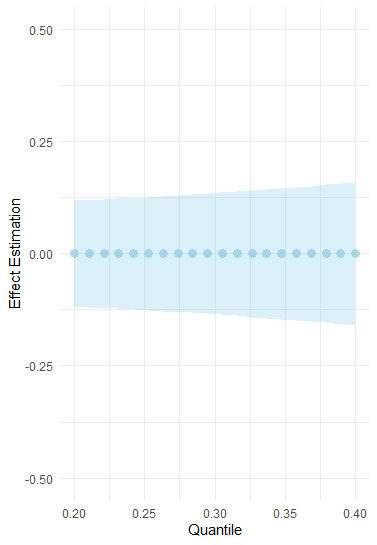}
    \includegraphics[width=0.3\linewidth]{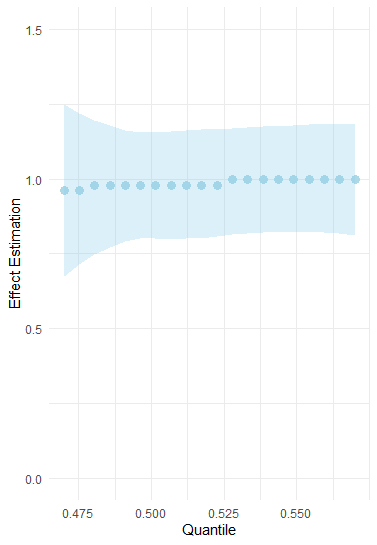}
    \includegraphics[width=0.3\linewidth]{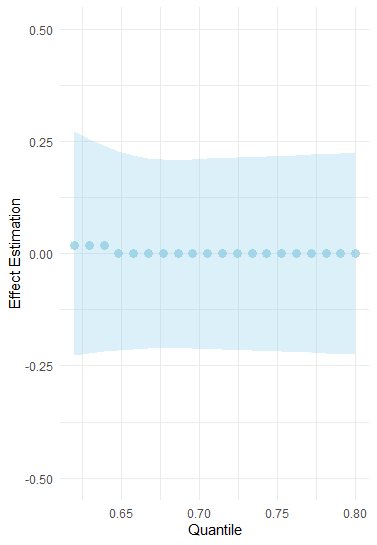}
    \caption{Overall quantile exposure effects $\hat{\tau}_q(1,0)$ at low (left), medium (middle) or high (right) quantiles.}
    \label{fig:experiment}
\end{figure}

We estimate the overall effect as well as its variance at different quantiles following the settings described above. The result is shown in Figure~\ref{fig:experiment}, featuring the key observations as follows. First, the overall quantile exposure effects vary across different levels of quantiles. At low and high quantile, the quantile exposure effects approximates zero while a positive effect is observed for students at medium quantile level. This implies that the education course has a significant effect on improving the social norm for students with a medium level of social-norm obedience, while the effect of the course is less evident for either more courteous or more brutal students comparing to the average level. This aligns with the conclusion in previous studies that the courses helps improve the social norm of students, as the average exposure effect is $0.143$ following the estimator in~\citet{leung2022causal}. Comparing to the average effect, the quantile exposure effects provide us further implication that the improvement on social-norm score mainly comes from the students with medium level of social norm, while the course may be less effective for the students who already have a good social manner or with an extremely bad social manner. Second, the result showcases a relatively low standard deviation in most cases. The standard deviations at quantile $q=0.3,0.5,0.8$ are $0.14,0.19,0.22$ respectively. Meanwhile, we also observe a jump in quantile exposure effect estimation at about $q=0.45$ and $q=0.6$ due to the discontinuity in the empirical distribution of outcome, which results in a larger variance estimation at this jumping quantiles.

\begin{table}[H]
    \centering
    \tbl{Quantile exposure effect in treated and spillover populations}
\begin{tabular}{lcccc}
    \toprule
      & \multicolumn{2}{c}{Treated} & \multicolumn{2}{c}{Spillover} \\
      Effect & Est. & SD & Est. & SD \\
      \midrule
      QEE - low  & 0.01 & 0.10 & 0.00 & 0.10 \\
      QEE - medium & 0.98 & 0.12 & 0.98 & 0.28\\
      QEE - high & 0.01 & 0.13 & 0.01 & 0.22 \\
      AEE  & 0.14 & 0.04 & 0.06 & 0.02 \\
      \bottomrule
    \end{tabular}
    \begin{tabnote}
     QEE - quantile: Quantile exposure effect at different levels of quantiles, AEE: Average exposure effect in~\citet{leung2022causal}, sd: estimated standard deviation. The low / medium / high quantiles intreated and spillover population are set as 0.2 / 0.5 / 0.8 and 0.3 / 0.6 / 0.9 respectively. 
    \end{tabnote}
    \label{tab:exp_res}
\end{table}

Next, we take a few extra settings to evaluate the direct and spillover effects on anti-social norm in quantiles. For direct quantile exposure effect, let exposure mapping $G_i=W_i$, and following the previous setting we choose the students in treated population who are eligible to treatment in assessment. Similar as the overall effect, this results in a set of $n=1456$ students with half of them receiving courses as treatment. For spillover quantile exposure effect, we focus on $G_i=I(\sum_{j\neq i}A_{ij}W_j>0),$ therefore, $G_i=1$ if one of student $i$'s friends receives treatment. To ensure the overlap assumption holds, following~\citet{leung2022causal}, we choose the students in spillover population such that at least one of his / her friends is eligible to the treatment. This results in a set of $n=2467$ students, with $G_i=1$ among 1439 students. The quantile exposure effects on treated and spillover populations are both $\tau_q(1,0)$ with corresponding exposure mappings. Table~\ref{tab:exp_res} indicates that, the quantile exposure effects on both treated and spillover populations are positive at medium quantiles, while approximate zero at low and high quantiles. Besides, treated population has a wider range of quantiles (length = 0.211) with positive effect than spillover population (length = 0.109). This is aligned with the findings in~\citet{leung2022causal,gao2025causal} that the course improves social norm in the average sense in both populations, and the effect is stronger on treated population than on spillover population. Moreover, quantile exposure effect provides extra information that the improvement of social norms from educational courses mainly comes from students with medium level of social manner.

\section{Discussion}

This work moves beyond average treatment effects by developing a general framework for defining, estimating, and conducting inference on quantile exposure effects in the presence of network interference. The proposed methodology is supported by a rigorous asymptotic theory and complemented by extensive simulation evidence and an empirical application, demonstrating that distributional impacts, especially in the tails, can be obscured when attention is restricted to mean effects.

Although our analysis centers on quantile exposure effects, the framework is not inherently tied to quantiles. The same exposure-mapping and generalized propensity score ideas can be used to define and study other distributional causal estimands under interference---for example, effects on distribution functions, tail probabilities, inequality measures, or other functionals of the outcome distribution. Developing estimation and inference procedures for these broader targets, and clarifying which classes of functionals admit tractable limit theory under network dependence, is a natural direction for future work.

Our theoretical results are established for a broad class of non-clustered networks under transparent conditions that control network complexity through restrictions on node-level dependence and degree growth. These conditions cover many empirically relevant social and economic networks, but they primarily correspond to sparse regimes, where neighborhood size grows at most polynomially in the population size. Understanding how the proposed estimators behave in dense networks, or under alternative structural features such as strong clustering, community structure, or heavy-tailed degree distributions, remains an important open issue. Extending the theory to accommodate such settings or developing robust procedures that remain reliable across a wider range of network architectures would further broaden the scope and applicability of quantile-based causal inference under interference.
\section*{Acknowledgments}
The authors thank their collaborators for offering constructive suggestions that helped improve this manuscript.

\bibliographystyle{plainnat}
\bibliography{paper-ref}

\clearpage
\begin{center}
{\Large\bfseries Supplementary Material on}\\[0.6em]
{\large\bfseries ``Design-based Estimation and Inference on Quantile Exposure Effect under General Interference''}
\end{center}
\vspace{1em}

\appendix
\counterwithin{equation}{section}
\counterwithin{figure}{section}
\counterwithin{table}{section}
In Appendix~\ref{apd:thm}, we provide the details on the proves to the propositions and theorems in the main paper. In Appendix~\ref{sec:supplementary_simulations}, we provide additional details on experimental settings and supplementary results.

\section{Theoretical Proofs}
\label{apd:thm}

\begin{proof}[Proof of Proposition~\ref*{prop:CND}]
For \(S\subseteq N_n\), write
\[
    \mathcal{M}_S
    =
    \bigvee_{j\in S}\mathcal{M}_j
    =
    \sigma(W_j:j\in S),
\]
and let
\[
    \mathcal{F}_S^Z
    =
    \sigma(Z_i:i\in S).
\]

Because the network and the potential-outcome schedule are fixed,
Assumption~3.1 implies that, for every \(i\in N_n\), there exists a
nonrandom measurable function \(\phi_{i,n}\) such that
\[
    G_i
    =
    \phi_{i,n}\!\left(W_{\bar{\nu}_n(i)}\right).
\]
Assumption~\ref*{asp:local-potential-outcomes} similarly implies that the outcome can be expressed in the form
\[
    Y_i
    =
    y_{i,n}\!\left(W_{\bar{\nu}_n(i)}\right).
\]
It follows that
\begin{equation}
    Z_i=(Y_i,G_i)
    \quad\text{is }
    \mathcal{M}_{\bar{\nu}_n(i)}\text{-measurable}.
    \label{eq:Z_measurable}
\end{equation}

Fix \(N_n'\subseteq N_n\), and let \(A,B\subseteq N_n'\) satisfy
the CND separation conditions
\begin{equation}
\label{eq:mutual_inclusion}
    A\subseteq N_n'\setminus\bar{\nu}_n(B),
    \qquad
    B\subseteq N_n'\setminus\bar{\nu}_n(A).
\end{equation}
Set
\[
    C=\nu_n(A),
    \qquad
    D=\bar{\nu}_n(B)\setminus C.
\]
Since \(C=\bar{\nu}_n(A)\setminus A\), we have \(A\cap C=\varnothing\).
Moreover, Eq. \eqref{eq:mutual_inclusion} implies
\[
    A\cap\bar{\nu}_n(B)=\varnothing,
\]
and hence \(A\cap D=\varnothing\). By construction,
\(C\cap D=\varnothing\). Therefore, \(A\), \(C\), and \(D\) are
pairwise disjoint.

By Eq. \eqref{eq:Z_measurable},
\[
    \mathcal{F}_A^Z\vee\mathcal{M}_A
    \subseteq
    \mathcal{M}_{\bar{\nu}_n(A)}
    =
    \mathcal{M}_{A\cup C}.
\]
Similarly,
\[
    \mathcal{F}_B^Z\vee\mathcal{M}_B
    \subseteq
    \mathcal{M}_{\bar{\nu}_n(B)}
    \subseteq
    \mathcal{M}_{C\cup D}.
\]

Because the treatment assignments are mutually independent and
\(A\), \(C\), and \(D\) are pairwise disjoint, the random vectors
\(W_A\), \(W_C\), and \(W_D\) are mutually independent. Consequently,
\[
    \mathcal{M}_{A\cup C}
    \ \perp\!\!\!\perp\
    \mathcal{M}_{C\cup D}
    \,\big|\,
    \mathcal{M}_C.
\]
Indeed, conditional on \(W_C\), every random variable measurable
with respect to \(\mathcal{M}_{A\cup C}\) is a measurable function
of \((W_A,W_C)\), whereas every random variable measurable with
respect to \(\mathcal{M}_{C\cup D}\) is a measurable function of
\((W_C,W_D)\); the desired conditional independence follows from
the independence of \(W_A\) and \(W_D\).

Combining the results above, and using the fact that conditional
independence is preserved when the two sigma-fields are replaced
by sub-sigma-fields, gives
\[
    \mathcal{F}_A^Z\vee\mathcal{M}_A
    \ \perp\!\!\!\perp\
    \mathcal{F}_B^Z\vee\mathcal{M}_B
    \,\big|\,
    \mathcal{M}_{\nu_n(A)}.
\]
This is precisely the CND property for
\(\{Z_i\}_{i\in N_n}\) with respect to
\((\nu_n,\mathcal{M})\).

Since
\[
    \sigma(Y_i)\subseteq\sigma(Z_i)
\]
and, for every measurable \(h_i\),
\[
    \sigma\!\left(h_i(Y_i,G_i)\right)
    \subseteq
    \sigma(Z_i),
\]
the conclusions for \(\{Y_i\}\) and
\(\{h_i(Y_i,G_i)\}\) follow immediately.
\end{proof}

\begin{proof}[Proof on Corollary~\ref{cor:two-step-dependency}]
By Assumptions 3.1 and 3.2, $S_i$ is measurable with respect to
$\sigma(W_{\bar\nu_n(i)})$. If $A$ and $B$ are nonadjacent in the
two-step graph, then
\[
\bar\nu_n(A)\cap\bar\nu_n(B)=\varnothing;
\]
otherwise, a common neighbor would generate a path of length at most two
between some $i\in A$ and $j\in B$. Hence $S_A$ and $S_B$ are measurable
with respect to disjoint collections of mutually independent treatment
assignments. The desired independence follows.
\end{proof}

\begin{proof}[Proof of Proposition~\ref*{prop:max-degree-two-step}]
Under Assumptions \ref{asp:overlap} and \ref{asp:pdf}, from the form of $\psi_{i,q}$, the contrast scores are uniformly bounded in that for some $C>0,$
\[
\max_{i\in N_n}
|\psi_{i,q}-E\psi_{i,q}|
\le C.
\]
The variance lower bound gives
$\sigma_n(\psi_{\cdot,q})\ge c\sqrt n$. Hence the moments on the contrast score satisfy
\[
\mu_p^p(\psi_{\cdot,q})=O(n^{-p/2}),
\qquad p=3,4.
\]
Therefore
\[
a_n^{(2)}(\psi_{\cdot,q})
=
O\left(\frac{1+(d_{\mathrm{mx}}^{(2)})^2}{\sqrt n}\right)=o(1).
\]
Uniform boundedness of the standardized scores also yields
\[
\mu_4^4(\psi_{\cdot,q})
\le
Cn^{-1/2}\mu_3^3(\psi_{\cdot,q}),
\]
and therefore
\[
b_n^{(2)}(\psi_{\cdot,q})
\le
C\frac{d_{\mathrm{mx}}^{(2)}}{\sqrt n}
a_n^{(2)}(\psi_{\cdot,q}).
\]
Consequently, since $d_{mx}^{(2)}=o(n^{1/4}),$
\[
\begin{aligned}
&|\log a_n^{(2)}(\psi_{\cdot,q})|
\{b_n^{(2)}(\psi_{\cdot,q})\}^{1/2}\\
\le&
C|\log a_n^{(2)}(\psi_{\cdot,q})|
\{a_n^{(2)}(\psi_{\cdot,q})\}^{1/2}
\left\{
\frac{d_{\mathrm{mx}}^{(2)}}{\sqrt n}
\right\}^{1/2}=o(1)\\
\end{aligned}
\]
as $x^{1/2}|\log x|\to0$ as $x\downarrow0$, and the proof on Proposition~\ref*{prop:max-degree-two-step} is done by combining the results above.
\end{proof}

Next, we propose the following proposition on the locality of distribution function without interference, which endorses the validity of Assumption~\ref*{asp:pdf} in the main paper.
\begin{proposition}
\label{prop:integral_conv_rate}
Suppose there is no interference, the exposure is own treatment $G_i=W_i$, and the potential-outcome vectors
$\{Y_i(0),Y_i(1)\}_{i\ge1}$ are i.i.d. draws from a superpopulation. For each
$g\in\{0,1\}$, suppose $Y_i(g)$ has a continuously differentiable CDF $F_g$
with a density that is strictly positive and locally Lipschitz on a
neighborhood of its $q$th quantile. Then, with respect to repeated
superpopulation sampling, the two conditions in Assumption~\ref*{asp:pdf}
hold in probability. In particular, for every fixed $M<\infty$,
\[
\sup_{|t|\le M/\sqrt n}
\left|
F_{ng}\left(\beta_{nq}(g)+t\right)
-
F_{ng}\left(\beta_{nq}(g)\right)
-
f_g\left(\beta_{nq}(g)\right)t
\right|
=
o_p(n^{-1/2}).
\]
\end{proposition}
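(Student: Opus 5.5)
The plan is to observe that without interference and with $G_i=W_i$, the mixture distribution in \eqref{eq:Fng} collapses to the empirical CDF of the potential outcomes. Since $Y_i(\bm w)=Y_i(w_i)$ and $\mathrm{pr}(\bm W=\bm w\mid G_i=g)$ puts all its mass on $w_i=g$, we get
\[
F_{ng}(y)=\frac1n\sum_{i\in N_n}I\{Y_i(g)\le y\}=:\mathbb F_n(y),
\]
the empirical CDF of an i.i.d.\ sample from $F_g$. Hence $\beta_{nq}(g)=\mathbb F_n^{-1}(q)$ is the sample $q$th quantile, and we take $f_g=F_g'$ as the limiting density.

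Next we verify the jump, density, and Lipschitz conditions. Because $F_g$ is continuous, ties occur with probability zero. The jump of $\mathbb F_n$ at any atom is therefore $1/n=o(n^{-1/2})$, which gives the first condition. The classical Bahadur/Glivenko–Cantelli argument gives $\beta_{nq}(g)\to\beta_q(g):=F_g^{-1}(q)$ in probability, and indeed $\beta_{nq}(g)-\beta_q(g)=O_p(n^{-1/2})$. Since $f_g$ is continuous, positive, and locally Lipschitz near $\beta_q(g)$, we fix a neighbourhood of radius $2\eta$ around $\beta_q(g)$ on which $f_g\in[c,C]$ and $f_g$ is $L$-Lipschitz. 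With probability tending to one, $\beta_{nq}(g)$ lies within $\eta$ of $\beta_q(g)$, so the density bounds and the local Lipschitz property hold with $\eta_f=\eta$, in probability.

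The main step is the local linearization. Write $\mathbb F_n=F_g+n^{-1/2}\mathbb G_n$, where $\mathbb G_n$ is the empirical process. For $|t|\le M/\sqrt n$, the increment splits into two pieces:
\[
F_{ng}(\beta_{nq}+t)-F_{ng}(\beta_{nq})=\{F_g(\beta_{nq}+t)-F_g(\beta_{nq})\}+n^{-1/2}\{\mathbb G_n(\beta_{nq}+t)-\mathbb G_n(\beta_{nq})\}.
\]
For the deterministic piece, a Taylor expansion with the Lipschitz density gives $f_g(\beta_{nq})t+O(t^2)=f_g(\beta_{nq})t+O(n^{-1})$, uniformly in $t$. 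For the stochastic piece, we need $\sup_{|t|\le M/\sqrt n}|\mathbb G_n(\beta_{nq}+t)-\mathbb G_n(\beta_{nq})|=o_p(1)$. This is where the difficulty lies, because the center $\beta_{nq}$ is random.

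I would handle the random center by bounding the stochastic piece by the modulus of continuity of $\mathbb G_n$ over pairs $(s,s')$ satisfying $|s-\beta_q|\le \delta$ and $|s-s'|\le M/\sqrt n$. Asymptotic equicontinuity of the empirical process indexed by half-line indicators (a Donsker class), together with $|F_g(s)-F_g(s')|\le C M/\sqrt n\to0$ in the intrinsic semimetric, makes this supremum $o_p(1)$. On the event that $\beta_{nq}$ lies within $\delta$ of $\beta_q$, which has probability tending to one, this bound covers the random center. The alternative is the sharper Bahadur–Kiefer oscillation bound $O_p(n^{-1/4}(\log n)^{1/2})$, which is more than enough.

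Multiplying the stochastic bound by $n^{-1/2}$ gives $o_p(n^{-1/2})$. Combining it with the $O(n^{-1})$ Taylor remainder yields the displayed uniform statement, and the remaining conditions of Assumption~\ref{asp:pdf} follow in probability as shown above.
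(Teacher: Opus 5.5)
Your proposal is correct and follows the paper's proof essentially step for step. Like the paper, you identify $F_{ng}$ with the empirical CDF of $\{Y_i(g)\}$, obtain jumps of size $1/n$ from the absence of ties, and localize via consistency of the sample quantile. You then split the increment into an empirical-process oscillation term plus an $O(n^{-1})$ Taylor remainder.

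The only difference is how you control the oscillation term. The paper applies a local maximal inequality for the VC class of short intervals and gets the explicit rate $O_p(n^{-3/4}\sqrt{\log n})$. Your Donsker asymptotic-equicontinuity argument gives only $o_p(n^{-1/2})$, which is all that is needed, and your Bahadur--Kiefer alternative recovers the paper's rate.
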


\begin{proof}
We verify the components of Assumption~\ref*{asp:pdf} in the order
in which they appear in the main paper.
Under no interference and $G_i=W_i$, $F_{ng}$ is the empirical CDF based on
the i.i.d. sample $\{Y_i(g):i\in N_n\}$. Continuity implies that ties occur with probability
zero, so every empirical jump is $1/n=o(n^{-1/2})$. This verifies the quantile-jump condition in
Assumption~\ref*{asp:pdf}.

Let $b_g$ denote the population $q$th quantile. Uniform consistency of the
empirical quantile gives $\beta_{nq}(g)\to_p b_g$. Because $f_g$ is strictly
positive and locally Lipschitz around $b_g$, there exist constants
$0<c_f<C_f<\infty$, $L_f<\infty$, and $\eta_f>0$ such that, with probability
tending to one, the density lower and upper bounds and the local Lipschitz
condition in Assumption~\ref*{asp:pdf} hold uniformly around
$\beta_{nq}(g)$. On the same event, all
intervals with endpoints $\beta_{nq}(g)$ and
$\beta_{nq}(g)+t$, $|t|\le M/\sqrt n$, therefore lie in a fixed neighborhood
on which $f_g$ is bounded and Lipschitz.

The class of intervals in that neighborhood is a VC class. Its members having
length at most $M/\sqrt n$ have probability $O(n^{-1/2})$. Put
$\delta_n=Cn^{-1/2}$. The local maximal inequality under uniform
entropy in \citet[Theorem~2.1]{van2011local}, together with the
polynomial covering-number bound for VC classes, gives
\[
\sup_{I:\,P_g(I)\le\delta_n}
\left|(\mathbb P_n-P_g)\mathbf 1_I\right|
=O_p\left\{
\sqrt{\frac{\delta_n\log(1/\delta_n)}{n}}
+\frac{\log(1/\delta_n)}{n}
\right\}
=O_p(n^{-3/4}\sqrt{\log n}).
\]
Here $\mathbb P_n$ is the empirical measure of $\{Y_i(g):i\in N_n\}$, $P_g$ is their population law and $1_I$ is the indicator of interval $I$. Because the random interval between $\beta_{nq}(g)$ and
$\beta_{nq}(g)+t$ belongs to this class with probability tending to one, we
obtain
\[
\begin{aligned}
&\sup_{|t|\le M/\sqrt n}
\Big|[F_{ng}\{\beta_{nq}(g)+t\}-F_{ng}\{\beta_{nq}(g)\}]
-[F_g\{\beta_{nq}(g)+t\}-F_g\{\beta_{nq}(g)\}]\Big|\\
&\qquad=O_p(n^{-3/4}\sqrt{\log n})=o_p(n^{-1/2}).
\end{aligned}
\]
Finally, local Lipschitz continuity of $f_g$ gives, uniformly over the same
range of $t$,
\[
F_g\{\beta_{nq}(g)+t\}-F_g\{\beta_{nq}(g)\}
-f_g\{\beta_{nq}(g)\}t=O(t^2)=O(n^{-1}).
\]
Combining the stochastic increment bound with the deterministic
Taylor bound proves the uniform local linear expansion in
Assumption~\ref*{asp:pdf}. Together with the jump and density arguments above,
this verifies all components of that assumption in superpopulation
probability.
\end{proof}

Next, we provide proof details on the asymptotic normality of the proposed QTE estimator under interference. We start with introducing the following useful lemmas. Lemma~\ref{lem:CLT_in_networks} presents 
Berry-Esseen-type bound for sums of conditionally neighborhood
dependent (CND) arrays, which we use to establish asymptotic normality for the key pseudo influence-function terms.

\begin{lemma}[Central limit theorem on graph-dependent variables]
\label{lem:CLT_in_networks}
Let $\{M_i:i\in N_n\}$ be a triangular array satisfying
\[
EM_i=0,
\qquad
M_i\in\sigma(W_{\bar\nu_n(i)}),
\]
and let
\[
\sigma_n^2(M)
=
\operatorname{Var}\left(\sum_{i\in N_n}M_i\right)>0.
\]
Define
\[
\Delta_n(t)
=
\left|
\Pr\left\{
\frac{1}{\sigma_n(M)}
\sum_{i\in N_n}M_i\le t
\right\}
-\Phi(t)
\right|.
\]
On the event $a_n^{(2)}(M)\le1$, there exists an absolute constant
$C<\infty$ such that
\[
\sup_{t\in\mathbb R}\Delta_n(t)
\le
C\left[
\{a_n^{(2)}(M)\}^{1/2}
+
|\log a_n^{(2)}(M)|
\{b_n^{(2)}(M)\}^{1/2}
\right].
\]
Consequently, if
\[
a_n^{(2)}(M)\to0,
\qquad
|\log a_n^{(2)}(M)|
\{b_n^{(2)}(M)\}^{1/2}\to0,
\]
then
\[
\frac{\sum_{i\in N_n}M_i}{\sigma_n(M)}
\ \longrightarrow\ N(0,1).
\]
\end{lemma}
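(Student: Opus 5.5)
This lemma is a specialization of the Berry--Esseen bound of \citet{lee2019stable} (their Theorem 3.1 / Corollary 3.1) for conditionally neighborhood dependent arrays. The plan is therefore to exhibit the array $\{M_i\}$ as CND with respect to a suitable neighborhood system, with conditioning $\sigma$-fields that make their conditional moment quantities reduce to the unconditional $a_n^{(2)}(M)$ and $b_n^{(2)}(M)$.

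\textbf{Step 1: dependency graph.} Take the two-step neighborhood system $\nu_n^{(2)}$ on $N_n$ and the trivial conditioning $\sigma$-fields $\mathcal M_i=\{\emptyset,\Omega\}$. Because $M_i\in\sigma(W_{\bar\nu_n(i)})$ and the $W_j$ are mutually independent, the argument of Corollary~\ref{cor:two-step-dependency} applies verbatim. For sets $A,B$ separated in the two-step graph, $\bar\nu_n(A)\cap\bar\nu_n(B)=\varnothing$, so $\{M_i\}_{i\in A}$ and $\{M_j\}_{j\in B}$ are independent. An exact dependency graph with trivial conditioning satisfies Definition~\ref{def:CND} with respect to $(\nu_n^{(2)},\mathcal M)$. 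Separation in the CND sense relative to $\nu_n^{(2)}$ means exactly the hypothesis of Corollary~\ref{cor:two-step-dependency}, and conditioning on the trivial $\sigma$-field $\mathcal M_{\nu_n^{(2)}(A)}$ is unconditional independence.

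\textbf{Step 2: matching the quantities.} With trivial conditioning, the conditional variance in Lee--Song is deterministic and equals $\sigma_n^2(M)$. Their conditional $p$-th moments $E[|M_i|^p\mid\mathcal M]$ become $E|M_i|^p$, and since $EM_i=0$ the standardized maxima are exactly $\mu_3(M)$ and $\mu_4(M)$. Their degree quantities are the maximal and average degrees of the neighborhood system used. For $\nu_n^{(2)}$ these are $d_{mx}^{(2)}$ and $d_{\mathrm{av}}^{(2)}$, so their $a_n$ and $b_n$ coincide with $a_n^{(2)}(M)$ and $b_n^{(2)}(M)$. Any extra term measuring randomness of the conditional variance, which is what produces a stable rather than Gaussian limit, vanishes because the conditional variance is nonrandom. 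The Lee--Song bound then gives, on $\{a_n^{(2)}(M)\le1\}$ (a deterministic event here),
\[
\sup_t\Delta_n(t)\le C\bigl[\{a_n^{(2)}(M)\}^{1/2}+|\log a_n^{(2)}(M)|\{b_n^{(2)}(M)\}^{1/2}\bigr].
\]
The stable limit reduces to an ordinary $N(0,1)$ limit because the mixing variance is constant.

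\textbf{Step 3: convergence.} The ``consequently'' part is immediate. Under the two rate conditions the right-hand side tends to zero, and uniform Kolmogorov convergence to $\Phi$ implies convergence in distribution.

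\textbf{Main obstacle.} The main difficulty is the bookkeeping in Step 2: checking that the constants and exponents in Lee--Song's statement (their $\mu_3^3$, $\mu_4^4$ normalization and the factors $n(1+d_{mx})(1+d_{av})$) line up exactly with the definitions of $a_n^{(2)}$ and $b_n^{(2)}$. It also requires confirming that their theorem allows an arbitrary neighborhood system, so that the two-step graph may be used in place of $\nu_n$. If their result is stated only for the original neighborhood system, I would instead apply it to the network whose adjacency is the two-step graph, relabeling $\nu_n:=\nu_n^{(2)}$. Nothing in their proof uses the origin of the graph.
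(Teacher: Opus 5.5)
Your proposal is correct and matches the paper's proof: it also takes the two-step neighborhood system $\nu_n^{(2)}$ with trivial conditioning $\sigma$-fields, uses $\bar\nu_n(A)\cap\bar\nu_n(B)=\varnothing$ for sets separated in the two-step graph to get independence (hence CND), and then invokes Corollary~3.1 of \citet{lee2019stable}. One point to make explicit: with trivial conditioning, Lee--Song's centering requirement on $M_i$ given its neighborhood $\sigma$-field reduces to $EM_i=0$, and this is where the paper uses that hypothesis.
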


\begin{proof}[Proof of Lemma~\ref{lem:CLT_in_networks}]
If two index sets $A,B\subseteq N_n$ are nonadjacent in the two-step
graph, then
\[
    \overline\nu_n(A)\cap\overline\nu_n(B)=\varnothing .
\]
Because the treatment assignments are mutually independent and
$M_i\in\sigma(W_{\overline\nu_n(i)})$, it follows that
\[
    \sigma(M_i:i\in A)
    \quad\text{and}\quad
    \sigma(M_j:j\in B)
\]
are independent. Therefore, under the neighborhood system
$\nu_n^{(2)}$, the array is CND and Condition A
holds from Corollary \ref*{cor:two-step-dependency} taking all conditioning sigma-fields to be trivial. Since
$\mathbb E M_i=0$, the lemma follows directly the application of Corollary 3.1 in \citet{lee2019stable} with
neighborhood system $\nu_n^{(2)}$.
\end{proof}

The second lemma gives an important rate bound for sums of network-dependent variables, and
will be used repeatedly in subsequent derivations.

\begin{lemma}
\label{lem:CND_bound_rate}
Let $\{M_i\}_{i\in N_n}$ with $M_i=m(W_i,G_i,Y_i,\bm X)$ be the triangular array such that the CND property in Proposition~\ref*{prop:CND}. Then we have 
\[
\left|
\sum_{i\in N_n}(M_i-EM_i)
\right|
=
O_p\left[
\left\{
(1+d_{\mathrm{mx}}^{(2)})
\sum_{i\in N_n}\operatorname{Var}(M_i)
\right\}^{1/2}
\right].
\]
\end{lemma}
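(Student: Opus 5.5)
The plan is a second-moment bound followed by Chebyshev's inequality, using the exact two-step dependency structure from Corollary~\ref{cor:two-step-dependency}. Write $\widetilde M_i=M_i-EM_i$ and $S=\sum_{i\in N_n}\widetilde M_i$. Because $M_i=m(W_i,G_i,Y_i,\bm X)$ with $\bm X$ fixed, Assumptions~\ref{asp:exp_map} and~\ref{asp:local-potential-outcomes} make $M_i$ measurable with respect to $\sigma(W_{\bar\nu_n(i)})$.

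\textbf{Step 1 (vanishing covariances).} Take $i,j$ with $\ell_A(i,j)>2$. Then $\bar\nu_n(i)\cap\bar\nu_n(j)=\varnothing$, so $M_i$ and $M_j$ are functions of disjoint blocks of mutually independent assignments. This is exactly Corollary~\ref{cor:two-step-dependency} applied to singletons $A=\{i\}$, $B=\{j\}$, and it gives $\operatorname{Cov}(M_i,M_j)=0$. Expanding the second moment therefore leaves only nearby pairs:
\[
E S^2=\operatorname{Var}(S)=\sum_{i\in N_n}\sum_{j\in\bar\nu_n^{(2)}(i)}\operatorname{Cov}(M_i,M_j).
\]

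\textbf{Step 2 (bounding the surviving terms).} By Cauchy--Schwarz and AM--GM,
\[
|\operatorname{Cov}(M_i,M_j)|\le\{\operatorname{Var}(M_i)\operatorname{Var}(M_j)\}^{1/2}\le\tfrac12\{\operatorname{Var}(M_i)+\operatorname{Var}(M_j)\}.
\]
Now sum over the set of ordered pairs $\{(i,j):\ell_A(i,j)\le2\}$. This set is symmetric, and each $i$ appears in at most $1+d_{i}^{(2)}\le 1+d_{mx}^{(2)}$ pairs as either coordinate. Hence
\[
\operatorname{Var}(S)\le(1+d_{mx}^{(2)})\sum_{i\in N_n}\operatorname{Var}(M_i)=:V_n.
\]

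\textbf{Step 3 (conclusion).} If $V_n=0$, then $S=0$ almost surely and there is nothing to prove. Otherwise Chebyshev's inequality gives, for every $K>0$,
\[
\Pr\bigl(|S|>K V_n^{1/2}\bigr)\le K^{-2}.
\]
Since this bound is uniform in $n$, $|S|=O_p(V_n^{1/2})$, which is the claim.

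The only step needing care is Step 1. One must check that $M_i$ depends on randomness only through $W_{\bar\nu_n(i)}$; the CND property alone would give independence only conditionally on the buffer $\sigma$-fields. Here that step is direct, because the finite-population design leaves no other source of randomness. The counting in Step 2 is routine. Note that the symmetric AM--GM split is what yields the factor $1+d_{mx}^{(2)}$, rather than a cruder factor involving degree products.
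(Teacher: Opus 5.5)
Your proposal is correct and follows essentially the same route as the paper's proof. Covariances vanish for $\ell_A(i,j)>2$ because each $M_i$ is $\sigma(W_{\bar\nu_n(i)})$-measurable; the surviving covariances are bounded by the symmetric split $\tfrac12\{\operatorname{Var}(M_i)+\operatorname{Var}(M_j)\}$, which yields the factor $1+d_{mx}^{(2)}$; and Chebyshev's inequality finishes, exactly as in the paper, which likewise relies on independence of disjoint assignment blocks rather than on CND itself.
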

\begin{proof}[Proof of Lemma~\ref{lem:CND_bound_rate}]
    We observe that for each $i,j\in N_n$, $M_i$ and $M_j$ are independent whenever the
two-step closed neighborhoods are disjoint, i.e.,
$\bar\nu_n(i)\cap\bar\nu_n(j)=\varnothing$ (equivalently, $\ell_A(i,j)>2$). Therefore, $cov(M_i,M_j)=0$ for $\ell_A(i,j)>2$. Hence from Chebyshev's inequality, we have
\begin{equation*}
    \begin{aligned}
    \left|\sum_{i\in N_n} (M_i-EM_i)\right|&=O_p\left(\left\{\mathrm{var}\left(\sum_{i\in N_n}M_i\right)\right\}^{1/2}\right)\\
    &=O_p\left(\left\{\sum_{i\in N_n}\left(\mathrm{var}(M_i)+\sum_{j:j\neq i,\ell_A(i,j)\le 2}\mathrm{cov}(M_i,M_j)\right)\right\}^{1/2}\right)\\
    &\le O_p\left(\left\{\sum_{i\in N_n}\left(\mathrm{var}(M_i)+\sum_{j: j\neq i,\ell_A(i,j)\le 2}[\mathrm{var}(M_i)+\mathrm{var}(M_j)]/2\right)\right\}^{1/2}\right)\\
    &\le O_p\left(\left\{(1+d_{mx}^{(2)})\sum_{i\in N_n}\mathrm{var}(M_i)\right\}^{1/2}\right),\\
\end{aligned}
\end{equation*} which proves the lemma.
\end{proof}
\begin{corollary}
\label{cor:CND_rate}
    Under the setup of
    Lemma~\ref{lem:CND_bound_rate}, if $\operatorname{E}[M_i \mid \mathcal{M}_{\nu_n(i)}] = 0,$ a.e. for each $i \in N_n,$ then we have $$
    \sum_{i\in N_n} M_i\le O_p\left(\left\{(1+d_{mx})\sum_{i\in N_n}EM_i^2\right\}^{1/2}\right).
    $$
\end{corollary}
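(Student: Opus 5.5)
The plan is to repeat the Chebyshev argument of Lemma~\ref{lem:CND_bound_rate}. The difference is that the conditional mean-zero condition removes all covariances between units at path distance exactly two. As a result, the two-step degree $d_{mx}^{(2)}$ can be replaced by the one-step degree $d_{mx}$.

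\textbf{Step 1 (measurability and centering).} As in the proof of Proposition~\ref{prop:CND}, Assumptions~\ref{asp:exp_map} and~\ref{asp:local-potential-outcomes} imply that $M_i=m(W_i,G_i,Y_i,\bm X)$ is $\sigma(W_{\bar\nu_n(i)})$-measurable, since $\bm X$ is fixed. Taking expectations in $E[M_i\mid \mathcal M_{\nu_n(i)}]=0$ gives $EM_i=0$. Hence $\operatorname{Var}(\sum_i M_i)=\sum_{i,j}E(M_iM_j)$.

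\textbf{Step 2 (vanishing of non-adjacent covariances).} Fix $i\neq j$ with $j\notin\bar\nu_n(i)$, i.e.\ $\ell_A(i,j)\ge 2$. By symmetry of $A$ we also have $i\notin\bar\nu_n(j)$. Set $D=\bar\nu_n(j)\setminus\bar\nu_n(i)$. Then $M_j$ is measurable with respect to $\sigma(W_{\nu_n(i)},W_D)$, because $\bar\nu_n(j)\subseteq \nu_n(i)\cup D$ (it does not contain $i$). Mutual independence of the assignments makes $W_D$ independent of $(W_i,W_{\nu_n(i)})$, and $M_i$ is a function of the latter. Therefore
\[
E[M_i\mid W_{\nu_n(i)},W_D]=E[M_i\mid \mathcal M_{\nu_n(i)}]=0,
\]
and the tower property gives
\[
E(M_iM_j)=E\{M_j\,E[M_i\mid W_{\nu_n(i)},W_D]\}=0.
\]
Equivalently, one could invoke the CND property of Proposition~\ref{prop:CND} with $A=\{i\}$ and $B=\{j\}$. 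I expect this step to be the only real point, namely checking that conditioning on $W_D$ adds no information about $M_i$. It is handled by the disjointness of $\{i\}\cup\nu_n(i)$ and $D$.

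\textbf{Step 3 (counting and Chebyshev).} Only pairs with $j\in\bar\nu_n(i)$ survive in the variance sum. Using $|E(M_iM_j)|\le (EM_i^2+EM_j^2)/2$ and the fact that each unit has at most $d_{mx}$ neighbours,
\[
\operatorname{Var}\Bigl(\sum_{i\in N_n}M_i\Bigr)\le \sum_{i\in N_n}\Bigl\{EM_i^2+\sum_{j\in\nu_n(i)}\tfrac12(EM_i^2+EM_j^2)\Bigr\}\le (1+d_{mx})\sum_{i\in N_n}EM_i^2 .
\]
Chebyshev's inequality then yields the claimed $O_p$ bound.
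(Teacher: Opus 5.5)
Your proof is correct and follows essentially the same route as the paper. The paper removes $E(M_iM_j)$ for $j\notin\bar\nu_n(i)$ by invoking the CND conditional independence of $M_i$ and $M_j$ given $\mathcal{M}_{\nu_n(i)}$ together with $E[M_i\mid\mathcal{M}_{\nu_n(i)}]=0$; your Step~2 verifies exactly this for singletons, directly from the independence of $W_D$ and $(W_i,W_{\nu_n(i)})$. It then counts at most $1+d_{mx}$ surviving terms per unit and applies Chebyshev, as you do, and your explicit AM--GM bookkeeping just spells out the paper's brief ``H\"older and repeat the lemma'' step.
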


\begin{proof}[Proof of Corollary~\ref{cor:CND_rate}]
Fix $i\in N_n$. If $j\notin \nu_n(i)$, the CND property implies that $M_i$ and $M_j$ are
conditionally independent given $\mathcal M_{\nu_n(i)}$, so
\begin{equation*}
\begin{aligned}
    E\left[M_iM_j\right]=&E\left[E\left\{M_iM_j\mid \mathcal{M}_{\nu_n(i)}\right\}\right]=E\Big\{\underbrace{E\left(M_i\mid \mathcal{M}_{\nu_n(i)}\right)}_{=0}  E\left(M_j\mid \mathcal{M}_{\nu_n(i)}\right)\Big\}=0.\\
\end{aligned}
\end{equation*}
The diagonal term $j=i$ is bounded separately by $EM_i^2$. Repeating the
argument in Lemma~\ref{lem:CND_bound_rate} and applying Hölder's inequality,
the remaining covariance sum is restricted to $j\in\nu_n(i)$. Thus there are
at most $1+d_{mx}$ terms after including the diagonal, and
\[
\begin{aligned}
\sum_{i\in N_n} M_i
&=O_p\left[
\left\{\sum_{i\in N_n}\left(
EM_i^2+\sum_{j\in\nu_n(i)}|E(M_iM_j)|
\right)\right\}^{1/2}\right]\\
&\le O_p\left[
\left\{(1+d_{mx})\sum_{i\in N_n}EM_i^2\right\}^{1/2}\right].
\end{aligned}
\]
\end{proof}

Next, we present a result on the validity of Assumption~\ref*{asp:pdf}, where the limiting density function is estimated using the kernel approach. To start, we first state a regularity condition on $F_{ng}$ which guarantees second-order smoothness of $F_{ng}$ in a local neighborhood slightly larger than the one in Assumption~\ref*{asp:pdf}. 
\begin{definition}[Local and bandwidth-scale regularity]
\label{def:bw_reg}
Let $f_g$ be the limiting density function defined in Assumption~\ref*{asp:pdf}. Consider the bandwidth $h=h_n$ used in the kernel density estimator. We say that $F_{ng}$ is {\it bandwidth-scale smooth} in 
a neighborhood of $y\in\mathcal{Y}$, if for every fixed $M<\infty$,
\[
\sup_{|t|\le Mh}
\left|F_{ng}(y+t)-F_{ng}(y)
-f_g(y)t-\frac12 f_g'(y)t^2\right|
=O\left(h^3\right).
\]
Analogously, we say that $F_{ng}$ is {\it local-uniform bandwidth-scale smooth} at the target quantile if, for every fixed $L,M<\infty$,
\[
\sup_{s\in\mathcal I_{n,L}(g)}\sup_{|t|\le Mh}
\left|F_{ng}(s+t)-F_{ng}(s)
-f_g(s)t-\frac12 f_g'(s)t^2\right|
=O\left(h^3\right),
\]
where $\mathcal I_{n,L}(g)=\{s:|s-\beta_{nq}(g)|\le L/\sqrt n\}$.
\end{definition} 

Note that the  bandwidth-scale smoothness condition considers a neighborhood of size slightly larger than the $O(n^{-1/2})$ neighborhood used in Assumption 3. Let $Z_i^{g,h}=I(G_i=g)\pi_i^{-1}(g)K_1((Y_i-y)/h)/h$ and recall that $\widehat{f}_{ng}(y)=n^{-1}\sum_{i=1}^n Z_i^{g,h}$ is the kernel estimator of $f_g$. We now state the proposition on the convergence rate of the kernel density estimator under interference.
\begin{proposition}
\label{prop:pdf_conv}
    Let $K_1$ be a differentiable kernel function satisfying the following regularity conditions: 
    
    (i) $K_1$ is an absolutely continuous kernel satisfying:
    \[
    \int_{\mathbb R} K_1(v)\,dv=1,
    \qquad
    K_1(v)=K_1(-v),
    \qquad
    \int_{\mathbb R}|K_1'(v)|\,dv<\infty.
    \]

    (ii) $K_1$ is supported on a compact set $[-M,M]$, for some positive constant $M$.

    (iii) For the constant $M$ defined in $(ii)$, $F_{ng}$ is bandwidth scale smooth in a neighborhood of $y$.
    
Suppose $h=h_n\to0$ and $nh\to\infty$, and let $g\in\boldsymbol G$ be an
exposure level of interest. Then
    $$|\widehat{f}_{ng}(y)-f_g(y)|\le O_p\left(
    \sqrt{\frac{1+d_{mx}^{(2)}}{nh}}
    +
    h^2\right).$$ 
    In particular, under optimal bandwidth $h \asymp
    \left(
    \frac{1+d_{mx}^{(2)}}{n}
    \right)^{1/5}$,
    \[
    \widehat f_{ng}(y)-f_g(y)
    =
    O_p\left[
    \left(
    \frac{1+d_{mx}^{(2)}}{n}
    \right)^{2/5}
    \right].
    \]
\end{proposition}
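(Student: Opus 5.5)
We use the bias–variance split
\[
\widehat f_{ng}(y)-f_g(y)=\{\widehat f_{ng}(y)-E\widehat f_{ng}(y)\}+\{E\widehat f_{ng}(y)-f_g(y)\}
\]
and bound the two pieces separately.

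\textbf{Bias term.} By the definition of $F_{ng}$ in \eqref{eq:Fng} and $\pi_i(g)=\Pr(G_i=g)$,
\[
E\{I(G_i=g)\pi_i^{-1}(g)K_1((Y_i-y)/h)\}=E\{K_1((Y_i-y)/h)\mid G_i=g\}.
\]
Averaging over $i$ therefore gives
\[
E\widehat f_{ng}(y)=h^{-1}\int K_1\{(u-y)/h\}\,dF_{ng}(u).
\]
Because $K_1$ is absolutely continuous with compact support $[-M,M]$, we integrate by parts (Stieltjes form, valid for the step function $F_{ng}$ since $K_1$ is continuous). With $v=(u-y)/h$,
\[
E\widehat f_{ng}(y)=-h^{-1}\int_{-M}^{M}K_1'(v)\{F_{ng}(y+hv)-F_{ng}(y)\}\,dv,
\]
where the constant $F_{ng}(y)$ may be subtracted because $\int K_1'=0$.

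Next we substitute the bandwidth-scale expansion from condition (iii):
\[
F_{ng}(y+hv)-F_{ng}(y)=f_g(y)hv+\tfrac12 f_g'(y)h^2v^2+O(h^3),
\]
uniformly in $|v|\le M$. Integration by parts with symmetry of $K_1$ gives
\[
-\int vK_1'(v)\,dv=\int K_1=1,\qquad \int v^2K_1'(v)\,dv=-2\int vK_1(v)\,dv=0.
\]
The remainder contributes at most $h^{-1}\cdot O(h^3)\int|K_1'|=O(h^2)$. Hence $E\widehat f_{ng}(y)-f_g(y)=O(h^2)$.

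This step is the main subtlety. $F_{ng}$ is discrete, so we cannot use a density change of variables. Working with $F_{ng}$ directly through $K_1'$ avoids that, and it is exactly why condition (i) requires $\int|K_1'|<\infty$ and (iii) is stated for the CDF.

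\textbf{Stochastic term.} Set $M_i=Z_i^{g,h}$. Each $M_i$ is a function of $(W_i,G_i,Y_i)$, hence measurable with respect to $\sigma(W_{\bar\nu_n(i)})$, so Lemma~\ref{lem:CND_bound_rate} applies:
\[
\Big|\sum_i(M_i-EM_i)\Big|=O_p\Big[\Big\{(1+d_{mx}^{(2)})\sum_i\operatorname{Var}(M_i)\Big\}^{1/2}\Big].
\]
To bound the variances, use $\operatorname{Var}(M_i)\le EM_i^2$ together with Assumption~\ref{asp:overlap}:
\[
EM_i^2\le \underline\pi^{-1}h^{-2}E\{K_1^2((Y_i-y)/h)\mid G_i=g\}.
\]
Averaging over $i$ turns this into $\underline\pi^{-1}h^{-2}\int K_1^2\{(u-y)/h\}\,dF_{ng}(u)$. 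We repeat the integration-by-parts step with $K_1^2$ in place of $K_1$. Here $(K_1^2)'=2K_1K_1'$ is integrable because $K_1$ is bounded (it is absolutely continuous with compact support). Combined with the first-order expansion in (iii), this gives
\[
\int K_1^2\{(u-y)/h\}\,dF_{ng}(u)=O(h)\,\{f_g(y)+O(h)\}.
\]
Hence $\sum_i\operatorname{Var}(M_i)=O(n/h)$. Dividing by $n$ yields
\[
\widehat f_{ng}(y)-E\widehat f_{ng}(y)=O_p\Big(\sqrt{(1+d_{mx}^{(2)})/(nh)}\Big),
\]
and $nh\to\infty$ keeps the $O(h)$ corrections negligible.

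\textbf{Conclusion.} Adding the two bounds gives the stated rate $O_p\big(\sqrt{(1+d_{mx}^{(2)})/(nh)}+h^2\big)$. For the optimal bandwidth, equate $\{(1+d_{mx}^{(2)})/(nh)\}^{1/2}$ with $h^2$. This gives $h\asymp\{(1+d_{mx}^{(2)})/n\}^{1/5}$ and the rate $\{(1+d_{mx}^{(2)})/n\}^{2/5}$.
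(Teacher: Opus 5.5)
Your proposal is correct and follows essentially the same route as the paper. It uses the same bias--variance split, the same Stieltjes integration by parts against $K_1'$ for the bias (with $\int K_1'=0$, $\int vK_1'=-1$, $\int v^2K_1'=0$), and Lemma~\ref{lem:CND_bound_rate} with an $O(n/h)$ variance bound for the stochastic term. The only cosmetic difference is in bounding $\int K_1^2\{(u-y)/h\}\,dF_{ng}(u)$: you use a second integration by parts, whereas the paper bounds $K_1^2$ by its supremum and uses bandwidth-scale smoothness to show that the $F_{ng}$-mass of an $O(h)$ interval around $y$ is $O(h)$.
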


\begin{proof}[Proof of Proposition~\ref{prop:pdf_conv}]

We decompose 
$$\widehat f_{ng}(y)-f_g(y)
=
\left\{\widehat f_{ng}(y)-E\widehat f_{ng}(y)\right\}
+
\left\{E\widehat f_{ng}(y)-f_g(y)\right\}.$$

We first bound the stochastic term. 
Since $\widehat{f}_{ng}(y)=n^{-1}\sum_{i=1}^n Z_i^{g,h}$, we have 
\[ 
\widehat f_{ng}(y)-E\widehat f_{ng}(y) = \frac1n\sum_{i=1}^n \left\{Z_i^{g,h}-E Z_i^{g,h}\right\}. 
\] By overlap and
the inverse-probability-weighting identity,
\[
\begin{aligned}
\sum_{i\in N_n}\operatorname{Var}\{Z_i^{g,h}\}
&\le \sum_{i\in N_n}E\{Z_i^{g,h}\}^2\\
&\le \frac{n}{\underline\pi h^2}
\int K_1^2\left(\frac{z-y}{h}\right)dF_{ng}(z).
\end{aligned}
\]
Because $K_1$ is bounded and supported on $[-M_K,M_K]$, the last integral is
bounded by a constant times the $F_{ng}$-mass of an interval of length
$2(M_K+1)h$ about $y$. Bandwidth-scale smoothness implies that this mass is
$O(h)$. Hence, directly under the design-based distribution,
\[
\sum_{i\in N_n}\operatorname{Var}\{Z_i^{g,h}\}=O(n/h).
\]
Applying Lemma~\ref{lem:CND_bound_rate}, 
    $$
    \begin{aligned}
        \sum_{i=1}^n (Z_i^{g,h}-E[Z_i^{g,h}])
        =&O_p\left(\left\{(1+d_{mx}^{(2)})\sum_{i\in N_n}\mathrm{Var}(Z_i^{g,h})\right\}^{1/2}\right)\\
        =&O_p\left((1+d_{mx}^{(2)})^{1/2}\cdot(n/h)^{1/2}\right).
    \end{aligned}
    $$ We thus obtain $\widehat f_{ng}(y)-E\widehat f_{ng}(y)=O_p\left(
\sqrt{\frac{1+d_{mx}^{(2)}}{nh}}
\right).$

To bound the second term, we note that
\[
E\widehat f_{ng}(y)
=\frac1n\sum_{i=1}^n
E\left[
\frac{I(G_i=g)}{\pi_i(g)}
\frac1h K_1\left(\frac{Y_i-y}{h}\right)
\right]
=\int \frac1h K_1\left(\frac{z-y}{h}\right)
\,dF_{ng}(z).
\]
Using integration by parts and applying the boundary condition on $K_1$,
\[
\begin{aligned}
E\widehat f_{ng}(y)
&=
-\frac{1}{h^2}
\int
K_1'\left(\frac{z-y}{h}\right)
F_{ng}(z)\,dz =
-\frac1h
\int
K_1'(v)F_{ng}(y+hv)\,dv,
\end{aligned}
\]
where the second equality follows from the change of variables $z=y+hv$. 
Define 
\[
r_{ng,y}(t)=F_{ng}(y+t)-F_{ng}(y)-f_g(y)t-
\frac12 f_g'(y)t^2.
\]
By the bandwidth-scale smoothness condition,  $ \sup_{|t|\le M h}|r_{ng,y}(t)| = O(h^3)$. 
Therefore,
\[
\begin{aligned}
E\widehat f_{ng}(y)
&=-\frac1h
\int^M_{-M} K_1'(v)F_{ng}(y+hv)\,dv \\
&=-\frac1h\int^M_{-M} K_1'(v)
\left\{
F_{ng}(y)+f_g(y)hv
+\frac12 f_g'(y)h^2v^2
+r_{ng,y}(hv)\right\}\,dv \\
&=
-\frac{F_{ng}(y)}{h}\int^M_{-M} K_1'(v)\,dv
-f_g(y)\int vK_1'(v)\,dv \\
&\quad -\frac12 f_g'(y)h\int^M_{-M} v^2K_1'(v)\,dv
-\frac1h\int^M_{-M} K_1'(v)r_{ng,y}(hv)\,dv .
\end{aligned}
\]
Because $K_1$ is compactly supported and absolutely continuous, it vanishes
at the endpoints. Hence
$\int_{-M_K}^{M_K}K_1'(v)\,dv=0$,
$\int_{-M_K}^{M_K}vK_1'(v)\,dv=-1$, and, by symmetry,
$\int_{-M_K}^{M_K}v^2K_1'(v)\,dv=0$.
Consequently,
\[ 
E\widehat f_{ng}(y)-f_g(y) = -\frac1h \int_{-M_K}^{M_K}K_1'(v)r_{ng,y}(hv)\,dv . 
\]
Since $\int_{-M_K}^{M_K}|K_1'(v)|\,dv<\infty$, it follows that 
\[ \left| E\widehat f_{ng}(y)-f_g(y) \right| \le \frac1h \left\{ \sup_{|t|\le M_K h}|r_{ng,y}(t)| \right\} \int_{-M_K}^{M_K}|K_1'(v)|\,dv = O(h^2). 
\]

Combining this with the stochastic term gives
\[
\widehat f_{ng}(y)-f_g(y)
=
O_p\left(
\sqrt{\frac{1+d_{mx}^{(2)}}{nh}}
+
h^2
\right).
\]
The bandwidth that balances the two terms satisfies
$h \asymp
\left(
\frac{1+d_{mx}^{(2)}}{n}
\right)^{1/5}$,
which gives
\[
\widehat f_{ng}(y)-f_g(y)
=
O_p\left[
\left(
\frac{1+d_{mx}^{(2)}}{n}
\right)^{2/5}
\right].
\]
\end{proof}

Next, we establish the following proposition which strengthens the point-wise argument in Proposition~\ref{prop:pdf_conv} to a
locally uniform one over $\mathcal I_{n,L}(g)=\{s:|s-\beta_{nq}(g)|\le L/\sqrt n\}$.
\begin{proposition}
\label{prop:pdf_unif_conv}
Under the conditions of Proposition~\ref{prop:pdf_conv}, suppose in addition
that $F_{ng}$ is local-uniform bandwidth-scale smooth at
$\beta_{nq}(g)$ and
\[
\frac{\{1+d_{mx}^{(2)}\}\log n}{nh}\longrightarrow0.
\]
Then, for every fixed $L<\infty$,
\[
\sup_{y\in\mathcal I_{n,L}(g)}
\left|\widehat f_{ng}(y)-f_g(y)\right|
=
O_p\left[
\left\{\frac{(1+d_{mx}^{(2)})\log n}{nh}\right\}^{1/2}
+h^2
\right].
\]
\end{proposition}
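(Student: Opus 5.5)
We will prove Proposition~\ref{prop:pdf_unif_conv} by the same bias–variance decomposition used for Proposition~\ref{prop:pdf_conv}, now taken uniformly in $y$:
\[
\sup_{y\in\mathcal I_{n,L}(g)}|\widehat f_{ng}(y)-f_g(y)|\le \sup_{y}|\widehat f_{ng}(y)-E\widehat f_{ng}(y)|+\sup_{y}|E\widehat f_{ng}(y)-f_g(y)|.
\]

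\textbf{Bias term.} This is deterministic. The computation in the proof of Proposition~\ref{prop:pdf_conv} gives, for each $y$,
\[
|E\widehat f_{ng}(y)-f_g(y)|\le h^{-1}\sup_{|t|\le Mh}|r_{ng,y}(t)|\int|K_1'|.
\]
The local-uniform bandwidth-scale smoothness assumption bounds $\sup_{y\in\mathcal I_{n,L}(g)}\sup_{|t|\le Mh}|r_{ng,y}(t)|$ by $O(h^3)$. Hence the bias is $O(h^2)$ uniformly over $\mathcal I_{n,L}(g)$.

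\textbf{Stochastic term: discretization.} Place a grid $y_1<\dots<y_{J_n}$ on $\mathcal I_{n,L}(g)$ with mesh $\delta_n=h\,n^{-1}$, so that $J_n\lesssim L n^{1/2}/h$, which is polynomial in $n$. Absolute continuity of $K_1$ with $\int|K_1'|<\infty$ gives Lipschitz control of $y\mapsto h^{-1}K_1((Y_i-y)/h)$, at least when $K_1'$ is bounded, as for the Epanechnikov and Gaussian kernels. The weights $I(G_i=g)/\pi_i(g)$ are bounded by $\underline\pi^{-1}$. Therefore both $\widehat f_{ng}$ and $E\widehat f_{ng}$ change by at most $C\delta_n/h^2=C/(nh)$ between grid points. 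This is $o(\{(1+d_{mx}^{(2)})\log n/(nh)\}^{1/2})$.

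\textbf{Stochastic term: exponential bound at a grid point.} For fixed $y_j$, the summands $Z_i^{g,h}-EZ_i^{g,h}$ satisfy three properties:
\begin{itemize}
\item they are bounded by $C/h$;
\item by Corollary~\ref{cor:two-step-dependency}, they form an exact dependency graph with maximal degree $d_{mx}^{(2)}$;
\item by the argument in Proposition~\ref{prop:pdf_conv}, the sum of their variances is $O(n/h)$, using the local-uniform smoothness to control the $F_{ng}$-mass of intervals of length $O(h)$ uniformly in $y$.
\end{itemize}
Chebyshev's inequality from Lemma~\ref{lem:CND_bound_rate} is too weak for a union bound over $J_n$ points. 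We will therefore use a Bernstein-type inequality for dependency graphs (Janson's 2004 inequality). First properly colour the two-step graph into $\chi\le 1+d_{mx}^{(2)}$ independent classes. Then apply Bernstein's inequality within each class, and combine the classes via Jensen's inequality on the moment generating function. This yields
\[
\Pr\Big(\Big|\sum_i (Z_i^{g,h}-EZ_i^{g,h})\Big|>x\Big)\le 2\exp\Big\{-\frac{x^2}{C(1+d_{mx}^{(2)})(n/h+x/h)}\Big\}.
\]
Take $x=K\{(1+d_{mx}^{(2)})n\log n/h\}^{1/2}$. The condition $(1+d_{mx}^{(2)})\log n/(nh)\to0$ makes $x/h=o(n/h)$, so the Gaussian regime dominates. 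Each probability is then at most $2n^{-cK^2}$, and the union bound over $J_n=O(n^{1/2}/h)$ points tends to zero for $K$ large, because $h^{-1}\le n$.

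\textbf{Conclusion.} Dividing by $n$, the grid maximum of the stochastic term is $O_p(\{(1+d_{mx}^{(2)})\log n/(nh)\}^{1/2})$. Combining this with the discretization error and the uniform bias bound proves the claim.

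\textbf{Main obstacle.} The key step is the exponential inequality under dependence. The two-step dependency graph and the colouring argument must deliver the variance proxy $(1+d_{mx}^{(2)})\,n/h$ rather than the cruder $(1+d_{mx}^{(2)})^2$ scaling. If only the cruder colouring bound is available, we will instead apply Bernstein's inequality class by class and sum the classes, using that each class has variance sum at most $O(n/h)$; this keeps the factor $(1+d_{mx}^{(2)})$ inside the square root.
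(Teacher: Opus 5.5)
Your proposal is correct and follows essentially the same route as the paper. Both arguments get a uniform $O(h^2)$ bias from local-uniform bandwidth-scale smoothness, and handle the stochastic term with a polynomial-size grid on $\mathcal I_{n,L}(g)$, Lipschitz interpolation of the kernel, and a Bernstein-type tail bound on the exact two-step dependency graph combined with a union bound. The only real difference is where that tail bound comes from: you use Janson's (2004) fractional-colouring Bernstein inequality, the paper invokes equation (3.8) of Lee and Song (2019), and both give the same tail $2\exp\{-cnhx^2/((1+d_{mx}^{(2)})(1+x))\}$ for the normalized sum.
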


\begin{proof}[Proof of Proposition~\ref{prop:pdf_unif_conv}]
Take $\Delta_n=1+d_{mx}^{(2)}$ and
\[
\mathbb Z_n(y)=\widehat f_{ng}(y)-E\widehat f_{ng}(y)
=\frac1n\sum_{i\in N_n}
\left[Z_i^{g,h}-EZ_i^{g,h}\right].
\]
The two-step graph is an exact dependency graph for these summands. The
aggregate calculation in the proof of Proposition~\ref{prop:pdf_conv}, now
using local-uniform bandwidth-scale smoothness, gives
\[
\sup_{y\in\mathcal I_{n,L}(g)}
\sum_{i\in N_n}\operatorname{Var}\{Z_i^{g,h}\}
\le Cn/h.
\]
Moreover, boundedness of $K_1$ and overlap imply
$|Z_i^{g,h}-EZ_i^{g,h}|\le C/h$ uniformly in $i$ and $y$.
Equation~(3.8) of \citet{lee2019stable}, applied to the exact two-step
dependency graph with trivial conditioning fields, therefore gives constants
$c,C>0$, independent of $n$, $y$, and $x$, such that
\begin{equation}
\label{eq:kernel-bernstein}
\Pr\{|\mathbb Z_n(y)|>x\}
\le
2\exp\left\{
-\frac{cnhx^2}{\Delta_n(1+x)}
\right\}
\end{equation}
for $y\in\mathcal I_{n,L}(g)$ and $x>0$ in the range used below.

Let
\[
a_n=\left(\frac{\Delta_n\log n}{nh}\right)^{1/2}=o(1)
\]
and choose a grid over $\mathcal I_{n,L}(g)$ with mesh
$\eta_n=h^2a_n$. The rate condition implies that the number $N_n$ of grid
points is at most polynomial in $n$, so $\log N_n=O(\log n)$. Substituting
$x=Ma_n$ into \eqref{eq:kernel-bernstein}, using $a_n=o(1)$, and applying a
union bound gives, for $M$ sufficiently large,
\[
\max_{1\le j\le N_n}|\mathbb Z_n(y_j)|=O_p(a_n).
\]
Because $K_1$ is globally Lipschitz,
\[
|\mathbb Z_n(y)-\mathbb Z_n(y')|
\le C\frac{|y-y'|}{h^2}.
\]
Every point of $\mathcal I_{n,L}(g)$ lies within $\eta_n$ of a grid point,
and hence
\[
\sup_{y\in\mathcal I_{n,L}(g)}|\mathbb Z_n(y)|=O_p(a_n).
\]

The integration-by-parts calculation in Proposition~\ref{prop:pdf_conv} is
uniform under local-uniform bandwidth-scale smoothness and yields
\[
\sup_{y\in\mathcal I_{n,L}(g)}
|E\widehat f_{ng}(y)-f_g(y)|=O(h^2).
\]
Combining the stochastic and bias bounds proves the claim.
\end{proof}

\begin{corollary}
\label{cor:fhat_unif_conv}
Under the conditions of Proposition~\ref{prop:pdf_unif_conv}, suppose
\[
\left\{\frac{(1+d_{mx}^{(2)})\log n}{nh}\right\}^{1/2}+h^2\to0.
\]
Then, for every fixed $L<\infty$,
\[
\sup_{y\in\mathcal I_{n,L}(g)}
|\widehat f_{ng}(y)-f_g(y)|=o_p(1).
\]
If in addition
$\widehat\beta_{nq}(g)-\beta_{nq}(g)=O_p(n^{-1/2})$, then
\[
\widehat f_{ng}\{\widehat\beta_{nq}(g)\}
-f_g\{\widehat\beta_{nq}(g)\}=o_p(1).
\]
The same statements hold with $o_p(1)$ replaced by $O_p(\epsilon_n)$ whenever
the displayed uniform rate is $O_p(\epsilon_n)$.
\end{corollary}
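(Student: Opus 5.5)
The first claim is a special case of Proposition~\ref{prop:pdf_unif_conv}. That proposition gives
\[
\sup_{y\in\mathcal I_{n,L}(g)}|\widehat f_{ng}(y)-f_g(y)|=O_p(r_n),
\qquad
r_n=\left\{\frac{(1+d_{mx}^{(2)})\log n}{nh}\right\}^{1/2}+h^2 .
\]
The displayed hypothesis says $r_n\to0$, and $O_p(r_n)$ with $r_n\to 0$ is $o_p(1)$. I also check that the side condition $\{1+d_{mx}^{(2)}\}\log n/(nh)\to0$ of Proposition~\ref{prop:pdf_unif_conv} is implied: it is the square of the first summand of $r_n$, so it tends to zero.

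For the second claim I fix $\varepsilon>0$ and use tightness of $\sqrt n\{\widehat\beta_{nq}(g)-\beta_{nq}(g)\}$. This gives $L<\infty$ with
\[
\Pr\{\widehat\beta_{nq}(g)\in\mathcal I_{n,L}(g)\}\ge1-\varepsilon
\]
for all large $n$. On that event,
\[
|\widehat f_{ng}\{\widehat\beta_{nq}(g)\}-f_g\{\widehat\beta_{nq}(g)\}|\le\sup_{y\in\mathcal I_{n,L}(g)}|\widehat f_{ng}(y)-f_g(y)|.
\]
The right-hand side is $o_p(1)$ by the first claim applied with this fixed $L$. Hence for any $\delta>0$,
\[
\Pr\bigl(|\widehat f_{ng}\{\widehat\beta_{nq}(g)\}-f_g\{\widehat\beta_{nq}(g)\}|>\delta\bigr)\le\varepsilon+\Pr\Bigl(\sup_{\mathcal I_{n,L}(g)}|\widehat f_{ng}-f_g|>\delta\Bigr).
\]
The last term tends to zero. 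Since $\varepsilon$ is arbitrary, the claim follows.

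The only subtle point is that $L$ must be chosen before the uniform bound is invoked, because the bound holds only for each fixed $L$. This is handled by the $\varepsilon$-event argument above. The random evaluation point needs no measurability or independence, since we bound by a supremum over a deterministic set.

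The $O_p(\epsilon_n)$ version follows by the same argument. Suppose $\sup_{\mathcal I_{n,L}(g)}|\widehat f_{ng}-f_g|=O_p(\epsilon_n)$ for each fixed $L$. Choose $L$ as above and then $K$ so that this supremum exceeds $K\epsilon_n$ with probability at most $\varepsilon$. Then the error at $\widehat\beta_{nq}(g)$ exceeds $K\epsilon_n$ with probability at most $2\varepsilon$, which is the $O_p(\epsilon_n)$ statement.
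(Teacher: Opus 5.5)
Your proposal is correct and follows essentially the same argument as the paper. The first claim is read off from Proposition~\ref{prop:pdf_unif_conv}, and for the plug-in claim you use tightness of $\sqrt n\{\widehat\beta_{nq}(g)-\beta_{nq}(g)\}$ to fix $L$ before applying the uniform bound on the event $\widehat\beta_{nq}(g)\in\mathcal I_{n,L}(g)$; your check that the proposition's side condition is implied and your spelled-out $O_p(\epsilon_n)$ version are small details the paper leaves implicit.
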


\begin{proof}
The first assertion is Proposition~\ref{prop:pdf_unif_conv}. For the plug-in
evaluation, given $\eta>0$, choose a fixed $L$ such that
\[
\limsup_n\Pr\left\{
\sqrt n|\widehat\beta_{nq}(g)-\beta_{nq}(g)|>L
\right\}<\eta.
\]
On the complementary event that $\widehat\beta_{nq}(g)$ lies in
$\mathcal I_{n,L}(g)$, the uniform bound applies. The conclusion follows since $\eta$ is
arbitrary.
\end{proof}

\begin{remark}
When there is no network interference, $d_{mx}^{(2)}=O(1)$, the choice $h \asymp n^{-1/5}$ leads to the standard kernel density rate
$O_p(n^{-2/5})$.
The compact-support assumption is used only to simplify the bias calculation. 
    This conditions is satisfied by common used kernels such as the Epanechnikov kernel
\[
K(u)=\frac34(1-u^2)1\{|u|\le 1\},
\]
the uniform kernel
\[
K(u)=\frac12 1\{|u|\le 1\},
\]
and higher-order compactly supported kernels such as the biweight and triweight kernels. For kernels with unbounded support, $(ii)$ can be substituted with the following conditions on boundary and tail convergence:

    $(ii,a)$ Boundary conditions on $K_1:$
    \[
    K_1(v)\to 0,\qquad vK_1(v)\to 0,\qquad v^2K_1(v)\to 0
    \quad\text{as } |v|\to\infty.
    \]    

    $(ii,b)$ Tail convergence: For some fixed constant $\alpha>0$ satisfying $hn^{\alpha}\to 0$, 
    \[
    \int_{-\infty}^{-Mn^\alpha} |K_1'(v)|\,dv+\int_{Mn^\alpha}^{\infty} |K_1'(v)|\,dv\le O(h^3).
    \]

    And the smoothness condition on $F_{ng}$ can be substituted by the following condition, which enables local convergence on a slighted broader neighborhood of order $Mhn^\alpha$:
    
    $(iii,a)$ For $\alpha>0$ defined in Condition $(ii,b),$ $F_{ng}$ satisfies
    \[
\sup_{|t|\le Mh n^\alpha}
\left|
F_{ng}(y+t)
-
F_{ng}(y)
-
f_g(y)t
-
\frac12 f_g'(y)t^2
\right|
=
O\left(h^3\right).
\]
    
The Conditions $(ii,a)$, $(ii,b)$ and $(iii,a)$ are satisfied for common kernels such as Gaussian kernel, when $h$ is at a polynomial rate of $n$ and $\alpha>0$ can be set as an arbitrary positive number.
\end{remark}

\subsection{Proof of Theorem~\ref*{thm:asy_nor}}

In this subsection, we provide the details on the proof of Theorem~\ref*{thm:asy_nor} and derive the asymptotic normality of the quantile exposure effect estimator. The proof consists of two steps. In the first step, we express the parameter of interest as minimizer of stochastic process, and construct a quadratic-process approximation by bounding the error terms. In the second step, we prove the nearness between the minimizers of the target process and its quadratic approximation, which yields the linear expansion of the estimator, and the theorem is proved by applying CLT in Lemma~\ref{lem:CLT_in_networks}. Define the function $U_{q,n}(\beta;g)=\sum_{i\in N_n}\frac{I(G_i=g)}{\pi_i(g)}\rho_q(Y_i-\beta),$ where $\pi_i(g)$ is the generalized propensity score and $\rho_q(u)=u (q - I(u\leq0))$
is the check loss function. Recall that from Eq. (\ref*{eq:minimization}) in the main paper, the quantile estimator at exposure level $g\in\bm{G}$ is given by
$$
\begin{aligned}
\widehat{\beta}_{nq}(g)&\in\arg\min_{\beta} U_{q,n}(\beta;g)\\
&=\arg\min_{\beta} \sum_{i \in N_n}\frac{I(G_i=g)}{\pi_i(g)}(Y_i - \beta)(q-I\{Y_i\leq \beta\}),
\end{aligned}
$$
where $\beta_{nq}(g)$ denotes the true quantile. 

We have 
$$
\begin{aligned}
\widehat{\beta}_{nq}(g)\in&\underset{\beta}{\arg\min}\Big\{U_{q,n}(\beta;g)-U_{q,n}(\beta_{nq}(g);g)\Big\}\\
=&\underset{\beta}{\arg\min}\sum_{i \in N_n}\frac{I(G_i=g)}{\pi_i(g)}\big[(Y_i - \beta)(q-I\{Y_i\leq \beta\})-(Y_i - \beta_{nq}(g))(q-I\{Y_i\leq \beta_{nq}(g)\})\big]\\
=&\underset{\beta}{\arg\min}\sum_{i \in N_n}\frac{I(G_i=g)}{\pi_i(g)}\big[(I\{Y_i\leq \beta_{nq}(g)\}-q)(\beta - \beta_{nq}(g))\\
&+(Y_i - \beta)(I\{Y_i\leq \beta_{nq}(g)\}-I\{Y_i\leq \beta\})\big].
\end{aligned}
$$
Now, define $u=\sqrt{n}(\beta - \beta_{nq}(g)),\widehat{u}_{q}=\sqrt{n}(\widehat{\beta}_{nq}(g)-\beta_{nq}(g))$, and recall the definitions
\[
D_q(Y_i;g)=q-I\{Y_i\le \beta_{nq}(g)\},
\]
\[
R_q(Y_i,u;g)=\Big\{Y_i-\big(\beta_{nq}(g)+u/\sqrt{n}\big)\Big\}
\Big\{I(Y_i\le \beta_{nq}(g))
-I\big(Y_i\le \beta_{nq}(g)+u/\sqrt{n}\big)\Big\}.
\]
The quantity $D_q(Y_i;g)$ corresponds to the subgradient of the check loss evaluated at $\beta_{nq}(g)$ for observation $i$, while $R_q(Y_i,u;g)$ is the associated remainder term arising from the local linearization of the objective function.
Define the localized objective
\begin{equation}
\label{eq:def_Qq}
    Q_q(u;g)
=\sum_{i\in N_n}\frac{I(G_i=g)}{\pi_i(g)}
\left\{-\frac{u}{\sqrt{n}}D_q(Y_i;g)+R_q(Y_i,u;g)\right\}.
\end{equation}
We have
\[
\widehat u_q\in \arg\min_{u\in\mathbb{R}} Q_q(u;g).
\]
Next, define the quadratic process $\widetilde{Q}_{q}(u;g)$:
$$
\begin{aligned}
\widetilde{Q}_{q}(u;g)&=-\frac{u}{\sqrt{n}}\sum_{i \in N_n}\frac{I(G_i=g)}{\pi_i(g)}D_{q}(Y_i;g)+u^{2}\frac{f_g(\beta_{nq}(g))}{2},
\end{aligned}
$$
where $f_g$ is the limiting density function defined in Assumption~\ref*{asp:pdf}.

\textbf{Step 1.} In the first step, we show that $\xi_{q,n}(u; \pi, g):=Q_{q}(u;g)-\widetilde{Q}_{q}(u; g)$ is $o_p(1)$ for any fixed $u$. For fixed $g\in\boldsymbol{G}$ and $u\in\mathcal{U},$ denote $F_g(\tilde{y})=\int_{-\infty}^{\tilde{y}} f_g(y)\,dy$ as the corresponding cumulative distribution function. From Equation (\ref{eq:def_Qq}) and triangular inequality, the difference term can be bounded by
\begin{equation}
\label{eq:xi_bound}
    \begin{aligned}
    \left|\xi_{q,n}(u;\pi,g)\right|=&\left|\sum_{i \in N_n}\frac{I(G_i=g)}{\pi_i(g)}  R_{q}(Y_i,u;g)-u^{2} \frac{f_g(\beta_{nq}(g))}{2}\right|\\
    \le&\left|\sum_{i \in N_n}E\frac{I(G_i=g) R_{q}(Y_i,u;g)}{\pi_i(g)}-u^{2} \frac{f_g(\beta_{nq}(g))}{2}\right|\\
    &+\left|\sum_{i \in N_n}\frac{I(G_i=g)}{\pi_i(g)}  R_{q}(Y_i,u;g)-\sum_{i \in N_n}E\frac{I(G_i=g) R_{q}(Y_i,u;g)}{\pi_i(g)}\right|\\
   :=&  T_1+T_2.
\end{aligned}
\end{equation}

Without loss of generality, assume that $u>0.$ We now demonstrate that both terms $T_1,T_2$ in the last equation in  (\ref{eq:xi_bound}) is $o_p(1)$. For simplicity, denote $b_n=\beta_{nq}(g)$ and $a_n=u/\sqrt{n}.$

For the first term $T_1$, we start with the following observation

\begin{equation}
    \begin{aligned}
    &\frac{1}{n}\sum_{i \in N_n}E\frac{I(G_i=g)}{\pi_i(g)}  R_{q}(Y_i,u;g)\\
    =& \int R_{q}(Y_i,u;g) d F_{ng}(y)\\
    =& \int_{b_n}^{a_n+b_n}(a_n+b_n-y)d F_{ng}(y)\\
    =& -a_nF_{ng}(b_n)+\int_{b_n}^{a_n+b_n} F_{ng}(y)dy\\
    =& \int_{b_n}^{a_n+b_n} [F_{ng}(y)-F_{ng}(b_n)]dy\\
    =& \int_{b_n}^{a_n+b_n}\{f_g(b_n)(y-b_n)\}dy+o(n^{-1/2})a_n\\
    =& \frac{u^2f_g(b_n)}{2n}+o(n^{-1}),
\end{aligned}
\end{equation}
where the second last equation applies Assumption~\ref*{asp:pdf}. Multiplying $n$ at both sides, we obtain $\sum_{i \in N_n}E\frac{I(G_i=g)}{\pi_i(g)}  R_{q}(Y_i,u;g)=\frac{u^2f_g(b_n)}{2}+o_p(1)$ and therefore $T_1=o_p(1).$

For the second term $T_2,$ applying Lemma~\ref{lem:CND_bound_rate} with $M_i=\frac{I(G_i=g)  R_{q}(Y_i,u;g)}{\pi_i(g)}$ and recall that $a_n=u/\sqrt{n},$ $b_n=\beta_{nq}(g),$ we have
\begin{equation}
\label{eq:T1_bound}
    \begin{aligned}
    T_2=&\sum_{i \in N_n}\left\{\frac{I(G_i=g)  R_{q}(Y_i,u;g)}{\pi_i(g)}-E\frac{I(G_i=g)  R_{q}(Y_i,u;g)}{\pi_i(g)}\right\}\\
    \le& O_p\left(\left\{n(1+d_{mx}^{(2)})  \int_{\beta_{nq}(g)}^{\beta_{nq}(g)+u/\sqrt{n}}R_{q}^2(y,u;g)dF_{ng}(y)\right\}^{1/2}\right).\\
\end{aligned}
\end{equation}
To bound the last term above, similarly as the arguments for $T_1$, we have
\[
\begin{aligned}
&\int_{b_n}^{b_n+a_n}R_q^2(y,u;g)\,dF_{ng}(y)\\
&=\int_{b_n}^{b_n+a_n}(b_n+a_n-y)^2\,dF_{ng}(y)\\
&=
\left.(b_n+a_n-y)^2F_{ng}(y)\right|_{b_n}^{b_n+a_n}
+
2\int_{b_n}^{b_n+a_n}(b_n+a_n-y)F_{ng}(y)\,dy\\
&=
-a_n^2F_{ng}(b_n)
+
2\int_{b_n}^{b_n+a_n}(b_n+a_n-y)F_{ng}(y)\,dy\\
&=
2\int_{b_n}^{b_n+a_n}(b_n+a_n-y)
\{F_{ng}(y)-F_{ng}(b_n)\}\,dy.
\end{aligned}
\]
From the local expansion of $F_{ng}$ at $b_n=\beta_{nq}(g)$ in Assumption~\ref*{asp:pdf},
\[
F_{ng}(y)-F_{ng}(b_n)
=
f_g(b_n)(y-b_n)
+o(n^{-1/2})
\] for $|y-b_n|\le a_n.$ Therefore we obtain
\[
\begin{aligned}
&\int_{b_n}^{b_n+a_n}R_q^2(y,u;g)\,dF_{ng}(y)\\
&=
2f_g(b_n)\int_0^{a_n}(a_n-s)s\,ds
+
o(n^{-1/2})\int_0^{a_n}(a_n-s)\,ds\\
&=c_0f_g(b_n)n^{-3/2}+
o(n^{-3/2})
\end{aligned}
\]
for some constant $c_0$ and therefore 
\begin{equation}
\label{eq:Rq2_bound}
\int_{b_n}^{b_n+a_n}R_q^2(y,u;g)\,dF_{ng}(y)=O(n^{-3/2}).
\end{equation}

Combining (\ref{eq:T1_bound}) together with (\ref{eq:Rq2_bound}) and meanwhile note that $d_{mx}^{(2)}=o(n^{1/2})$, we finally get
$$
\begin{aligned}
    T_2\le&O_p\left(\left\{n(1+d_{mx}^{(2)})  \int_{\beta_{nq}(g)}^{\beta_{nq}(g)+u/\sqrt{n}}R_{q}^2(y,u;g)dF_g(y)+(1+d_{mx}^{(2)})n^{-1/2}\right\}^{1/2}\right)\\
    =& O_p\left(\left\{(1+d_{mx}^{(2)})n^{-1/2}\right\}^{1/2}\right)=o_p(1).\\
\end{aligned}
$$ Combining the results above, we finally have $|\xi_{q,n}(u; \pi, g)|\le T_1+T_2=o_p(1),$ and by convexity, for fixed $M>0$, the objective function 
$\sup_{u\in M}Q_{q}(u;g)$ can be uniformly approximated by the quadratic process $\widetilde{Q}_{q}(u;g)$ in the sense that $$\sup_{|u|\le M}\left|Q_{q}(u;g)-\widetilde{Q}_{q}(u;g)\right|=o_p(1).$$

\textbf{Step 2.}
Next, let $\widetilde{u}_{q}$ be the minimizer of the random quadratic function $\widetilde{Q}_{q}(u;g)$. Recall $\widehat{u}_q$ is the minimizer of $Q_{q}(u;\widehat{\pi},g)$. 
The nearness of minimizers of convex
random functions result in ~\citet{hjort2011asymptotics}, see for example Lemma A.4 of~\citet{firpo2007efficient}, ensures the following probabilistic bound on $\widehat{u}_q-\widetilde{u}_q$: for each $\varepsilon > 0$,
\[
\mathrm{pr}(|\widehat{u}_q - \widetilde{u}_q| \geq \varepsilon) \leq \mathrm{pr}\left( \sup_{|u - \widetilde{u}_q| \leq \varepsilon} |\xi_{q,n}(u;\pi,g)| \geq \frac{1}{4} f_g(\beta_{nq}(g)) \varepsilon^2 \right),
\] 
furthermore,
\[
\mathrm{pr}\left( \sup_{|u - \widetilde{u}_q| \leq \varepsilon} |\xi_{q,n}(u;\pi,g)| \geq \frac{1}{4} f_g(\beta_{nq}(g)) \varepsilon^2 \right) = o(1).
\] 
Therefore, $|\widehat{u}_q - \widetilde{u}_q|=o_p(1).$ Define $\widetilde{\beta}_q(g) = \widetilde{u}_q / \sqrt{n} + \beta_{nq}(g)$ and recall that $\widehat{\beta}_{nq}(g)=\widehat{u}_q/ \sqrt{n} + \beta_{nq}(g)$, we have
\[
\sqrt{n}|\widehat{\beta}_{nq}(g) - \widetilde{\beta}_q(g)| = |\sqrt{n}(\widehat{\beta}_{nq}(g) - \beta_{nq}(g)) - \sqrt{n}(\widetilde{\beta}_q(g) - \beta_{nq}(g))|
= |\widehat{u}_q - \widetilde{u}_q| = o_p(1).
\]
Finally, we observe that from the quadratic form of $\widetilde{Q}_{q}(u;g)$, the minimizer $\widetilde{u}_{q}$ has an explicit expression
\[
\begin{aligned}\widetilde{u}_{q}=\frac{-1}{\sqrt{n}f_g(\beta_{nq}(g))}\sum_{i \in N_n}\frac{I(G_i=g)}{\pi_i(g)}  D_{q}(Y_i;g).
\end{aligned}
\] 
Therefore, 
\[\sqrt{n}\left\{\widehat{\beta}_{nq}(g)-\beta_{nq}(g)\right\}=\frac{-1}{\sqrt{n}}\sum_{i \in N_n}\psi_{i,q}(Y_i,g),\]
where $\psi_{i,q}(y,g):=\frac{I\{G_i=g\}}{\pi_i^{-1}(g)}f_g^{-1}(\beta_{nq}(g))D_{q}(y;g)$. Comparing the contrasts on exposure levels $g,g'$ we get
$$
\begin{aligned}
    &\sqrt{n}\left\{\widehat{\tau}_{nq}(g,g')-\tau_{nq}(g,g')\right\}\\
    =&\frac{1}{\sqrt{n}} \sum_{i \in N_n}\Big\{\psi_{i,q}(Y_i,g)-\psi_{i,q}(Y_i,g')\Big\} + o_p(1)\\
    =&\frac{1}{\sqrt{n}} \sum_{i \in N_n}\Big\{\psi_{i,q}(Y_i,g)-\psi_{i,q}(Y_i,g')-E[\psi_{i,q}(Y_i,g)-\psi_{i,q}(Y_i,g')]\Big\}\\
    &+ \frac{1}{\sqrt{n}} \sum_{i \in N_n}\Big\{E\psi_{i,q}(Y_i,g)-E\psi_{i,q}(Y_i,g')\Big\}+o_p(1)\\
\end{aligned}
$$ The first term in the last equation above is asymptotically normal by applying the central limit theorem in networks in Lemma~\ref{lem:CLT_in_networks} with $m(W_i,G_i,Y_i,\bm{X})=n^{-1/2}[\psi_{i,q}(Y_i,g)-\psi_{i,q}(Y_i,g')-E\{\psi_{i,q}(Y_i,g)-\psi_{i,q}(Y_i,g')\}].$ 
To evaluate the second term, we note that since $\beta_{nq}(g)$ is the $q$-th quantile of $F_{ng}$, we have
$F_{ng}\{\beta_{nq}(g)-\}\le q \le F_{ng}\{\beta_{nq}(g)\}$ and
$
E\{q-n^{-1}\sum_{i\in N_n}\frac{I(G_i=g)}{\pi_i(g)}I(Y_i\le \beta_{nq}(g))\}
=
q-F_{ng}\{\beta_{nq}(g)\}=o(n^{-1/2})$.
This implies $\frac{1}{\sqrt{n}} \sum_{i \in N_n}E\psi_{i,q}(Y_i,g)=o_p(1)$ and combining the same result at exposure level $g'$ by symmetry we get $$
\frac{1}{\sqrt{n}} \sum_{i \in N_n}\Big\{E\psi_{i,q}(Y_i,g)-E\psi_{i,q}(Y_i,g')\Big\}=o_p(1).
$$
Theorem~\ref*{thm:asy_nor} is proved by combining the observations above.

\subsection{Proof of Theorem~\ref*{thm:asy-nor-janson}}

\begin{proof}
Fix \(q\in(0,1)\) and \(g,g'\in\boldsymbol G\).  We first verify that
Assumption~\ref*{asp:janson} preserves the local quadratic expansion used in
the proof of Theorem~\ref*{thm:asy_nor}.  Put
\[
\Delta_n=1+d_{mx}^{(2)}.
\]
The exposure-specific variance bound in Assumption~\ref*{asp:janson} and the
inequality \(\operatorname{Var}(X-Y)\leq
2\operatorname{Var}(X)+2\operatorname{Var}(Y)\) imply
\(\sigma_n^2=O(1)\).  Hence the rate condition in
Assumption~\ref*{asp:janson} gives
\[
n^{1/m-1/2}\Delta_n^{1-1/m}
=o(1),
\]
and therefore
\[
\Delta_n
=o\left(n^{(m-2)/\{2(m-1)\}}\right)
=o(n^{1/2}).
\]

We next recall how this rate enters the linearization.  For
\(s\in\{g,g'\}\), write
\[
b_{n,s}=\beta_{nq}(s),\qquad
D_{i,s}=q-I(Y_i\leq b_{n,s}),
\]
and let
\[
Q_{n,s}(u)
=U_{q,n}(b_{n,s}+u/\sqrt n;s)-U_{q,n}(b_{n,s};s).
\]
The check-loss identity used in the proof of
Theorem~\ref*{thm:asy_nor} yields, for every fixed \(u\),
\[
Q_{n,s}(u)
=-\frac{u}{\sqrt n}\sum_{i\in N_n}
  \frac{I(G_i=s)}{\pi_i(s)}D_{i,s}
+\frac{u^2}{2}f_s(b_{n,s})+o_p(1).
\]
Indeed, Assumption~\ref*{asp:pdf} gives the quadratic term in expectation.
For the centered remainder, the calculation in
Equations~\eqref{eq:T1_bound}--\eqref{eq:Rq2_bound} gives the bound
\[
O_p\!\left(\{\Delta_n n^{-1/2}\}^{1/2}\right)=o_p(1),
\]
where the last equality follows from the displayed degree rate.  The same
argument applies for negative \(u\).  Convexity then makes this approximation
uniform on every compact set.

To justify the minimizer step, define
\[
L_{n,s}=\frac1{\sqrt n}\sum_{i\in N_n}
\frac{I(G_i=s)}{\pi_i(s)}D_{i,s}.
\]
By the definition of the exposure-specific score,
\[
L_{n,s}
=f_s(b_{n,s})\frac1{\sqrt n}\sum_{i\in N_n}
\psi_{i,q}(Y_i,s).
\]
Moreover,
\[
E L_{n,s}
=\sqrt n\{q-F_{ns}(b_{n,s})\}=o(1).
\]
Indeed, \(F_{ns}(b_{n,s}-)\leq q\leq F_{ns}(b_{n,s})\), so
\[
0\leq F_{ns}(b_{n,s})-q
\leq F_{ns}(b_{n,s})-F_{ns}(b_{n,s}-)
=o(n^{-1/2})
\]
under Assumption~\ref*{asp:pdf}.  The exposure-specific variance bound in
Assumption~\ref*{asp:janson}, together with the upper bound on
\(f_s(b_{n,s})\), now gives \(L_{n,s}=O_p(1)\).
The quadratic approximation has the unique minimizer
\[
\widetilde u_{n,s}
=\frac{L_{n,s}}{f_s(b_{n,s})}
=\frac1{\sqrt n}\sum_{i\in N_n}\psi_{i,q}(Y_i,s),
\]
which is \(O_p(1)\) because \(f_s(b_{n,s})\geq c_f\).  The convex-minimizer
argument used in Step~2 of the proof of Theorem~\ref*{thm:asy_nor} therefore
implies
\[
\sqrt n\{\widehat\beta_{nq}(s)-\beta_{nq}(s)\}
=\frac1{\sqrt n}\sum_{i\in N_n}\psi_{i,q}(Y_i,s)+o_p(1).
\]
Subtracting the expansions for \(s=g\) and \(s=g'\) proves
\[
\sqrt n\{\widehat\tau_{nq}(g,g')-\tau_{nq}(g,g')\}
=\frac1{\sqrt n}\sum_{i\in N_n}\psi_{i,q}+o_p(1),
\]
where
\(\psi_{i,q}=\psi_{i,q}(Y_i,g)-\psi_{i,q}(Y_i,g')\).

It remains to establish the limiting distribution.  Let
\[
Z_{i,n}=\psi_{i,q}-E\psi_{i,q},\qquad
S_n=\sum_{i\in N_n}Z_{i,n},\qquad
s_n^2=\operatorname{Var}(S_n)=n\sigma_n^2.
\]
By Corollary~\ref*{cor:two-step-dependency}, the two-step graph is a
dependency graph for \(\{Z_{i,n}:i\in N_n\}\), and its maximal degree is at
most \(d_{mx}^{(2)}=\Delta_n-1\).  Assumptions~\ref*{asp:overlap}
and~\ref*{asp:pdf} imply, for all sufficiently large \(n\),
\[
|Z_{i,n}|\leq B:=\frac{4}{\underline\pi c_f}
\qquad\text{almost surely, uniformly in }i.
\]
Consequently, the condition in Janson's dependency-graph central limit
theorem \citep{janson1988normal} is
\[
\left(\frac n{\Delta_n}\right)^{1/m}
\frac{\Delta_n B}{s_n}
=B\left(\frac n{\Delta_n}\right)^{1/m}
\frac{\Delta_n}{\sqrt n\,\sigma_n}
\longrightarrow0,
\]
which is precisely the rate condition in Assumption~\ref*{asp:janson}, up to
the fixed constant \(B\).  Since \(\liminf_n\sigma_n^2>0\), \(s_n^2>0\) for all
sufficiently large \(n\), and Janson's theorem gives
\[
\frac{S_n}{s_n}
=\frac{n^{-1/2}\sum_{i\in N_n}Z_{i,n}}{\sigma_n}
\ \longrightarrow\ N(0,1)
\]
in distribution.

Finally, the centering term is negligible.  For each
\(s\in\{g,g'\}\),
\[
\frac1{\sqrt n}\sum_{i\in N_n}E\psi_{i,q}(Y_i,s)
=\frac{\sqrt n}{f_s(b_{n,s})}
\{q-F_{ns}(b_{n,s})\}
=o(1),
\]
again by the quantile-jump condition in Assumption~\ref*{asp:pdf}.  Thus
\(n^{-1/2}\sum_iE\psi_{i,q}=o(1)\).  Combining this fact with the linear
representation and the centered central limit theorem above, and using
\(\liminf_n\sigma_n^2>0\), yields
\[
\sqrt n\,\sigma_n^{-1}
\{\widehat\tau_{nq}(g,g')-\tau_{nq}(g,g')\}
=\frac{S_n}{s_n}+o_p(1)
\ \longrightarrow\ N(0,1)
\]
in distribution.  This proves both conclusions of the theorem.
\end{proof}

\subsection{Proof of Theorem~\ref*{thm:var_est}}
Recall that for fixed $g,g'\in\boldsymbol{G},$ define $\psi_{i,q}=\psi_{i,q}(Y_i,g)-\psi_{i,q}(Y_i,g')$ and $\widehat{\psi}_{i,q}=\widehat{\psi}_{i,q}(Y_i,g)-\widehat{\psi}_{i,q}(Y_i,g')$, where $$
    \widehat{\psi}_{i,q}(Y_i,g)=\frac{I(G_i=g)}{\pi_i(g)}\widehat{f}_{ng}^{-1}(\widehat{\beta}_{nq}(g))\left\{ q - I(Y_i \leq \widehat{\beta}_{nq}(g)) \right\}
    $$ 
    is the plug-in estimator using the estimated density $\widehat{f}_{ng}$ and quantiles $\widehat{\beta}_{nq}(g),\widehat{\beta}_{nq}(g')$. 
    We use superscript $M$ to denote the average over finite samples, for example, $\widehat{\psi}_{q}^M=n^{-1}\sum_{i\in N_n}\widehat{\psi}_{i,q}.$ We denote the variance estimator $\widehat{\sigma}_n^2=\frac{1}{n}\sum_{i,j\in N_n,\ell_{A}(i,j)\le 2}(\widehat{\psi}_{i,q}-\widehat{\psi}^M_q)(\widehat{\psi}_{j,q}-\widehat{\psi}^M_q)$ and the true variance $\sigma_n^2=E\{\frac{1}{\sqrt{n}}\sum_{i\in N_n}(\psi_{i,q}-E\psi_{i,q})\}^2.$ Note that the true variance can be expressed as:
    $$
    \sigma_n^2=\frac{1}{n}\sum_{i,j\in N_n, \ell_{A}(i,j)\le 2}E\left(\psi_{i,q}-E\psi_{i,q}\right)\left(\psi_{j,q}-E\psi_{j,q}\right),
    $$ 

Let
$\mathcal P_n=\{(i,j)\in N_n^2:\ell_A(i,j)\le 2\}$, and
$D_n=\max_i\#\{j:(i,j)\in\mathcal P_n\}$.
We have $D_n\lesssim 1+d_{mx}^{(2)}$.
Define
\[
\Delta_i=\widehat\psi_{i,q}-\psi_{i,q},
\qquad
\Delta^M=\frac1n\sum_{i\in N_n}\Delta_i.
\]
Then
\[
\widehat\psi_{i,q}-\widehat\psi_q^M
=(\psi_{i,q}-\psi_q^M)+(\Delta_i-\Delta^M).
\]
Define the oracle sample-centered quadratic form
\[
\widetilde\sigma_n^2
=\frac1n\sum_{(i,j)\in\mathcal P_n}
(\psi_{i,q}-\psi_q^M)(\psi_{j,q}-\psi_q^M).
\]
Then
\begin{equation}
\label{eq:var_err}
\widehat\sigma_n^2-\widetilde\sigma_n^2
=V_A+V_B+V_C,
\end{equation}
where
\[
V_A=\frac1n
\sum_{(i,j)\in\mathcal P_n}
(\Delta_i-\Delta^M)(\psi_{j,q}-\psi_q^M),
\]
\[
V_B=\frac1n
\sum_{(i,j)\in\mathcal P_n}
(\psi_{i,q}-\psi_q^M)(\Delta_j-\Delta^M),
\]
\[
V_C=
\frac1n\sum_{(i,j)\in\mathcal P_n}
(\Delta_i-\Delta^M)(\Delta_j-\Delta^M),
\]
Here $V_A,V_B,V_C$ are the plug-in terms. Since \(\mathcal P_n\) is symmetric,
$(i,j)\in\mathcal P_n \iff (j,i)\in\mathcal P_n$.
Therefore, relabeling the indices yields
$V_B=V_A$. Consequently, the decomposition in \eqref{eq:var_err} can be rewritten as
\[
\widehat\sigma_n^2-\widetilde\sigma_n^2
=2V_A+V_C.
\]
The next two lemmas show that the plug-in difference is $o_p(1)$.

\begin{lemma}  
\label{lem:VA}
Assume $D_n\left(\epsilon_n+n^{-1/2}\right)=o(1)$. Then
\[V_A=o_p(1).\]
\end{lemma}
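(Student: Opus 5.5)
The plan is to bound $|V_A|$ by a Cauchy--Schwarz-type argument that exploits the sparsity of $\mathcal P_n$, together with a uniform bound on the plug-in errors $\Delta_i$. First observe that the true scores are bounded: by Assumptions~\ref{asp:overlap} and~\ref{asp:pdf}, $|\psi_{i,q}|\le C$ uniformly in $i$, hence also $|\psi_{i,q}-\psi_q^M|\le 2C$. Writing $a_i=\Delta_i-\Delta^M$ and $b_j=\psi_{j,q}-\psi_q^M$, we have
\[
|V_A|\le \frac1n\sum_{i\in N_n}|a_i|\sum_{j:(i,j)\in\mathcal P_n}|b_j|
\le 2C D_n\,\frac1n\sum_{i\in N_n}|a_i|
\le 4C D_n\,\frac1n\sum_{i\in N_n}|\Delta_i|.
\]
It therefore suffices to show $n^{-1}\sum_i|\Delta_i|=O_p(\epsilon_n+n^{-1/2})$. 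The hypothesis $D_n(\epsilon_n+n^{-1/2})=o(1)$ then gives $V_A=o_p(1)$.

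To control $\Delta_i$, I decompose it exposure by exposure. For $s\in\{g,g'\}$, write $\widehat f_s=\widehat f_{ns}(\widehat\beta_{nq}(s))$ and $f_s=f_s(\beta_{nq}(s))$. Then
\[
\Delta_i(s)=\frac{I(G_i=s)}{\pi_i(s)}\Big[(\widehat f_s^{-1}-f_s^{-1})\{q-I(Y_i\le\widehat\beta_{nq}(s))\}
+f_s^{-1}\{I(Y_i\le\beta_{nq}(s))-I(Y_i\le\widehat\beta_{nq}(s))\}\Big].
\]

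\emph{First term.} Since $\widehat\beta_{nq}(s)-\beta_{nq}(s)=O_p(n^{-1/2})$ by Theorem~\ref{thm:asy_nor}, Corollary~\ref{cor:fhat_unif_conv} applies with Assumption~\ref{asp:error_pdf}. Combined with the local Lipschitz property of $f_s$ from Assumption~\ref{asp:pdf}, this gives $|\widehat f_s-f_s|=O_p(\epsilon_n+n^{-1/2})$. Because $f_s\ge c_f$, we get $|\widehat f_s^{-1}-f_s^{-1}|=O_p(\epsilon_n+n^{-1/2})$. By overlap, the first term then contributes $O_p(\epsilon_n+n^{-1/2})$ uniformly in $i$.

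\emph{Second term.} This is nonzero only when $Y_i$ lies between $\beta_{nq}(s)$ and $\widehat\beta_{nq}(s)$. Its average is therefore bounded by
\[
\frac{C}{n}\sum_i\frac{I(G_i=s)}{\pi_i(s)}\,I\{|Y_i-\beta_{nq}(s)|\le L/\sqrt n\}
\]
on the event $\sqrt n|\widehat\beta_{nq}(s)-\beta_{nq}(s)|\le L$, whose probability can be made arbitrarily close to one. Note that the indicator is evaluated at a \emph{fixed} interval, which is the point of passing to this event. The expectation of this average is $F_{ns}(\beta_{nq}(s)+L/\sqrt n)-F_{ns}(\beta_{nq}(s)-L/\sqrt n-)$, which is $O(n^{-1/2})$ by the local linear expansion and the jump condition in Assumption~\ref{asp:pdf}. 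Its fluctuation is bounded via Lemma~\ref{lem:CND_bound_rate}: the summands have variance $O(n^{-1/2})$ each, so the centered average is $O_p(\{(1+d_{mx}^{(2)})n^{-1/2}/n\}^{1/2})=o_p(n^{-1/2})$, since $d_{mx}^{(2)}=o(n^{1/2})$. Hence this term is $O_p(n^{-1/2})$, and summing over $s$ yields $n^{-1}\sum_i|\Delta_i|=O_p(\epsilon_n+n^{-1/2})$.

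The main obstacle is the indicator-difference term: the random threshold $\widehat\beta_{nq}(s)$ is correlated with all the $Y_i$, so a direct expectation is not available. The sandwich by the fixed window $\mathcal I_{n,L}(s)$ resolves this. The boundedness of $\psi_{i,q}$ and the evaluation of $\widehat f$ at a random point are handled by results already stated earlier in the paper.
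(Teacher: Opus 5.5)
Your proof is correct and follows the paper's argument essentially step for step: the same bound $|V_A|\lesssim D_n\|\Delta\|_{1,n}$, the same split of $\widehat H_{s,q}-H_{s,q}$ into a density-ratio term and an indicator-difference term, and the same localization of $\widehat\beta_{nq}(s)$ to a fixed $O(n^{-1/2})$ window. The one difference is how you bound the fixed-window average. The paper uses Markov's inequality, since that average is nonnegative with expectation $O(n^{-1/2})$; this is simpler than your variance bound via Lemma~\ref{lem:CND_bound_rate} and needs no degree condition. Note also that your argument needs only the \emph{sum} of the summand variances to be $O(n^{1/2})$, not each variance, and your needed condition $d_{mx}^{(2)}=o(n^{1/2})$ already follows from the lemma's hypothesis because $D_n=1+d_{mx}^{(2)}$.
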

\begin{proof}[Proof of Lemma~\ref{lem:VA}]
    Since \(\psi_{i,q}\) is uniformly bounded, we have
$\max_{j\in N_n}|\psi_{j,q}-\psi_q^M|=O_p(1)$.
Therefore,
\[
\begin{aligned}
|V_A|\le
\max_{j\in N_n}|\psi_{j,q}-\psi_q^M|\cdot
\frac1n\sum_{(i,j)\in\mathcal P_n}|\Delta_i-\Delta^M| \le O(D_n)\frac1n\sum_{i\in N_n}|\Delta_i-\Delta^M|.
\end{aligned}
\]
Since $|\Delta_i-\Delta^M|\le |\Delta_i|+|\Delta^M|$ and
$|\Delta^M|\le\frac1n\sum_{i\in N_n}|\Delta_i|$,
we have
\[
\frac1n\sum_{i\in N_n}|\Delta_i-\Delta^M|
\le
2\frac1n\sum_{i\in N_n}|\Delta_i|.
\]
Define
$\|\Delta\|_{1,n}:=\frac1n\sum_{i\in N_n}|\Delta_i|$,
then
\[
|V_A| \le O(D_n)\|\Delta\|_{1,n}.
\]

It remains to bound \(\|\Delta\|_{1,n}\). For \(s\in\{g,g'\}\), write
\[
W_i(s)=\frac{I(G_i=s)}{\pi_i(s)},
\qquad
a_s=f_s^{-1}\{\beta_{nq}(s)\},
\qquad
\widehat a_s=\widehat f_{ns}^{-1}\{\widehat\beta_{nq}(s)\},
\]
and
\[
H_{s,q}(Y_i)=a_s\{q-I(Y_i\le \beta_{nq}(s))\},
\qquad
\widehat H_{s,q}(Y_i)=\widehat a_s
\{q-I(Y_i\le \widehat\beta_{nq}(s))\}.
\]
Then
\[
\Delta_i=\widehat\psi_{i,q}-\psi_{i,q}
=W_i(g)\{\widehat H_{g,q}(Y_i)-H_{g,q}(Y_i)\}
-W_i(g')\{\widehat H_{g',q}(Y_i)-H_{g',q}(Y_i)\}.
\]
Write
\[
\begin{aligned}
\widehat H_{s,q}(Y_i)-H_{s,q}(Y_i)
=&
(\widehat a_s-a_s)\{q-I(Y_i\le \widehat\beta_{nq}(s))\}\\
&+
a_s\{I(Y_i\le \beta_{nq}(s))-I(Y_i\le \widehat\beta_{nq}(s))\}.\\
\end{aligned}
\]
Using boundedness of \(W_i(s)\), \(a_s\), and \(q-I(\cdot)\), we have
\[
\begin{aligned}
&\frac1n\sum_{i\in N_n}
|W_i(s)\{\widehat H_{s,q}(Y_i)-H_{s,q}(Y_i)\}|\\
&\le
c_0|\widehat a_s-a_s| 
+c_0\frac1n\sum_{i\in N_n}I\left(|Y_i-\beta_{nq}(s)|\le \rho_{n,s}\right),
\end{aligned}
\]
where $\rho_{n,g}=|\widehat\beta_{nq}(g)-\beta_{nq}(g)|$ and 
$c_0$ is some positive constant, 

Since $$
|\hat a_g-a_g|=\left|\frac{\hat f_{ng}(\hat\beta_{nq}(g))-f_g(\beta_{nq}(g))}{\hat f_{ng}(\hat\beta_{nq}(g))\cdot f_g(\beta_{nq}(g))}\right|,
$$ from Corollary~\ref{cor:fhat_unif_conv}, $|\hat f_{ng}(\hat\beta_{nq}(g))-f_g(\hat\beta_{nq}(g))|=o_p(\epsilon_n)$ from the convergence of density estimator $\hat f_{ng}$ in Assumption~\ref*{asp:error_pdf}. Meanwhile, from the differentiability of $f_g,$ $|f_{g}(\hat\beta_{nq}(g))-f_g(\beta_{nq}(g))|=O(|\hat\beta_{nq}(g)-\beta_{nq}(g)|)\le O_p(n^{-1/2}).$ Combining the results above we have $|\hat a_g-a_g|\le O_p(\epsilon_n+n^{-1/2}).$ Let
\[
M_{n,g}(r)=\frac1n\sum_{i\in N_n}\frac{I(G_i=g)}{\pi_i(g)}I\left(|Y_i-\beta_{nq}(g)|\le r\right).
\]
We will show $M_{n,g}(\rho_{n,g})=O_p(n^{-1/2})$.
Since $\sqrt{n}\rho_{n,g}$ is bounded in probability, for any fixed $\varepsilon>0$, there exists
a constant $C<\infty$ such that
\[
\limsup_{n\to\infty}P\left(\rho_{n,g}>Cn^{-1/2}\right)
\le\frac{\varepsilon}{2}.
\]
Because $M_{n,g}(r)$ is nondecreasing in $r$, on the event
$\mathcal E_{n,C}=
\left\{\rho_{n,g}\le Cn^{-1/2}\right\}$,
we have
\[
M_{n,g}(\rho_{n,g}) \le M_{n,g}(Cn^{-1/2}).
\]
Therefore, for any constant $K>0$,
\[
\begin{aligned}
P\left\{M_{n,g}(\rho_{n,g})>K n^{-1/2}\right\}
&\le P\left(\rho_{n,g}>Cn^{-1/2}\right) +P\left\{M_{n,g}(Cn^{-1/2})>K n^{-1/2}\right\}.
\end{aligned}
\]

By Assumption 3, for every fixed \(C<\infty\),
\[
E\{M_{n,g}(Cn^{-1/2})\}\leq C_0n^{-1/2},
\]
for some positive constant $C_0$, for $n$ sufficiently large. By Markov's inequality,
\[
\begin{aligned}
P\left\{M_{n,g}(Cn^{-1/2})>K n^{-1/2}\right\}
&\le
\frac{E\{M_{n,g}(Cn^{-1/2})\}}{K n^{-1/2}} 
&\le \frac{C_0}{K},
\end{aligned}
\]
Choose \(K\) large enough so that
$\frac{C_0}{K}\le \frac{\varepsilon}{2}$. Then
\[
\limsup_{n\to\infty}
P\left\{M_{n,g}(\rho_{n,g})>K n^{-1/2}\right\}
\le \varepsilon.
\]
Since \(\varepsilon>0\) is arbitrary, we conclude that
$M_{n,g}(\rho_{n,g})=O_p(n^{-1/2})$.

Consequently,
\[
\frac1n\sum_{i\in N_n}
|W_i(s)\{\widehat H_{s,q}(Y_i)-H_{s,q}(Y_i)\}|
=O_p(\epsilon_n+n^{-1/2}).
\]

Applying this bound to \(s=g\) and \(s=g'\), we obtain
\[
\|\Delta\|_{1,n}=O_p\left(\epsilon_n+n^{-1/2}\right).
\]
Therefore,
\[
V_A=O_p\left[D_n\left(\epsilon_n+n^{-1/2}\right)\right]=o_p(1).
\]
\end{proof} 

\begin{lemma}  Assume $D_n\left(\epsilon_n+n^{-1/2}\right)=o(1)$. Then
\[V_C=o_p(1).\]
\label{lem:VC}
\end{lemma}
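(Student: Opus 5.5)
We bound $V_C$ by reducing it to the same $\ell_1$-type control of the plug-in errors $\Delta_i$ that was used for $V_A$ in Lemma~\ref{lem:VA}. The additional ingredient is that each $\Delta_i$ is uniformly bounded in probability.

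\textbf{Step 1: uniform boundedness.} Under Assumption~\ref{asp:overlap} we have $|I(G_i=s)/\pi_i(s)|\le \underline\pi^{-1}$. Also $|q-I(\cdot)|\le 1$. By Assumption~\ref{asp:error_pdf} and Corollary~\ref{cor:fhat_unif_conv}, $\widehat f_{ns}\{\widehat\beta_{nq}(s)\}\to_p f_s\{\beta_{nq}(s)\}\ge c_f$. Hence, with probability tending to one,
\[
\widehat a_s\le 2/c_f \quad\text{for } s\in\{g,g'\}.
\]
It follows that $\max_i|\widehat\psi_{i,q}|=O_p(1)$ and $\max_i|\psi_{i,q}|=O(1)$. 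Consequently $\max_i|\Delta_i|=O_p(1)$ and $\max_i|\Delta_i-\Delta^M|=O_p(1)$.

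\textbf{Step 2: reduce to the $\ell_1$ norm.} Bound one factor of each product in $V_C$ by its maximum:
\[
|V_C|\le \max_{j}|\Delta_j-\Delta^M|\cdot\frac1n\sum_{(i,j)\in\mathcal P_n}|\Delta_i-\Delta^M|.
\]
Each $i$ appears in at most $D_n$ pairs, so the double sum is at most $D_n\sum_i|\Delta_i-\Delta^M|$. Using $|\Delta_i-\Delta^M|\le|\Delta_i|+|\Delta^M|$, this gives
\[
|V_C|\le O_p(1)\cdot 2D_n\|\Delta\|_{1,n}.
\]

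\textbf{Step 3: conclude.} The proof of Lemma~\ref{lem:VA} already established $\|\Delta\|_{1,n}=O_p(\epsilon_n+n^{-1/2})$. That bound came from the decomposition of $\widehat H_{s,q}-H_{s,q}$ into two pieces: the density-inverse error $|\widehat a_s-a_s|$, and the local indicator mass $M_{n,s}(\rho_{n,s})=O_p(n^{-1/2})$, which follows from Assumption~\ref{asp:pdf} and Markov's inequality. Therefore
\[
V_C=O_p\{D_n(\epsilon_n+n^{-1/2})\}=o_p(1)
\]
under the stated rate condition, which Assumption~\ref{asp:degree_rate_var} supplies since $D_n\lesssim 1+d_{mx}^{(2)}$.

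\textbf{Main obstacle.} The only delicate point is Step 1: showing that $\widehat a_s$ stays bounded, i.e.\ that $\widehat f_{ns}$ evaluated at the random point $\widehat\beta_{nq}(s)$ is bounded away from zero. This requires two facts. First, $\widehat\beta_{nq}(s)-\beta_{nq}(s)=O_p(n^{-1/2})$, which is given by the linear expansion in Theorem~\ref{thm:asy_nor} together with the variance upper bound. Second, the locally uniform density error on $\mathcal I_{n,L}(s)$, which is given by Assumption~\ref{asp:error_pdf}. The two are combined via the event-splitting argument of Corollary~\ref{cor:fhat_unif_conv}.

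Alternatively, Step 2 could use Cauchy–Schwarz to get $|V_C|\le D_n n^{-1}\sum_i(\Delta_i-\Delta^M)^2$. Since the $\Delta_i$ are bounded, $n^{-1}\sum_i\Delta_i^2\lesssim\|\Delta\|_{1,n}$, which yields the same rate.
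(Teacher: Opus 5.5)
Your proof is correct and follows essentially the paper's argument: establish $\sup_i|\Delta_i-\Delta^M|=O_p(1)$ from overlap, the density lower bound and $|\widehat a_s-a_s|=O_p(\epsilon_n+n^{-1/2})$, then reduce $V_C$ to $O_p(D_n\|\Delta\|_{1,n})$ and invoke the $\ell_1$ bound from the proof of Lemma~\ref{lem:VA}. The only difference is cosmetic: you bound one factor of each product by its supremum directly, whereas the paper first applies AM--GM to get $|V_C|\le D_n n^{-1}\sum_i(\Delta_i-\Delta^M)^2$ (your stated alternative). Your Step 1 also spells out the lower bound on $\widehat f_{ns}\{\widehat\beta_{nq}(s)\}$, which the paper uses only implicitly.
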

\begin{proof}[Proof of Lemma~\ref{lem:VC}]
    Note that from AM-GM inequality,   
\[
\begin{aligned}
    |V_C|\le& \frac{1}{2n}\sum_{(i,j)\in\mathcal P_n}\left[(\Delta_i-\Delta^M)^2+(\Delta_j-\Delta^M)^2\right]\\
    \le&\frac{D_n}{n}\sum_{i\in N_n}(\Delta_i-\Delta^M)^2.\\
\end{aligned}
\]
We first show $\sup_{i\in N_n}|\Delta_i-\Delta^M|=O_p(1)$. Since
$|\Delta^M|
\le \frac1n\sum_{i\in N_n}|\Delta_i|
\le \sup_{i\in N_n}|\Delta_i|$,
it is enough to prove
$\sup_{i\in N_n}|\Delta_i|=O_p(1)$.
By definition,
\[
\sup_{i\in N_n}|\Delta_i|\le\sup_{i\in N_n}|\hat\psi_{i,q}(Y_i,g)-\psi_{i,q}(Y_i,g)|+\sup_{i\in N_n}|\hat\psi_{i,q}(Y_i,g')-\psi_{i,q}(Y_i,g')|.
\]
Following the notations and similar arguments as in 
the proof of Lemma~\ref{lem:VA}, 
\begin{equation}
\label{eq:diff_psihat_psi_new}
    \begin{aligned}
    &\sup_{i\in N_n}|\hat\psi_{i,q}(Y_i,g)-\psi_{i,q}(Y_i,g)|\\
    \le{}& \sup_{i\in N_n}|W_i(g)(\hat a_g-a_g)
    \{q-I(Y_i\le\hat\beta_{nq}(g))\}|\\
    &+\sup_{i\in N_n}|W_i(g)a_g
    \{I(Y_i\le\beta_{nq}(g))-I(Y_i\le\hat\beta_{nq}(g))\}|\\
    =& O(|\hat a_g-a_g|)+O_p(1),
    \end{aligned}
\end{equation}
where we used the uniform boundedness of \(W_i(g)\), \(a_g\), and the indicator functions.
Since $|\hat a_g-a_g|\le O_p(\epsilon_n+n^{-1/2})$ as shown in the derivation of 
Lemma~\ref{lem:VA}, we obtain that $\sup_{i\in N_n}|\hat\psi_{i,q}(Y_i,g)-\psi_{i,q}(Y_i,g)|\le O_p(1)$.
The same argument applies to $g'$. Hence
$$\sup_{i\in N_n}|\Delta_i|= O_p(1).$$

Consequently, we have
\[
V_C
\le O_p(1)\frac{D_n}{n}\sum_{i\in N_n}|\Delta_i-\Delta^M|
\le 
O_p(1)\frac{2D_n}{n}\sum_{i\in N_n}|\Delta_i|=O_p(D_n\|\Delta\|_{1,n}).
\]
It's derived in the proof of Lemma~\ref{lem:VA} that 
$\|\Delta\|_{1,n}=O_p\left(\epsilon_n+n^{-1/2}\right)$. 
Therefore,
\[
V_C=O_p\left[D_n\left(\epsilon_n+n^{-1/2}\right)\right]=o_p(1).
\]
\end{proof} 

Now return to the proof of Theorem~\ref{thm:var_est}. Let
\[
\mu=E\psi,\qquad
\xi=\psi-\mu,\qquad
P_n=I_n-n^{-1}\mathbf1\mathbf1^\top,
\qquad
H_n=(I\{(i,j)\in\mathcal P_n\})_{i,j\in N_n},
\]
and define
\[
m=P_n\mu,\qquad v=P_n\psi=m+P_n\xi,
\qquad \bar\xi=n^{-1}\mathbf1^\top\xi.
\]
Because $v_i=\psi_{i,q}-\psi_q^M$,
$\widetilde\sigma_n^2=n^{-1}v^\top H_nv$.  Moreover,
\[
\sigma_n^2=\frac1nE(\xi^\top H_n\xi),
\qquad
\operatorname{Bias}_n
=\frac1n m^\top H_nm
=\frac1n m^\top(P_nH_nP_n)m.
\]
Then we have
\begin{equation}
\label{eq:oracle_var_decomp}
\widetilde\sigma_n^2-\sigma_n^2-\operatorname{Bias}_n
=T_{1n}+T_{2n}+T_{3n},
\end{equation}
where
\[
T_{1n}=\frac2n m^\top H_nP_n\xi,
\quad
T_{2n}=\frac1n\{\xi^\top H_n\xi-E(\xi^\top H_n\xi)\},
\quad
T_{3n}=\frac1n\xi^\top(P_nH_nP_n-H_n)\xi.
\]

We first control $T_{2n}$, whose summands depend only on local scores. Let
\[
\boldsymbol S_n=\left\{(i,j,k,l)\in N_n^4:
(i,j),(k,l)\in\mathcal P_n,\
\bar\nu_n(\{i,j\})\cap\bar\nu_n(\{k,l\})\neq\emptyset\right\}.
\]
Pair products indexed by quadruples outside $\boldsymbol S_n$ are independent.
The scores are uniformly bounded, the number of ordered pairs in
$\mathcal P_n$ is $n(1+d_{av}^{(2)})$, and each pair has at most
$O\{(1+d_{mx}^{(2)})^2\}$ dependent pairs.  Therefore,
\[
ET_{2n}^2
\le C\frac{|\boldsymbol S_n|}{n^2}
=O\left\{
\frac{(1+d_{av}^{(2)})(1+d_{mx}^{(2)})^2}{n}
\right\}=o(1),
\]
so $T_{2n}=o_p(1)$.

Next let $\Gamma_n=E(\xi\xi^\top)$.  Its absolute row sums are bounded by
$C(1+d_{mx}^{(2)})$.  Since $m$ is uniformly bounded, writing
$r_i=\sum_j(H_n)_{ij}$ gives
\[
\|H_nm\|_2^2
\le C\sum_i r_i^2
\le Cn(1+d_{av}^{(2)})(1+d_{mx}^{(2)}).
\]
Because $P_n$ is an orthogonal projection,
\[
\begin{aligned}
ET_{1n}^2
&=\frac4{n^2}(P_nH_nm)^\top\Gamma_n(P_nH_nm)\\
&\le
C\frac{(1+d_{av}^{(2)})(1+d_{mx}^{(2)})^2}{n}
=o(1).
\end{aligned}
\]
Hence $T_{1n}=o_p(1)$.

Finally, write $r=H_n\mathbf1$ and
$L_n=\mathbf1^\top H_n\mathbf1=n(1+d_{av}^{(2)})$.  Direct expansion yields
\[
T_{3n}=-\frac2n\bar\xi\,r^\top\xi
+\frac{L_n}{n}\bar\xi^2.
\]
Uniform boundedness gives $|r^\top\xi|\le C L_n$.  Moreover, boundedness of
the scores implies
\[
\begin{split}
E(\bar\xi^2)
&=\frac1{n^2}\mathbf1^\top\Gamma_n\mathbf1\\
&\le C\frac{1+d_{av}^{(2)}}{n}.
\end{split}
\]
Thus
$\bar\xi=O_p[\{(1+d_{av}^{(2)})/n\}^{1/2}]$, and
\[
\begin{split}
|T_{3n}|
&\le C(1+d_{av}^{(2)})(|\bar\xi|+\bar\xi^2)\\
&=O_p\left[
\left\{\frac{(1+d_{av}^{(2)})^3}{n}\right\}^{1/2}
+\frac{(1+d_{av}^{(2)})^2}{n}
\right]
=o_p(1).
\end{split}
\]
The final equality follows directly from the first condition in
Assumption~\ref*{asp:degree_rate_var} and
$d_{av}^{(2)}\le d_{mx}^{(2)}$.

The plug-in lemmas and the second condition in
Assumption~\ref*{asp:degree_rate_var} give
$\widehat\sigma_n^2-\widetilde\sigma_n^2=o_p(1)$.  Combining this result with
\eqref{eq:oracle_var_decomp} proves
\[
\widehat\sigma_n^2
=\sigma_n^2+\operatorname{Bias}_n+o_p(1),
\qquad
\operatorname{Bias}_n
=\frac1n\sum_{(i,j)\in\mathcal P_n}
(E\psi_{i,q}-E\psi_q^M)(E\psi_{j,q}-E\psi_q^M).
\]
\hfill $\Box$

\subsection{Proof of Theorem~\ref*{thm:cons_var}}
\begin{proof}
Fix the exposure contrast $(g,g')$.  For $s\in\{g,g'\}$, recall the
oracle and plug-in exposure-specific scores
\[
\begin{aligned}
\psi_{i,q}(Y_i,s)
&=
\frac{I(G_i=s)}{\pi_i(s)f_s\{\beta_{nq}(s)\}}
\left[q-I\{Y_i\le\beta_{nq}(s)\}\right],\\
\widehat\psi_{i,q}(Y_i,s)
&=
\frac{I(G_i=s)}
{\pi_i(s)\widehat f_{ns}\{\widehat\beta_{nq}(s)\}}
\left[q-I\{Y_i\le\widehat\beta_{nq}(s)\}\right].
\end{aligned}
\]
Their contrast scores are
\[
\psi_{i,q}
=\psi_{i,q}(Y_i,g)-\psi_{i,q}(Y_i,g'),
\qquad
\widehat\psi_{i,q}
=\widehat\psi_{i,q}(Y_i,g)-\widehat\psi_{i,q}(Y_i,g').
\]
Let
\[
\psi=(\psi_{1,q},\ldots,\psi_{n,q})^\top,
\qquad
\widehat\psi=(\widehat\psi_{1,q},\ldots,
\widehat\psi_{n,q})^\top,
\qquad
\mu=E\psi=(E\psi_{1,q},\ldots,E\psi_{n,q})^\top.
\]
Also recall
\[
H_n=\bigl(I\{\ell_A(i,j)\le2\}\bigr)_{i,j\in N_n},
\qquad
P_n=I_n-\frac1n\mathbf1\mathbf1^\top,
\qquad
C_n=P_nH_nP_n,
\]
and let $K_n^{-}=-C_n^{-}\succeq0$ and
$C_n^{+}=C_n+K_n^{-}\succeq0$, where $C_n^{-}$ is the negative spectral
part of $C_n$. The conservative estimator is
\[
\widehat\sigma_{n,\mathrm{cons}}^2
=\frac1n\widehat v^\top C_n^{+}\widehat v,
\qquad
\widehat v=P_n\widehat\psi.
\]
For the oracle scores, write
\[
v=P_n\psi,
\qquad
m=P_n\mu,
\qquad
u=P_n(\psi-\mu),
\]
so that $v=m+u$. Finally, let
\[
\|x\|_{1,n}=\frac1n\sum_{i\in N_n}|x_i|.
\]
Let $\Delta=\widehat\psi-\psi$. The plug-in bounds in the proof of
Theorem~\ref*{thm:var_est}, in particular Lemmas~\ref{lem:VA} and
\ref{lem:VC}, imply
\begin{equation}
\label{eq:spectral_plugin_rate}
\|\widehat v-v\|_{1,n}
=\|P_n\Delta\|_{1,n}
\le2\|\Delta\|_{1,n}
=O_p(\epsilon_n+n^{-1/2}),
\qquad
\|\widehat v\|_\infty+\|v\|_\infty=O_p(1).
\end{equation}
Because $K_n^{-}$ is symmetric and
$\rho_n^{-}=\max_i\sum_j|(K_n^{-})_{ij}|$, it follows that
\begin{equation}
\label{eq:spectral_plugin_quadratic}
\begin{split}
&\frac1n
\left|
\widehat v^\top K_n^{-}\widehat v
-v^\top K_n^{-}v
\right|\\
&\qquad\le
\rho_n^{-}\|\widehat v-v\|_{1,n}
\left(\|\widehat v\|_\infty+\|v\|_\infty\right)
=O_p\{\rho_n^{-}(\epsilon_n+n^{-1/2})\}
=o_p(1).
\end{split}
\end{equation}

Recall that $C_n=P_nH_nP_n$, $C_n^{+}=C_n+K_n^{-}$, and
\[
\operatorname{Bias}_n=\frac1n m^\top C_nm.
\]
By Theorem~\ref*{thm:var_est}, (\ref*{eq:plug_in_cons_var}), and
\eqref{eq:spectral_plugin_quadratic},
\begin{equation}
\label{eq:spectral_cons_decomp}
\begin{split}
\widehat\sigma_{n,\mathrm{cons}}^2
={}&
\sigma_n^2
+\frac1n m^\top C_nm
+\frac1n v^\top K_n^{-}v
+o_p(1)\\
={}&
\sigma_n^2
+\frac1n m^\top C_n^{+}m
+\frac1n u^\top K_n^{-}u
+\frac2n m^\top K_n^{-}u
+o_p(1).
\end{split}
\end{equation}
The first two quadratic forms after $\sigma_n^2$ are nonnegative because
$C_n^{+}$ and $K_n^{-}$ are positive semidefinite.

It remains to control the cross term.  Since
$C_n\mathbf1=P_nH_nP_n\mathbf1=0$, the spectral definition of $K_n^{-}$
implies $K_n^{-}\mathbf1=0$.  Hence
$m^\top K_n^{-}u=m^\top K_n^{-}(\psi-\mu)$. Let
$a=K_n^{-}m$ and let $\Gamma_n$ be the covariance matrix of
$\psi-\mu$. Scores more than two steps apart are independent and the scores
are uniformly bounded, so every row of $\Gamma_n$ has absolute row sum at
most $C(1+d_{mx}^{(2)})$. Also,
$\|K_n^{-}\|_{\mathrm{op}}\le\rho_n^{-}$ and $\|m\|_2^2\le Cn$.
Consequently,
\begin{equation}
\label{eq:spectral_cross_bound}
\begin{split}
E\left(\frac1n m^\top K_n^{-}u\right)^2
&=
\frac1{n^2}a^\top\Gamma_na\\
&\le
C\frac{1+d_{mx}^{(2)}}{n^2}\|a\|_2^2\\
&\le
C(\rho_n^{-})^2\frac{1+d_{mx}^{(2)}}{n}
=o(1)
\end{split}
\end{equation}
by condition (\ref*{eq:cons_rate_condition}). Thus the cross term in
\eqref{eq:spectral_cons_decomp} is $o_p(1)$.  Define
\[
\mathcal V_n^{+}
=\frac1n m^\top C_n^{+}m+\frac1n u^\top K_n^{-}u\ge0.
\]
Then \eqref{eq:spectral_cons_decomp} can be written as
\[
\widehat\sigma_{n,\mathrm{cons}}^2
=\sigma_n^2+\mathcal V_n^{+}+r_n,
\qquad r_n=o_p(1),
\]
Therefore,
\[
\sigma_n^2
\le
\widehat\sigma_{n,\mathrm{cons}}^2+o_p(1).
\]

Since $\inf_n\sigma_n^2>0$, write
$c_\sigma=\inf_n\sigma_n^2$.  The preceding additive lower bound implies
that, for every fixed $\delta\in(0,1)$,
\[
\Pr\left\{
\widehat\sigma_{n,\mathrm{cons}}^2
<\sigma_n^2-c_\sigma\delta
\right\}
\le\Pr\{r_n<-c_\sigma\delta\}
\longrightarrow0.
\]
To make the coverage implication explicit, let
\[
T_n=\frac{\sqrt n\{\widehat\tau_{nq}(g,g')-
\tau_{nq}(g,g')\}}{\sigma_n}.
\]
Because
$\sigma_n^2-c_\sigma\delta\ge(1-\delta)\sigma_n^2$, we have
\[
\begin{aligned}
&\Pr\{\tau_{nq}(g,g')\in
\mathrm{CI}_{1-\alpha}^{\mathrm{cons}}(g,g')\}
&\ge
\Pr\{|T_n|\le z_{1-\alpha/2}\sqrt{1-\delta}\}
-\Pr\{\widehat\sigma_{n,\mathrm{cons}}^2
<\sigma_n^2-c_\sigma\delta\}.
\end{aligned}
\]
The second probability tends to zero, while
$T_n\Rightarrow N(0,1)$ by whichever of Theorems~\ref*{thm:asy_nor}
and~\ref*{thm:asy-nor-janson} supplies the assumed CLT. Taking the lower limit
and then letting $\delta\downarrow0$ gives coverage at least $1-\alpha$.
\end{proof}

\section{Additional Simulation Details and Results}
\label{sec:supplementary_simulations}

\subsection{Additional Results on Random Graph under Linear Design}
\label{sec:random_graph_linear_details}

This subsection supplements the random graph linear simulation reported in
Table~\ref*{tab:random_graph_linear} of the main paper.  Table~\ref{tab:random_graph_linear_diagnostics}
reports bias, empirical standard deviation, and mean standard error for the
quantile effects discussed in the main paper, while
Table~\ref{tab:random_graph_linear_spillover} reports interval coverage and
width for the spillover effect.  Specifically, we consider
\[
 \mathrm{QSE}_{0}(q)
 :=\tau_{nq}\bigl((0,3/4),(0,1/4)\bigr),
\]
which compares neighborhoods with three and one treated members while holding
own treatment at zero.  Under the linear outcome model in the main paper,
$\mathrm{QSE}_{0}(q)=25(3/4-1/4)=12.5$ at every reported quantile.

\begin{table}[ht]
\centering
\caption{Bias, empirical standard deviation, and mean standard error for the
three proposed estimands in the random-graph linear design.}
\label{tab:random_graph_linear_diagnostics}
\setlength{\tabcolsep}{6pt}
\renewcommand{\arraystretch}{1.02}
\small
\begin{tabular}{lllrrrr}
\toprule
Estimand & $q$ & Distribution & Target & Bias & Emp. SD & Mean SE \\
\midrule
\multirow{9}{*}{$\mathrm{QDE}_{1/2}$} & \multirow{3}{*}{.2}
& Gaussian & 5.000 & -0.005 & 0.375 & 0.396 \\
& & Student-t & 5.000 & -0.016 & 0.384 & 0.393 \\
& & Slash & 5.000 & -0.025 & 0.379 & 0.389 \\
\addlinespace
& \multirow{3}{*}{.5} & Gaussian & 5.000 & 0.003 & 0.345 & 0.356 \\
& & Student-t & 5.000 & -0.017 & 0.333 & 0.354 \\
& & Slash & 5.000 & -0.005 & 0.345 & 0.349 \\
\addlinespace
& \multirow{3}{*}{.8} & Gaussian & 5.000 & -0.011 & 0.398 & 0.389 \\
& & Student-t & 5.000 & -0.006 & 0.369 & 0.390 \\
& & Slash & 5.000 & 0.011 & 0.393 & 0.384 \\
\addlinespace[3pt]
\multirow{9}{*}{$\mathrm{QEE}(1,0)$} & \multirow{3}{*}{.2}
& Gaussian & 5.000 & -0.086 & 0.691 & 0.761 \\
& & Student-t & 5.000 & -0.024 & 0.729 & 0.760 \\
& & Slash & 5.000 & -0.016 & 0.719 & 0.762 \\
\addlinespace
& \multirow{3}{*}{.5} & Gaussian & 5.000 & -0.091 & 0.635 & 0.730 \\
& & Student-t & 5.000 & -0.017 & 0.675 & 0.723 \\
& & Slash & 5.000 & 0.013 & 0.652 & 0.726 \\
\addlinespace
& \multirow{3}{*}{.8} & Gaussian & 5.000 & -0.087 & 0.711 & 0.757 \\
& & Student-t & 5.000 & 0.025 & 0.749 & 0.755 \\
& & Slash & 5.000 & -0.003 & 0.715 & 0.757 \\
\addlinespace[3pt]
\multirow{9}{*}{$\mathrm{QSE}_{0}$} & \multirow{3}{*}{.2}
& Gaussian & 12.500 & 0.018 & 0.511 & 0.486 \\
& & Student-t & 12.500 & 0.026 & 0.477 & 0.484 \\
& & Slash & 12.500 & 0.001 & 0.480 & 0.477 \\
\addlinespace
& \multirow{3}{*}{.5} & Gaussian & 12.500 & -0.020 & 0.394 & 0.436 \\
& & Student-t & 12.500 & 0.026 & 0.412 & 0.433 \\
& & Slash & 12.500 & -0.015 & 0.417 & 0.433 \\
\addlinespace
& \multirow{3}{*}{.8} & Gaussian & 12.500 & -0.020 & 0.463 & 0.476 \\
& & Student-t & 12.500 & 0.037 & 0.488 & 0.477 \\
& & Slash & 12.500 & -0.024 & 0.479 & 0.468 \\
\addlinespace[2pt]
\bottomrule
\end{tabular}
\end{table}

\begin{table}[ht]
\centering
\caption{Coverage and average interval width for the quantile spillover effect
in the random-graph linear design.}
\label{tab:random_graph_linear_spillover}
\setlength{\tabcolsep}{6pt}
\renewcommand{\arraystretch}{1.02}
\small
\begin{tabular}{llrrrr}
\toprule
& & \multicolumn{4}{c}{$\mathrm{QSE}_{0}$} \\
\cmidrule(lr){3-6}
& & \multicolumn{2}{c}{Plug-in} & \multicolumn{2}{c}{Conservative} \\
\cmidrule(lr){3-4} \cmidrule(lr){5-6}
Distribution & $q$ & Coverage & Width & Coverage & Width \\
\midrule
\multirow{3}{*}{Gaussian} & .2 & 88.2 & 1.599 & 96.6 & 2.331 \\
& .5 & 91.8 & 1.433 & 99.2 & 2.091 \\
& .8 & 90.0 & 1.567 & 99.2 & 2.291 \\
\addlinespace
\multirow{3}{*}{Student-t} & .2 & 89.8 & 1.592 & 97.8 & 2.325 \\
& .5 & 90.0 & 1.425 & 98.0 & 2.084 \\
& .8 & 89.8 & 1.570 & 98.8 & 2.289 \\
\addlinespace
\multirow{3}{*}{Slash} & .2 & 90.0 & 1.570 & 99.2 & 2.287 \\
& .5 & 91.2 & 1.424 & 98.8 & 2.071 \\
& .8 & 87.8 & 1.541 & 98.2 & 2.251 \\
\addlinespace[2pt]
\bottomrule
\end{tabular}
\end{table}

For $\mathrm{QSE}_{0}$, the largest absolute bias is $0.037$, and plug-in
coverage ranges from $87.8\%$ to $91.8\%$.  The conservative intervals raise
coverage to $96.6\%$--$99.2\%$, providing the intended protection at the cost
of greater width.

\subsection{Variance Comparison on a Non-community Network}
\label{sec:noncommunity_variance_comparison}

We supplement the main simulation with the variance comparison retained from
the preceding version of the paper.  For each even degree
$d\in\{2,4,6,8\}$, the $n=1000$ units form a connected ring lattice in which
unit $i$ is adjacent to the $d/2$ nearest units on either side.  This is one
connected network without an imposed community partition.  Treatments are
independently assigned as $W_i\sim\operatorname{Bernoulli}(0.08)$, and the
binary exposure is
\[
 G_i=I\!\left(W_i+\sum_{j\ne i}A_{ij}W_j>0\right).
\]
Consequently, $\Pr(G_i=0)=0.92^{d+1}$ and
$\Pr(G_i=1)=1-0.92^{d+1}$.

To generate a locally correlated baseline, start with independent standard
normal draws $B_i^{(0)}$ on the ring and apply three smoothing steps,
\[
 B_i^{(r)}=\tfrac12B_i^{(r-1)}
 +\tfrac14B_{i-1}^{(r-1)}+\tfrac14B_{i+1}^{(r-1)},
 \qquad r=1,2,3,
\]
with indices interpreted modulo $n$.  The result is standardized to empirical
mean zero and variance one and denoted by $B_i$.  Outcomes follow
\[
 Y_i(\bm W)=B_i+3G_i,
\]
so the QEE comparing $G_i=1$ with $G_i=0$ equals three at every quantile.  We
use 500 replications and $q\in\{0.1,0.2,\ldots,0.9\}$.  The QEE plug-in and
conservative estimates use the Epanechnikov kernel as in the main paper.  The empirical QEE curve is the Monte Carlo
variance of the QEE estimation error.  The QTE curve is the conventional
i.i.d. variance estimate accompanying the marginal treatment-based estimator following the method in~\citet{firpo2007efficient}.

\begin{figure}[ht]
\centering
\includegraphics[width=0.485\textwidth]{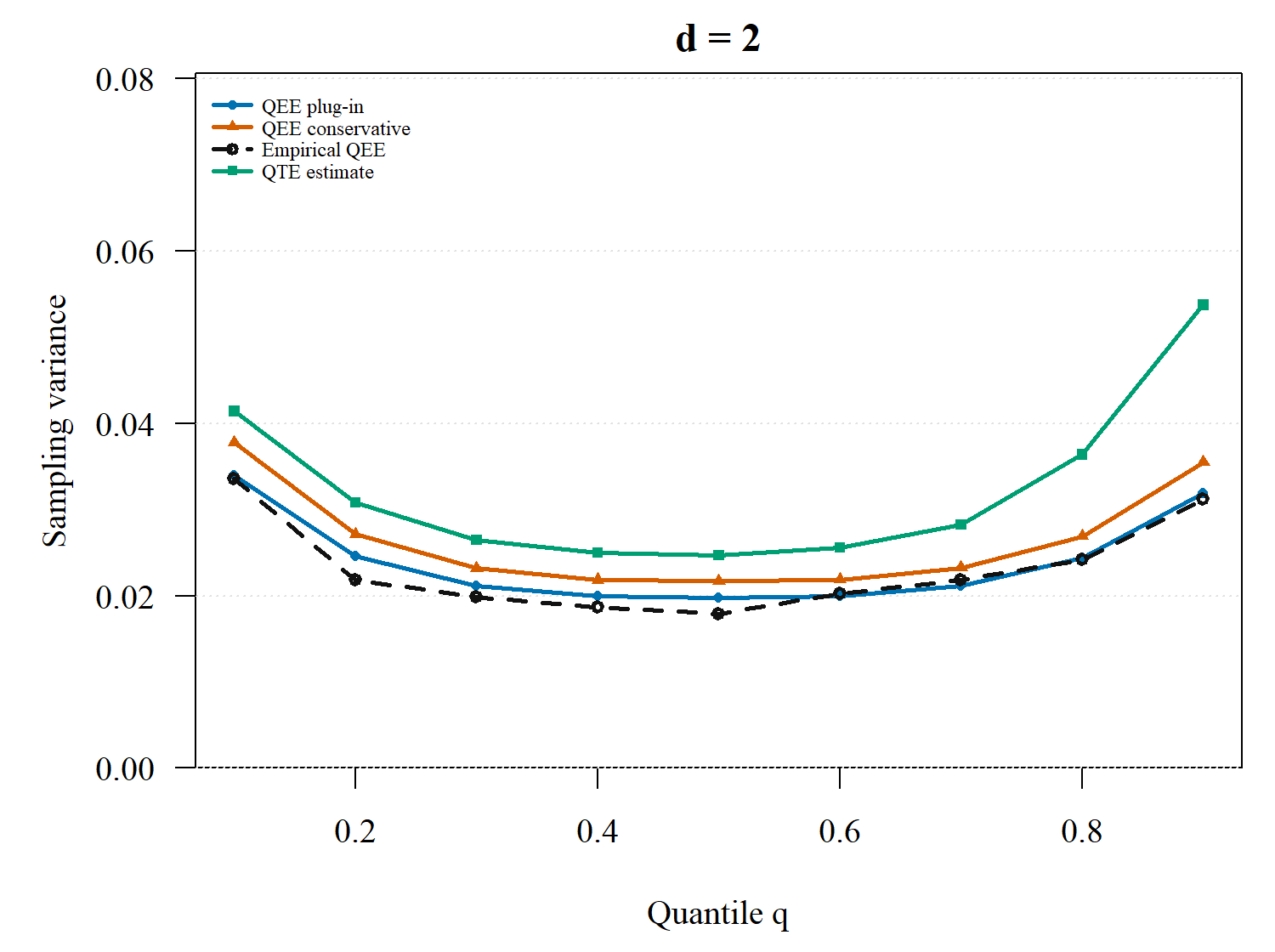}\hfill
\includegraphics[width=0.485\textwidth]{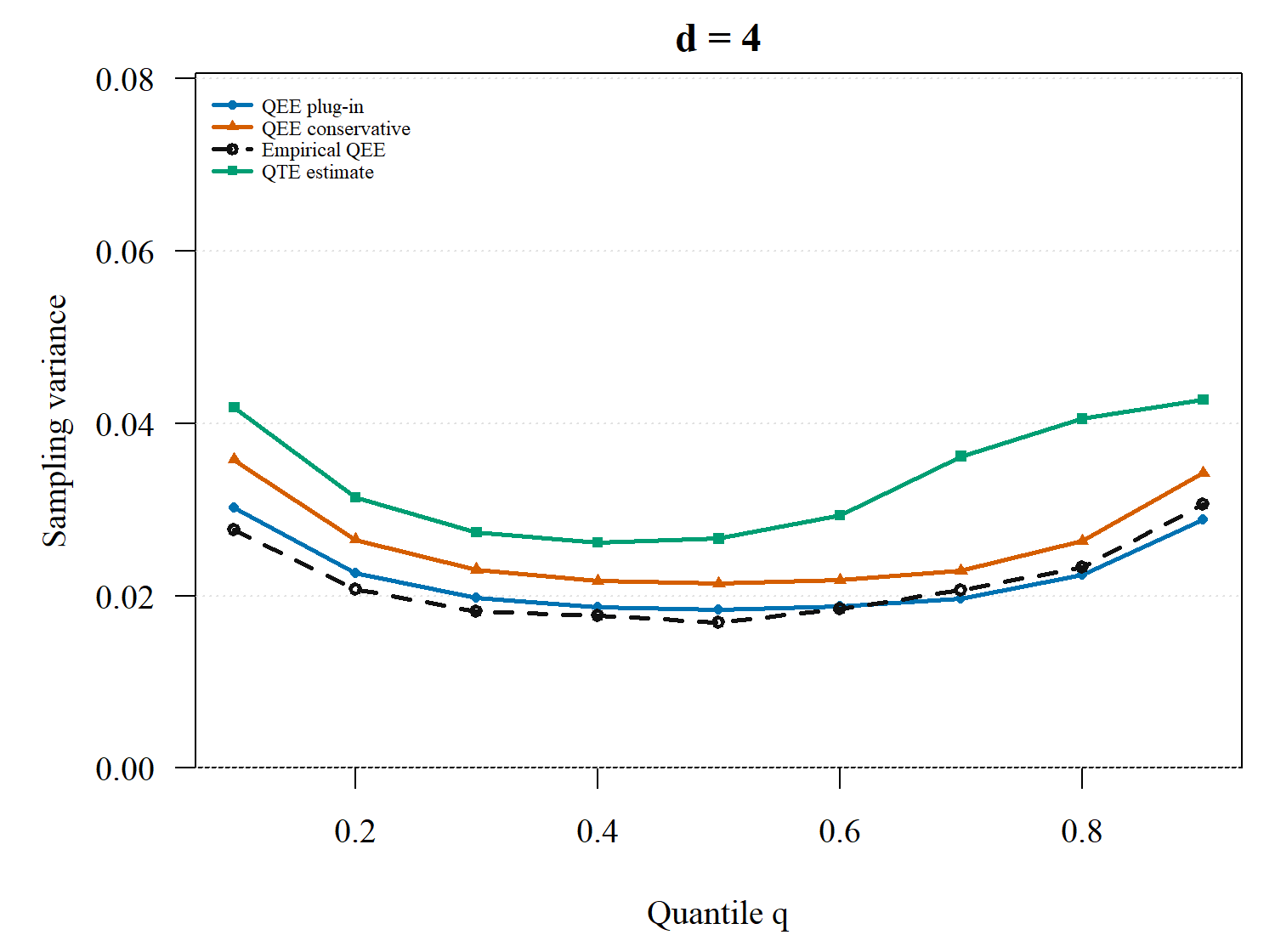}\\[2pt]
\includegraphics[width=0.485\textwidth]{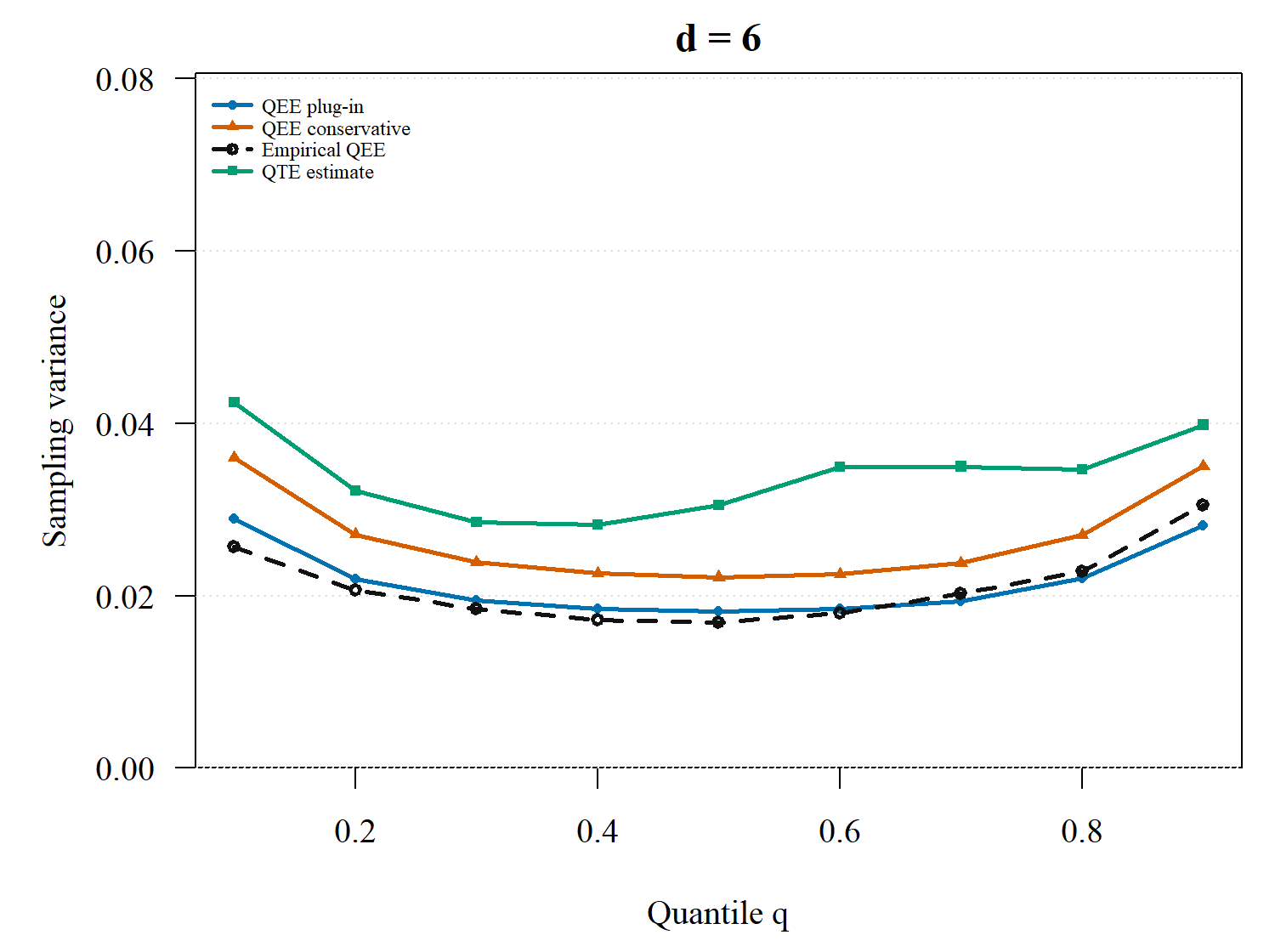}\hfill
\includegraphics[width=0.485\textwidth]{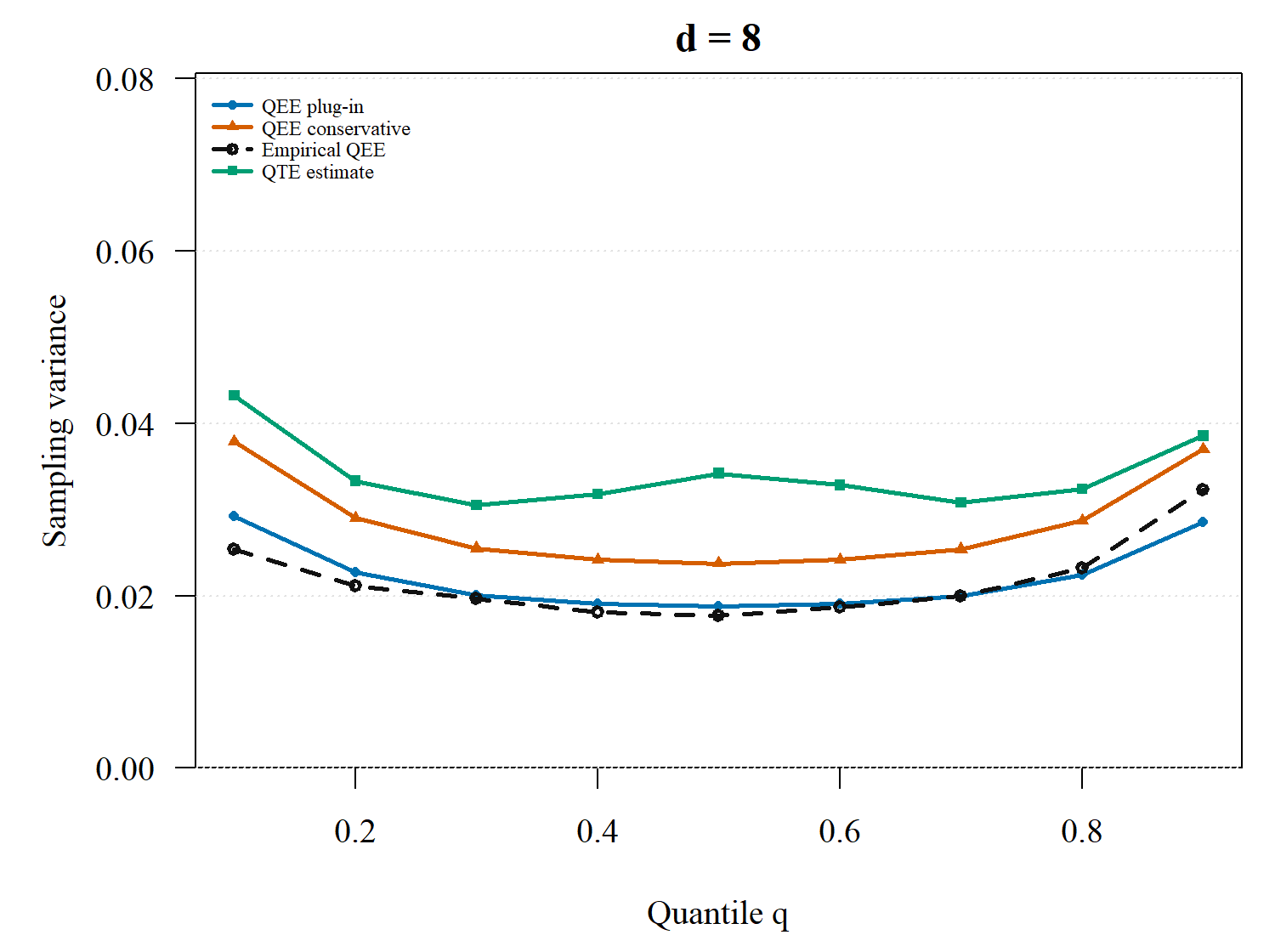}
\caption{Variance estimates over $q\in\{0.1,\ldots,0.9\}$, separately for
degrees $d=2,4,6,$ and $8$.  The empirical curve refers only to QEE.}
\label{fig:noncommunity_variance_quantiles_full}
\end{figure}

\begin{figure}[ht]
\centering
\includegraphics[width=0.485\textwidth]{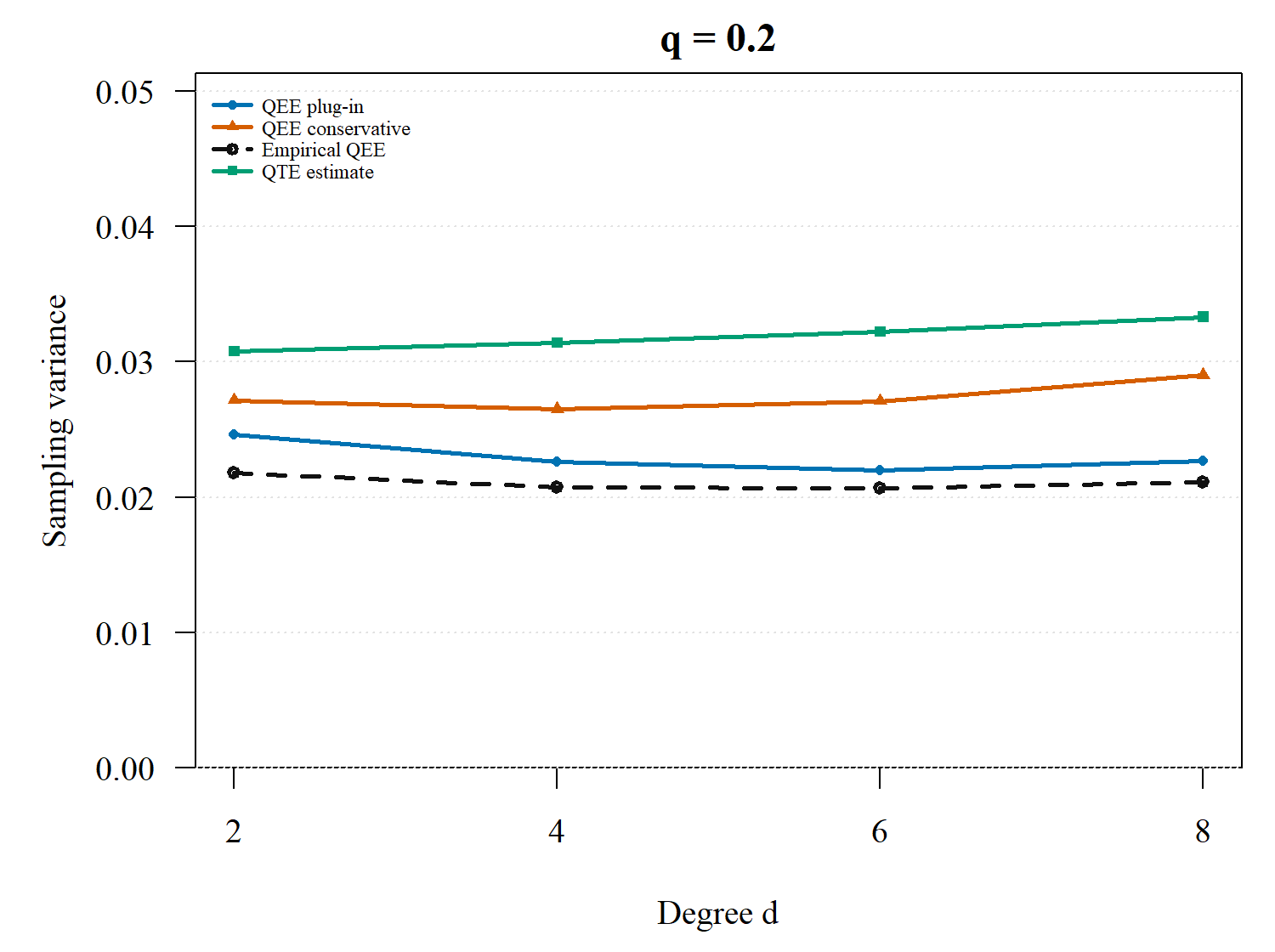}\hfill
\includegraphics[width=0.485\textwidth]{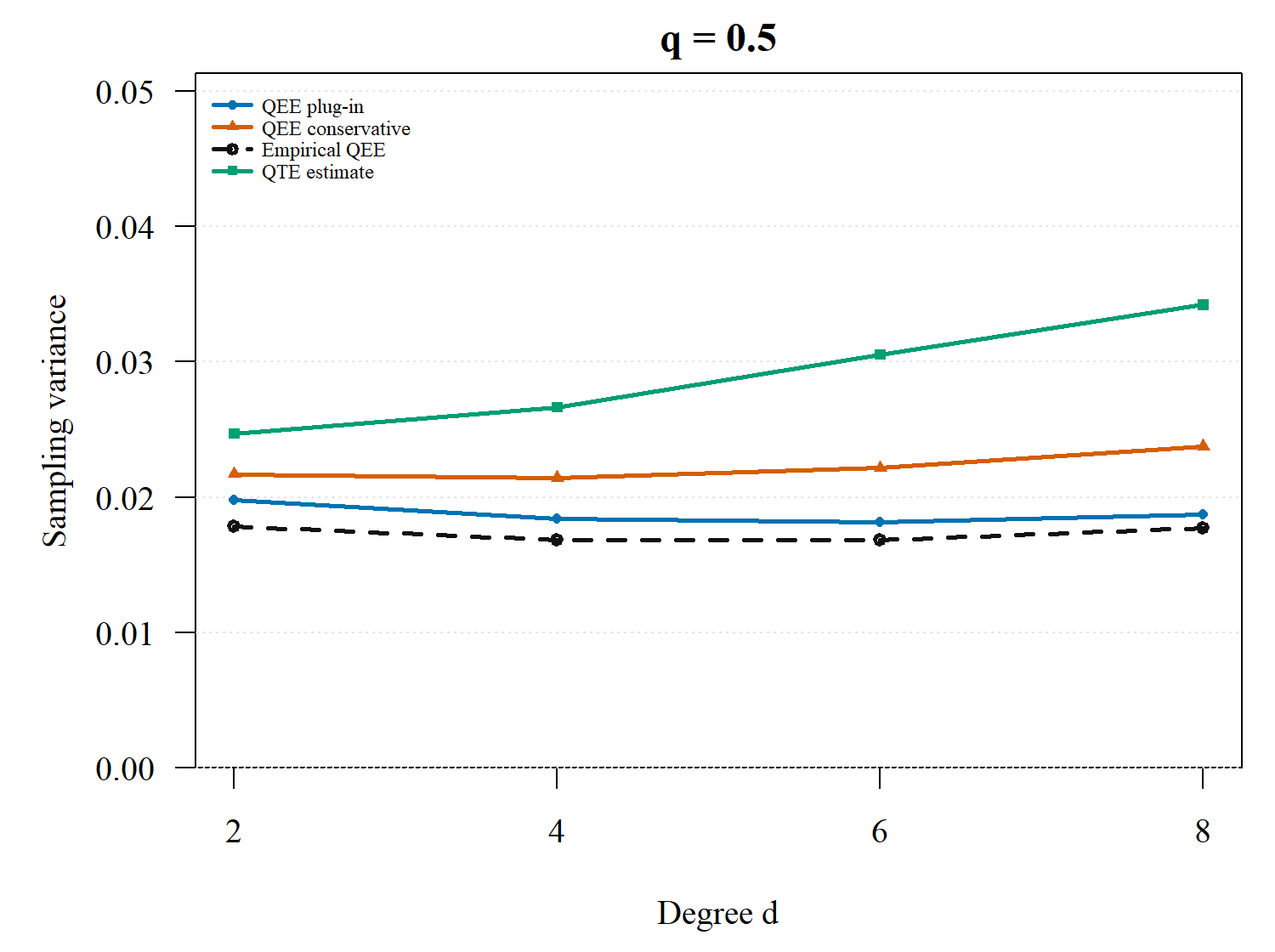}
\caption{Variance estimates over degrees $d\in\{2,4,6,8\}$ at
$q=0.2$ (left) and $q=0.5$ (right).  The empirical curve refers only to QEE.}
\label{fig:noncommunity_variance_degrees_full}
\end{figure}

Figures~\ref{fig:noncommunity_variance_quantiles_full}
and~\ref{fig:noncommunity_variance_degrees_full} show that the plug-in estimate
generally follows the empirical QEE variance over both varying quantiles and degrees.  The QTE variance estimates remain above the conservative estimate and the plug-in estimate, suggesting the necessity of taking account of neighboring correlation when interference exists.

\subsection{Nonlinear Interaction in a Community Network}
\label{sec:nonlinear_community}

We next consider a nonlinear interaction model on a community network.  The
$n=1000$ units are partitioned into 100 communities of size ten.  Within each
community, the units form a degree-four ring, with every unit connected to its
two nearest neighbors on either side, and different communities are disconnected.
The randomly labeled network is generated once and held fixed across the 500
replications. We independently draw $X_i\sim N(0,I_6)$ and assign treatments
independently as $W_i\sim\operatorname{Bernoulli}(1/2)$.
Consider the exposure mapping 
\[
 G_{i1}=W_i,\qquad
 G_{i2}=\frac14\sum_{j\ne i}A_{ij}W_j.
\]
Outcomes are generated from
\[
 Y_i(\bm W)
 =X_i^\top\boldsymbol 1_6+5G_{i1}+15G_{i2}
 +12G_{i1}G_{i2}^2+\epsilon_i,
 \qquad \epsilon_i\sim N(0,1).
\]
The quadratic interaction makes the own-treatment effect vary with
neighborhood exposure, and therefore QEE can identify the effect that can not be interpreted by QTE. However, to make a fair comparison, we focus on the exposure mapping $G_i=W_i$ and study
\[
 \mathrm{QEE}(q)=\tau_{nq}(1,0).
\]
Similar as before, this effect compares own treatment one with zero after integrating over the
neighbors' assignments. Let $Y_i(w,k)$ be the potential outcome when unit $i$ has
own treatment $w$ and $k$ of its four neighbors are treated, and define
\[
 F_{nw}(y)=\frac1n\sum_{i=1}^n\sum_{k=0}^4
 p_{ik}I\{Y_i(w,k)\le y\}.\]
 The true values of both QTE and QEE are computed by\[
 \tau_{nq}(1,0)=F_{n1}^{-1}(q)-F_{n0}^{-1}(q),
\]
where
\[
 p_{ik}=\Pr\!\left(\sum_{j\in\nu_n(i)}W_j=k\right)
 =\binom4k2^{-4}.
\]
This is because under independent assignment, the exposure-specific mixture distribution for
$G_i=w$ is exactly $F_{nw}$. Consequently, the QEE and QTE procedures in this
experiment estimate the same own-treatment quantile contrast, and differ only in how they estimate it and account for network dependence. The QEE procedure uses the exact propensity
$\pi_i(1)=\pi_i(0)=1/2$, the Epanechnikov density estimator, and the plug-in
and conservative variance estimators in (\ref*{eq:var_est}) and (\ref*{eq:plug_in_cons_var}) of the main paper.  The QTE procedure uses
the semiparametric IPW estimator and conventional i.i.d. variance formula of
\citet{firpo2007efficient}.

\begin{table}[ht]
\centering
\caption{Coverage and average interval width for the common own-treatment
quantile effect in the nonlinear community-network design.}
\label{tab:nonlinear_community}
\setlength{\tabcolsep}{3.5pt}
\renewcommand{\arraystretch}{1.04}
\small
\begin{tabular}{lrrrrrr}
\toprule
& \multicolumn{4}{c}{$\mathrm{QEE}(1,0)$}
& \multicolumn{2}{c}{QTE} \\
\cmidrule(lr){2-5} \cmidrule(l){6-7}
& \multicolumn{2}{c}{Plug-in} & \multicolumn{2}{c}{Conservative}
& \multicolumn{2}{c}{\citet{firpo2007efficient}} \\
\cmidrule(lr){2-3} \cmidrule(lr){4-5} \cmidrule(l){6-7}
$q$ & Coverage & Width & Coverage & Width & Coverage & Width \\
\midrule
.2 & 92.4 & 1.995 & 94.8 & 2.238 & 83.4 & 1.611 \\
.5 & 91.8 & 2.054 & 94.6 & 2.261 & 82.0 & 1.556 \\
.8 & 91.8 & 2.579 & 94.2 & 2.804 & 81.2 & 1.883 \\
\addlinespace[2pt]
\bottomrule
\end{tabular}
\end{table}

Table~\ref{tab:nonlinear_community} compares the proposed QEE based on the exposure mapping
$G_i=W_i$ and the conventional marginal QTE under the same true value at nominal $90\%$ coverage. The QTE intervals undercover at
all three quantiles. In contrast, QEE plug-in coverage ranges from $91.8\%$ to
$92.4\%$, and the conservative correction raises it to
$94.2\%$ to $94.8\%$. Hence, for this simulation under nonlinear model in community network, the network-aware
QEE intervals attain coverage substantially closer to the nominal level.\end{document}